\newcommand{\defparprob}[4]{
  \vspace{1mm}
\noindent\fbox{
  \begin{minipage}{0.96\textwidth}
  \begin{tabular*}{\textwidth}{@{\extracolsep{\fill}}lr} #1  & {\bf{Parameter:}} #3 \\ \end{tabular*}
  {\bf{Input:}} #2  \\
  {\bf{Question:}} #4
  \end{minipage}
  }
  \vspace{1mm}
}
\newtheorem{theorem}{Theorem}
\newtheorem{observation}{Observation}
\newtheorem{proposition}{Proposition}
\newtheorem{lemma}{Lemma}
\newtheorem{fact}{Fact}
\newtheorem{definition}{Definition}
\newcommand{\tw}{{\mathbf{tw}}}
\newcommand{\tHmf}{$H$-topological-minor-free}
\newcommand{\Hmf}{$H$-minor-free}
\newcommand{\tDS}{{\texttt{\sc DS}}}
\newcommand{\tCDS}{{\texttt{\sc CDS}}}
\newcommand{\cO}{\mathcal{O}}
\newcommand{\h}[1]{\end{document}}
\newcommand{\mar}[1]{#1}
\newcommand{\term}[1]{#1}
\begin{document}

\title{Kernels for (connected) Dominating Set on  graphs with \\ Excluded Topological subgraphs\thanks{Preliminary versions of this paper appeared in SODA 2012 and STACS 2013. }}
\author{Fedor V. Fomin\thanks{University of Bergen, Norway, \texttt{fomin@ii.uib.no}.  The research was supported by the European Research Council through ERC Grant Agreement n. 267959.}
\and 
Daniel Lokshtanov\thanks{University of Bergen, Norway, \texttt{daniello@ii.uib.no}. The research was supported by the Bergen Research Foundation and the University of Bergen through project ``BeHard'.}
\and
Saket Saurabh\thanks{The Institute of Mathematical Sciences, CIT Campus, Chennai, India, \texttt{saket@imsc.res.in}. 
The research was supported by the European Research Council through Starting Grant 306992 ``Parameterized Approximation'.}
\and 
Dimitrios M. Thilikos\thanks{Department of Mathematics, National and Kapodistrian University of Athens, Athens, Greece,  \texttt{sedthilk@thilikos.info}. Co-financed by the E.U. (European Social Fund - ESF) and Greek national funds through the Operational Program ``Education and Lifelong Learning'' of the National Strategic Reference Framework (NSRF) - Research Funding Program: ``Thales. Investing in knowledge society through the European Social Fund''.}~\thanks{AlGCo project-team, CNRS, LIRMM, Montpellier, France}}
\date{}
\maketitle
\thispagestyle{empty}

\begin{abstract}
\vspace{1mm}

\noindent We give the first linear kernels for the  {\sc Dominating Set}  and  {\sc Connected Dominating Set}  problems on graphs excluding a fixed graph $H$ as a topological minor.  In other words, we prove the existence of polynomial time algorithms that, for a  given  \tHmf\,  graph $G$ and  a positive integer $k$, 
output an \tHmf\,  graph $G'$   on  $\cO(k)$ vertices such that $G$ has a (connected) dominating set of size $k$ if and only if $G'$ has one.

Our results extend the known classes of graphs on which the {\sc Dominating Set}  and  {\sc Connected Dominating Set}  problems admit linear kernels.  Prior to our work, it was known that these problems admit linear kernels on graphs excluding a fixed  apex graph $H$ as a minor.   Moreover, for   {\sc Dominating Set},  a kernel of size  $k^{c(H)}$, where $c(H)$ is a  constant depending  on the size of $H$, follows from a more general result on the kernelization of {\sc Dominating Set} on graphs of bounded degeneracy.  Alon and Gutner explicitly asked  whether one can obtain a linear kernel for {\sc Dominating Set} on \Hmf\,  graphs. We answer this question in the affirmative and in fact prove a more general result. 
For  {\sc Connected Dominating Set}  no  polynomial kernel  even on \Hmf\,  graphs was known prior to our work.  On the negative side, it is known that {\sc Connected Dominating Set}  on $2$-degenerated graphs does not admit a polynomial kernel unless \textsf{coNP} $\subseteq$  \textsf{NP/poly}.

Our kernelization algorithm is based on a non-trivial combination of the following  ingredients
 \begin{itemize} 
\item The structural theorem of Grohe and Marx [STOC 2012] for graphs excluding a fixed graph $H$ as a topological minor;
\item A novel notion of protrusions, different than the one defined in [FOCS 2009];
\item Our results are based on  a  generic reduction rule that produces an equivalent instance (in case the input graph is \Hmf)  of the problem, with treewidth 
$\cO(\sqrt{k})$. 
The application of this rule  in a divide-and-conquer fashion, together with the new notion of protrusions,  gives us the linear kernels. 

\end{itemize}
A protrusion in a graph [FOCS 2009] is a subgraph of constant treewidth which is separated from the rest of the graph by at most a constant number of vertices. In our variant of protrusions, instead of stipulating that the subgraph be of constant 
{\em treewidth}, we ask that it contains a {\em constant number of vertices from a solution}. We believe that this new take on protrusions would be useful for other graph problems and in different algorithmic settings. 
\end{abstract}

\noindent{\bf Keywords:}{ Kernelization, Connected Dominating Set, topological minor free graphs.}

\maketitle


\section{Introduction}
 {\em Kernelization} is well established subarea of parameterized complexity. 
 A parameterized problem is said to admit a {\em polynomial kernel} if there is a polynomial time algorithm (the degree of polynomial being independent of the parameter $k$), called a {\em kernelization} algorithm, that reduces the input instance down to an instance with size bounded by a polynomial $p(k)$ in $k$, while preserving the answer. This reduced instance is called a {\em $p(k)$ kernel} for the problem. If the size of the kernel is $\cO(k)$, then we call it a {\em linear kernel} (for a more formal definition, see Section~\ref{sec:defs_and_nots}). Kernelization has turned out to  be an interesting computational approach both from practical and theoretical perspectives. There are many real-world applications where even very simple preprocessing can be surprisingly effective, leading to significant reductions  in the size of the input. Kernelization is a natural tool not only for  measuring the quality of preprocessing rules proposed for  specific problems but also for designing new powerful preprocessing algorithms. From the theoretical perspective, kernelization provides a deep insight into the hierarchy of parameterized problems in  {\sf FPT}, the most interesting class of parameterized problems.  There are also interesting links  
  between lower bounds on the sizes of kernels and classical computational complexity  \cite{BDFH08,Dell:2010sh,DruckerA12}.

The {\sc Dominating Set} (\tDS) problem
 together with its numerous variants, is one of the most  classical and well-studied problems in algorithms and combinatorics~\cite{HaynesHS98}. 
  In the {\sc Dominating Set} (\tDS) problem,
we are given a graph $G$ and a non-negative integer $k$, and the question is 
whether $G$ contains a set of $k$ vertices whose closed neighborhood contains all the vertices of $G$.  
The connected variant of the problem,  {\sc Connected Dominating Set} (\tCDS) asks, given a graph $G$ and a non-negative 
integer $k$, whether $G$ contains a dominating set $D$ of at most $k$ vertices such that 
for every connected component $C$ of $G$, we have that $G[V(C)\cap D]$ is connected.  
This definition of \tCDS\ differs slightly from the established one where one just  demands that  
the subgraph induced by the dominating set  be connected. Our definition generalizes the established 
one  to  include disconnected graphs. 
A  considerable part of the algorithmic  study of these {\sf NP}-complete  problems has been focused on the  design of parameterized and kernelization algorithms. 
In general,  \tDS \,  is {\sf W}[2]-complete 
and therefore it cannot be solved by a parameterized algorithm, 
unless an unexpected collapse occurs in the 
Parameterized Complexity hierarchy (see~\cite{DowneyF98,FlumGrohebook,Niedermeierbook06}) and thus also does not admit a kernel.  
However, there are  interesting graph classes where  {\em fixed-parameter tractable} {\sf  (FPT)}  algorithms exist
for the \tDS \, problem. The project of  widening the families of graph classes, on which such algorithms exist, inspired a multitude of  ideas that made \tDS \,  the test bed for some of the most cutting-edge techniques of parameterized algorithm design. For example, the initial study of parameterized subexponential algorithms for \tDS \, on planar graphs \cite{AlberBFKN02,DemaineFHT05talg,FominT06} resulted in the creation of bidimensionality theory
characterizing a broad range of graph problems  that admit efficient approximation schemes, fixed-parameter algorithms or kernels on a broad range of graphs \cite{DemaineFHT05sub,DemaineH07-CJ,DornFLRS10,FominLRS10,F.V.Fomin:2010oq,FominLS12}.

  One of  the  first  results  on linear kernels is the celebrated work of Alber et al.  on  \tDS \,  on planar graphs \cite{AlberFN04}. This work  augmented significantly the interest in  proving polynomial (or preferably linear) 
 kernels for other parameterized problems.   
  The result of Alber et al.~\cite{AlberFN04}, see also  \cite{ChenFKX07}, has been extended to much more general graph classes like graphs of bounded genus \cite{H.Bodlaender:2009ng} and apex-minor-free graphs \cite{F.V.Fomin:2010oq}.
An important step in this direction was made by  Alon and Gutner \cite{AG08TechReport,Gutner09}  who obtained a kernel of size $\cO(k^{h})$ for  \tDS \,  on \Hmf\,  and \tHmf\,  graphs, where the constant $h$ depends on the excluded graph $H$. Later, Philip et al.~\cite{PhilipRS09} obtained a kernel of size $\cO(k^{h})$ on  $K_{i,j}$-free and $d$-degenerate graphs, where $h$ depends on $i,j$ and $d$ respectively.  In particular, for $d$-degenerate graphs, a subclass of $K_{i,j}$-free graphs,  the algorithm of   Philip et al.~\cite{PhilipRS09} produces a kernel of size
 $\cO(k^{d^2})$. Similarly, the sizes of   the kernels in~\cite{AG08TechReport,Gutner09,PhilipRS09} are bounded by  polynomials in $k$ with degrees depending on the size of the excluded minor $H$. 
  Alon and Gutner \cite{AG08TechReport} mentioned as a  challenging question  whether one can  characterize the families of graphs for which the dominating set problem admits a linear kernel, i.e. a kernel  of size $f(h)\cdot k $, where the function $f$ depends {\em exclusively} on the 
graph family. 
In this direction, there are already results for more restricted graph classes.
According to
the meta-algorithmic results on kernels introduced in~\cite{H.Bodlaender:2009ng},  \tDS \,  has a kernel 
of size $f(g)\cdot k$ on graphs of genus $g$. An alternative meta-algorithmic 
framework, based on bidimensionality theory \cite{DemaineFHT05sub}, was introduced in~\cite{F.V.Fomin:2010oq}, implying the existence of a kernel of size $f(H)\cdot k$ for \tDS \, on graphs excluding an { apex\footnote{An {\em apex} graph is a graph that can be made planar by the removal of a single vertex.}} graph $H$ as a minor. While apex-minor-free graphs form much more general class of graphs than  graphs of bounded genus, \Hmf\,  graphs  and \tHmf\, graphs form much larger classes than apex-minor-free graphs. For example, the class of graphs excluding $H=K_6$, the complete  graph on $6$ vertices, as a minor, contains all apex graphs. Alon and Gutner in ~\cite{AG08TechReport} and Gutner  in~\cite{Gutner09} posed as an open problem  
whether one can obtain a linear kernel for \tDS \,  on \Hmf\,  graphs.
Prior to our work, the only result on linear kernels for \tDS \, on graphs excluding a fixed graph $H$ as a topological minor, was the result of  
 Alon and Gutner  in~\cite{AG08TechReport}  for the  special case where $H=K_{3,h}$.
See Fig.~\ref{fig:graph_classes} for the relationship between these classes.

\begin{figure}[t]
\begin{center}
\includegraphics[scale=0.287]{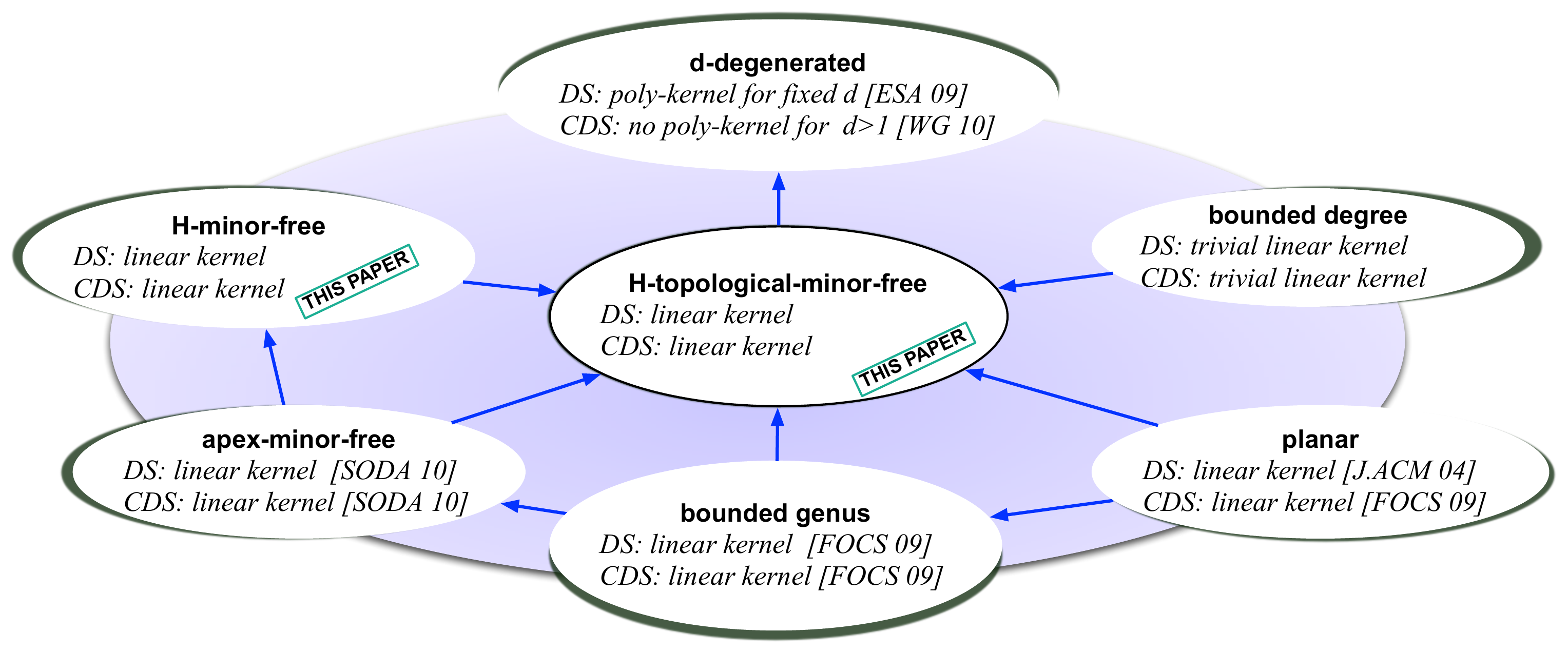}
\caption{Kernels for DS and CDS on classes of sparse graphs. Arrows represent inclusions of classes.
In the diagram, [J.ACM 04] refers to the paper of Albers et al. \cite{AlberFN04}, [FOCS 09]   to the paper of Bodlaender et al.~\cite{H.Bodlaender:2009ng},
[SODA 10] and [SODA 12] to the papers of Fomin et al. \cite{F.V.Fomin:2010oq,F.V.Fomin:2012},  [ESA 09] to the paper of Philip et al. \cite{PhilipRS09}, and  [WG 10] to the paper of Cygan et al. \cite{Cygan:2010bv}. }
\label{fig:graph_classes}
\end{center}
\end{figure}

 It is tempting to conjecture that similar improvements on kernel sizes are possible for more general graph classes like $d$-degenerate graphs. For example, for graphs of bounded vertex degree, a subclass of $d$-degenerate graphs, $\tDS$ has a trivial linear kernel. Unfortunately,  for $d$-degenerate graphs the existence of a linear kernel, or even a polynomial kernel with the exponent of the polynomial being independent of $d$, is very unlikely. 
   By the recent work of   Cygan et al. \cite{CyganGH12}, the kernelization algorithm of Philip et al.~\cite{PhilipRS09}  is essentially  tight---the existence of a kernel of size  
 $\cO(k^{(d-3)(d-1)-\varepsilon)})$ for  $\tDS$ on $d$-degenerate  graphs would imply that  {\sf coNP} is contained in {\sf NP/poly}.  
 
 In this work we show how to generalize the linearity of kernelization  for \tDS \, from 
 bounded-degree graphs and  apex minor free graphs to the class of graphs excluding a fixed graph $H$ as a topological minor.
  Moreover, a   modification of the ideas for \tDS \, kernelization can be used to obtain a linear kernel for \tCDS, which is usually a much more difficult problem to handle due to the connectivity constraint.  For example, \tCDS \, does not have a polynomial kernel on $2$-degenerate graphs  unless {\sf coNP} is in {\sf NP/poly}    \cite{Cygan:2010bv}. We must {\em emphasize} that our linear kernels are existential. That is, we just show the mere existence of polynomial time algorithms computing linear kernels. 
  
  The class of graphs excluding a fixed graph $H$ as a topological minor  is a wide class of graphs containing \Hmf\, graphs and graphs of constant vertex degrees. The existence of a linear kernel for \tDS \, on this class of graphs significantly extends and improves  previous works  
  \cite{AG08TechReport,F.V.Fomin:2012,Gutner09}.
  The extension of the results for planar graphs from \cite{AlberFN04} and apex-minor-free graphs from \cite{F.V.Fomin:2010oq} to the more general family of  \Hmf \, graphs requires several new ideas. Similar difficulties in 
  generlizing algorithmic techniques from apex-minor free to \Hmf \, graphs were observed in approximation   
\cite{Demaine:2009pd}  and parameterized algorithms  \cite{DemaineFHT05sub,DraganFG08}.  The basic idea behind kernelization algorithms on  apex-minor-free  graphs is the bidimensionality of \tDS. Roughly speaking, the treewidth of these graphs with dominating set of size $k$ is  $o(k)$.  In other words, 
 excluding an apex graph 
makes it possible to bound the tree-decomposability  of the input 
graph by a {\em sublinear} function of the size of a dominating set which is not the case for more general classes of \Hmf \, graphs or a family of graphs excluding a fixed graph $H$ as a topological minor.

A main ingredient of our kernelization algorithms are new reduction rules that allow us to obtain the desired kernels on \Hmf \, graphs.  This is an important step for our kernel on  the class of graphs excluding a fixed graph $H$ as a topological minor.   The main idea behind our algorithm is to identify and remove ``irrelevant" vertices  without changing the solution such that in the reduced graph one can select $\cO(k)$ vertices whose removal leaves protrusions, that is, subgraphs of constant treewidth separated from the remaining vertices by a constant number of vertices. If we are  able to 
obtain such a graph, we can use   the techniques from \cite{F.V.Fomin:2010oq} to construct the linear kernel. 
Roughly speaking, our rule to  identify  ``irrelevant"  vertices works as follows: 
we try specific vertex subsets of constant size, for each subset we try all ``feasible" scenarios how dominating sets can  interact with the subset, and find neighbours of theses 
subsets whose removal does not change the outcome of any feasible scenario. 
The main difference of this new reduction  rule in comparison to other rules for  \tDS \,   \cite{AlberFN04,ChenFKX07} is that instead of  reducing the size of the graph to $\cO(k)$, it reduces the treewidth of the graph to  $\cO(\sqrt{k})$. Thus idea-wise, it is closer  to 
 the ``irrelevant vertex'' approach  developed  by Robertson and Seymour  for 
  disjoint paths  and minor checking problems \cite{RobertsonS-XIII}. 
However, the significant difference with this technique is that in all applications of ``irrelevant vertex''   the bounds on the treewidth are exponential or even worse
\cite{KawarabayashiK08,Kawarabayashi:2010cs,Kobayashi:2009jt}. Moreover,  Adler et al.~\cite{Adler11} provide instances  of the  disjoint paths problem on planar graphs, for which the irrelevant vertex approach of  Robertson and Seymour produces graphs of treewidth $2^{\Omega(k)}$.  Our rule provides a reduced graph with   \emph{sublinear}    treewidth for $\tDS$.

The proof that after deletion of all irrelevant vertices the treewidth of the graph becomes sublinear is non-trivial.  For this proof we 
  need  the   theorem of Robertson and Seymour \cite{RobertsonS03}    on decomposing a graph into a set  of torsos connected via clique-sums. By making use of this theorem, we show that by applying the rule for all subsets of apex vertices of  each torso, it is possible to  reduce the treewidth  of each torso to $\cO(\sqrt{k})$. This implies that the treewidth of the reduced graph is also $\cO(\sqrt{k})$. However,  the number of torsos can be $\Omega(n)$ and the sublinear treewidth of the reduced graph  still does not bring us directly to the kernel. To  overcome this obstacle, we have to implement the irrelevant vertex rule  in a divide and conquer manner, and only after doing this can we guarantee that the reduced graph admits a linear kernel.  The idea of using divide and conquer in kernelization is our first conceptual contribution. 
  

%
  
    
The second main step of our kernelization algorithm for $\tDS$, on  the class of graphs excluding a fixed graph $H$ as a topological minor, is to design reduction rules for graphs of bounded degree. 
The ideas introduced for \Hmf \, graphs  can hardly  work on graphs of bounded degree, and hence on   graphs excluding a fixed graph $H$ as a topological minor. The reason is that the bound $o(k)$ on the treewidth of such graphs   would imply that  \tDS
\, is solvable in subexponential time on graphs of bounded degree, which in turn can be shown to contradict the  Exponential Time Hypothesis       
\cite{ImpagliazzoPZ01}. This is why the kernelization techniques developed for \Hmf \, graphs do not seem to be applicable directly in our case.

\paragraph{High level overview of the main  ideas.}
Our kernelization algorithm has two main phases. In the first phase we partition the input graph $G$ into subgraphs $C_0, C_1, \dots, C_\ell$, such that 
 $|C_0|=\cO(k)$; for every $i\geq 1$, the neighbourhood $N(C_i)\subseteq C_0$,  and $\sum_{1\leq i\leq \ell} |N(C_i)|=\cO(k)$.   In the second phase, we replace these graphs by smaller equivalent graphs.  Towards this, we treat 
  graphs $N[C_i]=C_i \cup N(C_i)$, $i\geq 1$,  as $t$-boundaried graphs with boundary $N(C_i)$. Our second  conceptual contribution is a  polynomial time algorithmic procedure for replacing a $t$-boundaried graph by an equivalent graph of size $\cO(|N(C_i)|)$.   Observe that as a result of such replacements, the size of the new graph is 
  $$\sum_{1\leq i\leq \ell} |\cO(N(C_i))| +|C_0|=\cO(k)$$ 
  and thus we obtain a linear kernel. 
  Kernelization techniques based on replacing a $t$-boundaried graph by an equivalent instance or, more specifically, protrusion replacement were used before in~\cite{H.Bodlaender:2009ng,F.V.Fomin:2010oq,FominLMPS11, abs-1207-0835}. At this point it is also important to mention earlier works done in~\cite{FellowsL89,ArnborgCPS93,BodlaendervA01a,Fluiter97,BodlaenderH98}  
  on protrusion replacement in the algorithmic setting on graphs of bounded treewidth. 
  The substantial differences  with our replacement 
  procedure  and the ones used before in the kernelization setting are the following. 
  \begin{itemize}
  \item In the protrusion replacement procedure it is assumed that the size of the boundary $t$ and the treewidth  of the replaced graph are constants. In our case neither the treewidth, nor the boundary size are  bounded. In particular, the boundary size  could be a {\em linear} function of $k$.
  \item In earlier protrusion replacements, the size of the equivalent replacing graph is bounded by some (non-elementary) function of $t$. In our case this is a {\sl linear} function of $t$.  
  \end{itemize} 
Our new  replacement procedure strongly exploits the fact that graphs $C_i$ possess 
 a set of desired properties allowing us to apply the irrelevant vertex technique explained above. 
  However, not every graph $G$ excluding some fixed graph   as a topological minor can be partitioned into graphs with the desired properties.  We show that, in this case, there is  another polynomial time procedure transforming $G$ into an equivalent graph, which in turn can be partitioned. The procedure is based on a generalized notion of protrusion, which is the third conceptual contribution of this paper. In the new notion of protrusion we relax  the requirement that protrusions are of bounded treewidth by the condition   that they  have a bounded size dominating set. Let us remark, that a similar notion of a generalized protrusion, bounded by the size of a certificate, can be used  for a variety of graph problems. We show that
either   a graph does not have the desired partition, or  it   contains a  sufficiently large generalized protrusion, which can be replaced by a smaller equivalent subgraph.  
The construction of the partitioning  is heavily based on the recent work of Grohe and Marx on the structure of  such graphs  \cite{GroheM12}.

   As a byproduct of our results we  obtain the first subexponential time  algorithms for {\sc Connected Dominating Set}, a deterministic algorithm  solving the problem on an $n$-vertex \Hmf \, graph in  time $2^{\cO(\sqrt{k})}+ n^{\cO(1)}$.  For {\sc Dominating Set} our results imply a significant 
simplification and refinement  of  a $2^{\cO(\sqrt{k})}n^{\cO(1)}$ algorithm on \Hmf \, graphs due to Demaine et al. \cite{DemaineFHT05sub}. 
Also our kernels can be used to obtain, subexponential, polynomial-space parameterized algorithms for these problems.    

\paragraph{Organization of the paper.}
The remaining part of this paper is organized as follows.
In Section~\ref{sec:defs_and_nots}, we provide definitions and  state known results used in the paper. In Section~\ref{sec:gens:protrs}, we introduce the new notion of ``generalized protrusions" and build a theory of  replacements for such protrusions. 
We provide a decomposition lemma in Section~\ref{sec:slicedecs}, which will be used  for kernelization algorithms. In Sections~\ref{sec:domset_kernel} and~\ref{sec:CDSkernel} we give the two main results of the paper, linear kernels for \tDS\ and \tCDS\ on  the class of graphs excluding a fixed graph $H$ as a topological minor. In Section~\ref{sec:concludes}  we conclude with questions for further research and give a  short overview of some of the developments which happened since the conference versions of this paper were published, including work on  kernelization of \tDS\ and \tCDS\ on graphs of bounded expansion and on nowhere-dense graphs.

\section{Preliminaries}
\label{sec:defs_and_nots}

In this section we give various definitions which we make use of in the paper. We refer to Diestel's book 
\cite{diestelbook} for standard definitions from Graph Theory.
 Let~$G$ be a graph with vertex set $V(G)$ and edge set $E(G)$.  A graph~$G' $ is a
 \emph{subgraph} of~$G$ if~$V(G') \subseteq V(G)$ and~$E(G') \subseteq E(G)$.
 For a subset $V'\subseteq V(G)$, the subgraph~$G'=G[V']$ of $G$ is called the   \emph{subgraph induced  by $V'$}  if~$E(G')
 = \{uv \in E(G) \mid u,v \in V'\}$.
  By $N_G(u)$
we denote the (open) neighborhood of $u$ in graph $G$, that is, the set of all vertices
adjacent to $u$ and by $N_G[u]=N_G(u)\cup \{u\}$. 
Similarly, for a subset $D \subseteq V$, we define $N_G[D]=\cup_{v\in D} N_G[v]$
and $N_G(D)=N_G[D]\setminus D$.  Given a set $S\subseteq V (G)$, we define $\partial_G(S)$ as the set of vertices in 
$S$ that have a neighbor in $V(G) \setminus S$.  We omit the subscripts when they are clear from the context. A subset of vertices $D$ is called a {\em dominating set} of $G$ if 
$N[D]=V(G)$. A subset of vertices $D$  is called a {\em connected dominating set} if it is a dominating set and for every connected component $C$ of $G$ we have that $G[D\cap C]$ is connected. Throughout the paper, given a graph $G$ and vertex subsets $Z$ and $S$, whenever we say that a subset $Z$ 
{\em dominates all but (everything but)} $S$ then we mean that $V(G)\setminus S \subseteq N[Z]$. Observe that a vertex of $S$ can also be dominated by the set $Z$.

\medskip

We denote by $K_h$ the complete graph on $h$ vertices. Also for a given graph $G$ and a vertex subset $S$ by $K[S]$ we mean a clique on 
the vertex set $S$. 
For an integer $r\geq 1$ and  vertex subsets $P,Q\subseteq V(G)$, we say that a subset $Q$ is \emph{$r$-dominated} by $P$, if for every $v\in Q$  there is $u\in P$ such that the distance between $u$ and $v$ is at most $r$. For $r=1$, we simply say that $Q$ is dominated by $P$.  We denote by  $N^r_G(P)$ the set of vertices $r$-dominated by $P$.  

Throughout this paper we use $\mar{\Bbb{Z}}$,  $\mar{\Bbb{Z}^{+}}$  and  $\mar{\Bbb{Z}^{-}}$  for the sets of  integers,  non-negative and 
non-positive integers respectively.   Finally,  we use $\mar{\Bbb{N}}$ for the set of positive integers.
\paragraph{Minors and Contractions.}
Given an edge  $e=xy$ of a graph $G$, the graph  $G/e$ is
obtained from  $G$ by contracting the edge $e$, that is,
the endpoints $x$  and $y$ are replaced by a new vertex $v_{xy}$
which  is  adjacent to the old neighbors of $x$ and $y$ (except
from $x$ and $y$).  A graph $H$ obtained by a sequence of
edge-contractions is said to be a \emph{contraction} of $G$.  We denote it by $H\leq_{c} G$.
A graph $H$ is a {\em minor} of a graph $G$ if $H$ is the contraction of some subgraph
of $G$ and we denote it by $H\leq_{m} G$.
We say that a graph $G$ is {\Hmf \,} when it does not
contain $H$ as a minor. We also say that a graph class ${\cal G}_H$
is {\Hmf \,} (or, excludes $H$ as a minor)  when
all its members are \Hmf.
 An \emph{apex graph} is a graph obtained from a planar graph $G$
by adding a vertex and making it adjacent to some of the vertices of $G$.
A graph class ${\cal G}_H$ is \emph{apex-minor-free} if ${\cal G}_H$
excludes a fixed apex graph $H$ as a minor.


A \emph{subdivision} of a graph $H$ is obtained by replacing each edge of $H$ by a non-trivial path. We say that $H$ is a \emph{topological minor} of $G$ if some subgraph of $G$ is isomorphic to a subdivision of $H$ and denote it by $H\preceq_T G$. 
A graph $G$ \emph{excludes a graph $H$ as a (topological) minor} if $H$ is not a (topological) minor of $G$.  For a graph $H$, 
by $\mathcal{C}_H$, we denote all graphs that exclude $H$ as  topological minors.


%

\paragraph{ Tree-Decompositions.} A \emph{tree-decomposition} of a graph $G$ is a pair $(M,\Psi)$ where $M$ 
is a rooted tree and $\Psi:V(M)\rightarrow 2^{V(G)}$, such that :

\begin{enumerate}
\item $\bigcup_{t\in V(M)}\Psi(t)=V(G)$.
\item For each edge $uv\in E(G)$, there is a $t\in V(M)$ such that both $u$ and $v$ belong to $\Psi(t)$.
\item For each $v\in V(G)$, the nodes in the set $\{t\in V(M)\mid v\in \Psi(t)\}$ form a subtree of $M$.
\end{enumerate}
If $M$ is a path then we call the pair  $(M,\Psi)$  as {\em path-decomposition}.

\noindent
The following notations are the same as that in \cite{GroheM12}. Given a tree-decomposition of a graph 
$G$, 
we define mappings $\sigma, \gamma :V(M)\rightarrow2^{V(G)}$  and $\kappa: E(M) \rightarrow 2^{V(G)}$. 
 For all $t\in V(M)$,\\
\begin{center}
$
\sigma(t) = \begin{cases}

  \emptyset & \text{if $t$ is the root of $M$} \\

  \Psi(t)\cap \Psi(s) & \text{if $s$ is the parent of $t$ in $M$} \\

\end{cases}
$\end{center}

$$\gamma(t)=\bigcup_{u \text { is a descendant of } t}\Psi(u)$$\\


For all $e=uv \in E(M)$, $\kappa(e)=\Psi(u)\cap \Psi(v)$. 

For a subgraph $M'$ of $M$ by $\Psi(M')$ we denote $\cup_{t\in V(M')}\Psi(t)$. 
\medskip
\noindent


Let $(M,\Psi)$ be a tree-decomposition of a graph $G$. The {\em width} of $(M,\Psi)$ is 
$$\min\Big\{\vert \Psi(t)\vert -1 \mid t\in V(M)\Big\},$$ and the {\em adhesion}  of the tree-decomposition is 
$$\max\Big\{\vert \sigma(t)\vert \mid t\in V(M)\Big\}.$$ 
We use $\tw(G)$ to denote the treewidth of the input graph. For every node $t\in V(M)$, the {\em torso} at $t$ is the graph 

\begin{center}
$\tau(t):= G[\Psi(t)] \cup E(K[\sigma(t)]) \cup \bigcup_{u\, \mathrm{ child}\, \mathrm{of }\, t} E(K[\sigma(u)])$.
\end{center}

We take the graph induced by $\Psi(t)$, turn $\sigma(t)$ into a clique, and make vertices $x,y$ adjacent if they appear together in the separator  of some child $u$ of $t$.

\paragraph{Parameterized graph problems.}
A parameterized graph problem  $\Pi$ is usually defined as a subset of $\Sigma^{*}\times \Bbb{Z}^{+}$
where, in each instance $(x,k)$ of $\Pi,$ $x$ encodes a graph and $k$ is the parameter (we denote by $\Bbb{Z}^{+}$ the set of all non-negative integers). In this paper we use an extension of this definition (also used by Bodlaender et al.~\cite{H.Bodlaender:2009ng}) that permits the parameter $k$ to be negative 
with the additional constraint that either all pairs with non-positive values of the parameter 
are in $\Pi$ or that no such pair is in $\Pi$. Formally, a parametrized problem $\Pi$
is a subset of $\Sigma^{*}\times \Bbb{Z}$ where for all $(x_{1},k_{1}),(x_{2},k_{2})\in\Sigma^{*}\times \Bbb{Z}$
with $k_{1},k_{2}<0$ it holds that $(x_{1},k_{1})\in\Pi$ if and only if  $(x_{2},k_{2})\in\Pi$.
This extended definition encompasses the traditional one and is needed for technical reasons  
(see Subsection~\ref{subsec:finiinteginde}).
In an instance of a parameterized problem $(x,k),$ the integer $k$ is called the parameter. Now we formally define the 
$\tDS$ and $\tCDS$ problems. 

\defparprob{$\tDS$}{An undirected graph $G$ and a positive integer $k$.}{$k$}{Does there exists $D\subseteq V(G)$ of size at most $k$ such that $N[D]=V(G)$?}

\medskip

\defparprob{$\tCDS$}{An undirected graph $G$ and a positive integer $k$.}{$k$}{Does there exists $D\subseteq V(G)$ of size at most $k$ such that $N[D]=V(G)$ and $G[D]$ is connected?}


\paragraph{Kernels   and Protrusions.}
A central notion in 
parameterized complexity is {\em fixed parameter tractability}, which means, 
for a given instance $(x,k),$ 
solvability in time $f(k)\cdot p(|x|),$ where $f$ is an arbitrary function of $k$ and $p$ is a polynomial function in the input size. 
The notion of {\em kernelization} is formally defined as follows.

\begin{definition}
A {\em{kernelization algorithm}}, or simply a {\em{kernel}}, for a parameterized problem $\Pi$ is an algorithm $\mathcal{A}$ that, given an instance $(x,k)$ of $\Pi$, works in polynomial-time and returns an equivalent instance $(x',k')$ of $\Pi$. Moreover, 
there exists a computable function $g(\cdot)$ such that whenever $(x',k')$ is the output for an instance $(x,k)$, then it holds that $|x'|+k'\leq g(k)$. If the upper bound $g(\cdot)$ is a polynomial (linear) function of the parameter, then we say that $\Pi$ admits a {\em{polynomial (linear) kernel}}.
\end{definition}

We often abuse the notation and call the output of a kernelization algorithm, the ``reduced'' equivalent instance, also a kernel.

 
 \begin{definition}
Given a graph $G$, we say that a set $X\subseteq V(G)$ is an {\em $r$-protrusion} of $G$ if 
  $\tw(G[X])\leq r$ and the number of vertices in $X$ with a neighbor in $V(G)\setminus X$ is at most $r$. 
  \end{definition}

\subsection{Known Decomposition Theorems} We start with the definition of nearly embeddable graphs. 
\begin{definition}[$h$-nearly embeddable graphs]
Let $\Sigma$ be a surface with boundary cycles $C_1, \dots,C_h$, i.e.\ each cycle
$C_i$ is the border of a disc in $\Sigma$. A graph $G$ is
{\em $h$-nearly embeddable} in $\Sigma$, if $G$ has a subset $X$ of size at most $h$,
called {\em apices}, such that there are (possibly empty) subgraphs
$G_0=(V_0, E_0),\dots,G_h=(V_h, E_h)$ of $G\setminus X$ such that
\begin{itemize}
\setlength{\itemsep}{-2pt}
\item $G \setminus X=G_0\cup\dots\cup G_h$,
\item $G_0$ is embeddable in $\Sigma $, we fix an embedding of $G_0$,
\item graphs $G_1,\dots,G_h$ (called \emph{vortices}) are pairwise disjoint,
\item for $1\leq i \leq h$, let $U_i:= \{u_{i_1},\dots,u_{i_{m_i}}\} = V_0 \cap V_i$,  $G_i$ has a path decomposition 
$(B_{ij},\Psi_{ij}),\ 1\leq j \leq m_i$, of width at most $h$ such that

\begin{itemize}
\item for $1\leq  i \leq h$ and for $1 \leq j \leq m_i$ we have $u_{i_j} \in B_{ij}$
\item for $1\leq i \leq h$, we have $V_0 \cap C_i = \{u_{i_1},\dots,u_{i_{m_i}}\} $ and the points $u_{i_1},\dots,u_{i_{m_i}}$ appear on $C_i$ in this order (either if we walk clockwise or anti-clockwise).
\end{itemize}
\end{itemize}
\end{definition}

%
%
%


\noindent
The decomposition theorem that we use extensively for our proofs is given in the next theorem. 
\begin{theorem}  [\cite{abs-1209-0129,GroheM12,RobertsonS03}]
\label{thm:structure theorem}  
For every graph $H$, there exists a constant $h$, 
depending only on the size of $H$, such that every graph $G$  with $H\not\preceq_T G$, there is a tree-decomposition 
$(M,\Psi)$ of adhesion at most $h$ such that for all $t\in V(M)$, one of the following conditions is satisfied: 
\begin{enumerate}
\item  $\tau(t)$  is $h$-nearly embedded in a surface $\Sigma$ in which $H$ cannot be embedded.
\item $\tau(t)$ has at most $h$ vertices of degree larger than $h$.
\end{enumerate}
Moreover, if $G$ is an \Hmf \,   then nodes of second type do not exist. 
\noindent
Furthermore, there is an algorithm that, given graphs $G$, $H$ on $n$ and $|V(H)|$ vertices, respectively, computes such a tree-decomposition in time 
$f(|V(H)|)n^{\cO(1)}$ for some computable function $f$,  and moreover computes an apex set $Z_t$ of size at most $h$  
for every bag of the first type.
\end{theorem}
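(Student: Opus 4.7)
The statement packages together two deep structural results, so the natural strategy is to prove each regime separately and then harvest the algorithmic version. I would begin with the ``moreover'' clause about $H$-minor-free graphs, which is exactly the classical Robertson--Seymour graph minors structure theorem: it already produces a tree decomposition of adhesion bounded by some $h=h(|H|)$ in which every torso is $h$-nearly embeddable in a surface where $H$ cannot be embedded. This gives the statement outright when $G$ excludes $H$ as a minor, with only type-(1) nodes appearing.

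To lift this to excluded topological minors I would follow the Grohe--Marx reduction. The key combinatorial observation is that a branch vertex of a topological model of $H$ must have degree at least $|V(H)|-1$ in $G$, so vertices of small degree can only serve as internal points of subdivided edges. Consequently I would separate, in each piece of a decomposition, the ``core'' of high-degree vertices from the low-degree bulk. In a piece whose high-degree core has more than $h$ vertices, one can contract each high-degree vertex together with a bounded-size neighbourhood to obtain, in an auxiliary graph, a minor model of some fixed $H'$ (depending only on $|H|$) that is in fact topological; this forces the auxiliary graph to be $H'$-minor-free, so one can apply Robertson--Seymour to it and pull the resulting nearly-embeddable decomposition back to the original piece. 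In a piece whose high-degree core has at most $h$ vertices, the piece itself satisfies the type-(2) condition by definition and we stop refining. Running this dichotomy top-down along a Robertson--Seymour decomposition of $G$, while carefully maintaining that adhesions of newly introduced bags lie inside adhesions of old ones, yields the desired tree decomposition of adhesion at most $h$ in which every torso is of type~(1) or type~(2).

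For the algorithmic part I would plug in the constructive version of Robertson--Seymour's decomposition theorem, which runs in time $f(|H|)\cdot n^{O(1)}$, and check that each step above is effective: contracting bounded-radius neighbourhoods of high-degree vertices, testing whether a torso has at most $h$ vertices of degree larger than $h$, and extracting the apex set $Z_t$ from a near-embedding (the vertices outside the fixed embedding of $G_0$) are all polynomial-time operations given the data structures produced by the graph minors algorithm. The main obstacle, and the genuinely hard part of the Grohe--Marx argument, is the adhesion control in the second paragraph: naively combining a Robertson--Seymour decomposition of the original graph with further Robertson--Seymour decompositions of contracted auxiliary graphs lets adhesion sizes compound at each refinement. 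Bounding the adhesion uniformly by $h$ requires the careful choice of which neighbourhoods to contract and a proof that the adhesions of the refined decomposition are still separators of bounded size in the original $G$; this is where all the technical work concentrates, and is the step I would expect to occupy the bulk of a full proof.
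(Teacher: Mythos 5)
You should first note what you are up against: the paper does not prove this statement at all. Theorem~\ref{thm:structure theorem} is imported as a black box from Grohe and Marx \cite{GroheM12} (building on Robertson--Seymour \cite{RobertsonS03}), so the only fair comparison is between your sketch and the actual Grohe--Marx argument. Your roadmap is right in outline -- the ``moreover'' clause for $H$-minor-free graphs is indeed just the Robertson--Seymour structure theorem, the algorithmic claim does follow by plugging in its constructive version, and the proof does pivot on a dichotomy between a small high-degree core (type-(2) torsos) and pieces to which Robertson--Seymour can be applied (type-(1) torsos). But the central mechanism you propose is both logically inverted and technically wrong. You write that contracting each high-degree vertex with a bounded-size neighbourhood yields ``a minor model of some fixed $H'$ \ldots that is in fact topological; this forces the auxiliary graph to be $H'$-minor-free'': exhibiting a minor model establishes containment, not exclusion, so at best you mean the contrapositive. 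Even granting that, the step fails: contracting bounded neighbourhoods does not upgrade a minor to a topological minor. Branch sets of a minor model can be arbitrarily large, and a vertex of high degree in $G$ need not have high degree into the branch sets, so degree information simply does not transfer through the contraction the way your sketch needs.

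The idea your sketch is missing is the linkage argument through a clique minor, which is the actual engine of \cite{GroheM12}. If a torso contains a sufficiently large clique minor \emph{and} at least $k$ vertices of large degree that cannot be separated from that clique minor by a small cut, then Menger-type routing lets each high-degree vertex send many disjoint paths into the clique minor, which then serves as a hub interconnecting them pairwise by internally disjoint paths; this realizes a subdivision of $K_k$, hence of $H$, contradicting $H\not\preceq_T G$. Consequently every torso with a large clique minor has at most $h$ high-degree vertices (type (2)), torsos with no large clique minor are $K_{k'}$-minor-free and get the Robertson--Seymour treatment (type (1)), and -- crucially -- whenever the routing fails, the small separator witnessing the failure of well-linkedness is exactly the separation one splits along in the recursion. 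This is also the answer to the adhesion problem that you correctly flag as the crux but leave entirely open: the uniform bound $h$ on adhesion does not come from ``carefully maintaining that adhesions of new bags lie inside adhesions of old ones'' along a top-down refinement (your compounding worry is real and that naive scheme does not work); it comes from the fact that every separation introduced is one of these bounded-size separators produced by the failed linkage step. So your proposal is a plausible table of contents for the theorem, but the one step where you commit to a mechanism is incorrect, and the two genuinely hard ingredients (the clique-minor-as-hub routing and the well-linkedness-driven recursion that bounds adhesion) are absent.
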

One of the main consequence of Theorem~\ref{thm:structure theorem} we need for our purposes is  that (in the case when 
$G$ is \Hmf) for every 
$H$ there exist  constants $h$ and $h'$ such that  for every torso $L$  of the decomposition from 
Theorem~\ref{thm:structure theorem}, there exists a set of vertices $A \subseteq V(L)$ of size at most $h$, called apices, such 
that the graph obtained from $L$ after deleting the apices does not contain some  apex graph $H'$ of size 
$h'$ as a minor. See, e.g.~\cite[Theorem $13$]{Grohe03}. 

 Furthermore we can assume that in  $(M,\Psi)$, for any $x,y\in V(M)$, $\Psi(x)\not \subseteq \Psi(y)$. That is, no bag is contained in other. 
 See~\cite[Lemma 11.9]{FlumGrohebook} for the proof.  
 
 
 

%

\subsection{Known Approximation Algorithms}
Recall that by $\mathcal{C}_H$ we denote the class of graphs that exclude a fixed graph $H$ as a topological minor. 
In this subsection we state known polynomial-time constant factor approximation algorithms  for \tDS \, and \tCDS\ on $\mathcal{C}_H$. It is well known that graphs in 
$\mathcal{C}_H$ has bounded 
degeneracy. The following is known about the approximation of \tDS. 

%
\begin{lemma}[\cite{Dvorak13}]
\label{lemma:approximation}
Let $H$ be a   graph. Then there exists a constant $\eta(H)$ depending only on $|V(H)|$ such that 
{\sc Dominating Set}  admits  a $\eta(H)$-factor approximation algorithm on  $\mathcal{C}_H$. 
\end{lemma}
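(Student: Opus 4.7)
The plan proceeds in two stages: first reduce to bounded degeneracy, then round a natural LP using the degeneracy ordering. For the reduction, I invoke the classical extremal-graph-theoretic bounds of Mader and Bollob\'as--Thomason: every $G\in\mathcal{C}_H$ satisfies $|E(G')|\le d\cdot |V(G')|$ for every subgraph $G'$, where $d=d(|H|)$ depends only on $|H|$. Consequently, $G$ admits a \emph{$d$-degeneracy ordering} $v_1,\dots,v_n$ in which each $v_i$ has at most $d$ neighbors in $\{v_1,\dots,v_{i-1}\}$, computable in linear time by iteratively deleting a minimum-degree vertex. It therefore suffices to produce a factor-$\eta(d)$ approximation for {\sc Colored Dominating Set} on $d$-degenerate graphs.

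Set $Z':=Z\setminus N[X]$; these are precisely the vertices still requiring domination, and each has $N(u)\cap X=\emptyset$. I consider the natural covering LP with a variable $x_v\in[0,1]$ for each $v\in Y\cup Z$, one covering constraint $\sum_{v\in N[u]\cap(Y\cup Z)} x_v \ge 1$ per $u\in Z'$, and objective $\min\sum_v x_v$. Its optimum $\tau^*$ lower-bounds $\mathrm{OPT}$. Following the rounding scheme of \cite{JLRSS12}, I round $x^*$ in two steps: first, include in $D$ every $v$ with $x^*_v\ge 1/(d+1)$, adding at most $(d+1)\tau^*$ vertices and dominating every $u\in Z'$ that has an ``LP-heavy'' neighbor; then handle the residual set $R\subseteq Z'$ (whose neighbors in $Y\cup Z$ all have $x^*_v<1/(d+1)$) by adding each $u\in R$ itself to $D$, which is legal since $u\in Z\subseteq Y\cup Z$.

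The main obstacle is bounding $|R|\le \cO(d)\tau^*$. The usable structure is that each $u\in R$ has at least $d+1$ neighbors in $Y\cup Z$ (since the unit LP demand spreads over values $<1/(d+1)$), while the degeneracy ordering caps the number of back-neighbors of every vertex by $d$. The analysis proceeds by an iterative peeling along the ordering: at each step, either pick a vertex $v$ with $\ge 2d$ residual neighbors and charge $v$ to the LP via its $x^*$-value, or observe that the subgraph induced on the remaining residuals has maximum degree $\cO(d)$ and cover it with a further $\cO(d)\tau^*$ vertices (each residual being its own dominator). Combining the two rounding steps yields $|D|=\cO(d^2)\tau^*\le \cO(d^2)\,\mathrm{OPT}$, i.e.\ an $\eta(H)$-factor approximation with $\eta(H)=\cO(d(|H|)^2)$. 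Extending the argument from plain to colored {\sc Dominating Set} is a syntactic adjustment (restricting the LP variables to $Y\cup Z$ and its constraints to $Z'$) that does not affect the analysis.
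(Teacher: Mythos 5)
Your overall strategy---reduce to bounded degeneracy via Mader/Bollob\'as--Thomason and then approximate via LP rounding on $d$-degenerate graphs---is a legitimate and genuinely different route: it is essentially the $\cO(d^2)$-approximation of \cite{JLRSS12} that the paper itself cites but deliberately avoids in favor of a self-contained argument. The paper's own proof instead works by induction along the Grohe--Marx tree decomposition \cite{GroheM12}: it repeatedly selects a lowest node $t$ that is the peak of an undominated vertex, and handles the two torso types separately (in the bounded-degree case it takes the at most $h$ high-degree vertices together with $\sigma(t)$ and the still-undominated part of the bag; in the nearly-embeddable case it runs the PTAS of \cite{DemaineHaj05} on $G[\Psi(t)]$), with a charging argument showing each step costs $\cO(h)\cdot|\mathrm{OPT}\cap\gamma(t)|$, yielding $\eta(H)=5h$. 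So your approach would buy a simpler, decomposition-free algorithm---but as written it has a hole at its central step.

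The gap is your claimed bound $|R|\le\cO(d)\,\tau^*$. First, the threshold $1/(d+1)$ leaves no quantitative slack: a residual vertex $u$ has at most $d+1$ vertices in its closed back-neighborhood with respect to the degeneracy ordering, each possibly carrying LP weight just below $1/(d+1)$, so all but an arbitrarily small fraction of $u$'s unit demand can be absorbed by back-mass; essentially no LP weight is forced onto forward edges, which is the only quantity degeneracy controls (each vertex has at most $d$ back-neighbors, so the total mass chargeable over forward edges is at most $d\tau^*$). Second, the peeling dichotomy is a non sequitur: a vertex $v$ with $\ge 2d$ residual neighbors cannot be ``charged to the LP via its $x^*$-value,'' since every neighbor of a residual vertex has $x^*_v<1/(d+1)$ and possibly $x^*_v$ arbitrarily close to $0$, so its unit cost is not paid for by $x^*_v$; and in the fallback branch, a maximum-degree-$\cO(d)$ bound on $G[R]$ bounds nothing, because adding each residual as its own dominator costs exactly $|R|$---the very quantity you were trying to bound---and the $\ge d+1$ guaranteed neighbors of a residual vertex need not themselves be residual. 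The known repair (Lenzen--Wattenhofer; Bansal--Umboh) is to lower the threshold to $1/(2d+1)$: then each residual $u$ satisfies $|N[u]\cap(Y\cup Z)|\ge 2d+2$, its closed back-neighborhood carries mass at most $(d+1)/(2d+1)$, hence at least $d/(2d+1)$ of its demand sits on forward neighbors, and summing over $R$ gives
\begin{equation*}
|R|\cdot\frac{d}{2d+1}\;\le\;\sum_{v} x^*_v\,\bigl|N^{-}(v)\cap R\bigr|\;\le\; d\,\tau^*,
\end{equation*}
i.e.\ $|R|\le(2d+1)\tau^*$, which combined with the heavy vertices yields an $\cO(d)$-factor (even better than the $\cO(d^2)$ you target). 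Without this or an equivalent charging argument, your proof does not go through.
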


For \tCDS \,  we  need the following proposition attributed to \cite{Duchet82}.
\begin{proposition}
\label{lem:bb}
Let $G$ be a connected graph and let $Q$ be a dominating set of $G$ such that $G[Q]$ has at most $\rho$ connected components. Then 
there exists a set $Z\subseteq V(G)$ of size  at most $2\cdot (\rho-1)$  
 such that $Q\cup Z$ is a connected dominating set in $G$ and, given $Q$, we can find such a set $Z$ in polynomial time. 
\end{proposition}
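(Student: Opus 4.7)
The plan is to reduce the problem to finding a spanning tree in an auxiliary ``component graph'' and then witnessing each tree edge by a short path in $G$. Let $C_1, \ldots, C_\rho$ denote the connected components of $G[Q]$, and define an auxiliary graph $H$ on vertex set $\{C_1, \ldots, C_\rho\}$ in which $C_i$ and $C_j$ are adjacent iff there is a path in $G$ from some vertex of $C_i$ to some vertex of $C_j$ of length at most $3$ all of whose internal vertices lie in $V(G) \setminus Q$. Once $H$ is shown to be connected, I take any spanning tree $T$ of $H$; for each of its $\rho - 1$ edges I choose one witnessing path and include its (at most two) internal vertices in $Z$. Then $|Z| \leq 2(\rho - 1)$, and by construction every pair $C_i, C_j$ becomes joined inside $G[Q \cup Z]$, so $Q \cup Z$ is the required connected dominating set (it dominates because $Q$ already does).

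The heart of the argument is thus showing that $H$ is connected. Suppose for contradiction that the vertex set of $H$ admits a nontrivial bipartition $(\mathcal{A}, \mathcal{B})$ with no crossing $H$-edge, and write $A := \bigcup \mathcal{A}$ and $B := \bigcup \mathcal{B}$. Since $G$ is connected, pick a shortest path $P = u_0 u_1 \cdots u_k$ in $G$ with $u_0 \in A$ and $u_k \in B$. Any internal vertex $u_i \in Q$ would belong to some component in $\mathcal{A} \cup \mathcal{B}$, splitting $P$ into a strictly shorter $A$-to-$B$ subpath; so all internal vertices of $P$ lie outside $Q$.

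Now I case-split on $k$. If $k \leq 3$, then $P$ has at most two internal vertices, all outside $Q$, so $P$ itself witnesses an $H$-edge between $C(u_0) \in \mathcal{A}$ and $C(u_k) \in \mathcal{B}$, a contradiction. If $k \geq 4$, then $u_2 \notin Q$ and, by dominance of $Q$, has a neighbor $q \in Q$ lying in some component $C \in \mathcal{A} \cup \mathcal{B}$; if $C \in \mathcal{A}$, the walk $q\, u_2 u_3 \cdots u_k$ is an $A$-to-$B$ path of length $k-1$, and if $C \in \mathcal{B}$, the walk $u_0 u_1 u_2 \, q$ is an $A$-to-$B$ path of length $3 < k$. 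Either case contradicts the minimality of $P$.

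The main obstacle is calibrating the threshold in the definition of $H$: the choice of distance~$3$ is precisely what makes both sub-cases of the $k \geq 4$ analysis succeed (one shortening trims from the left, the other from the right) while bounding the number of internal vertices charged to each tree edge by~$2$. Once $H$ is established to be connected, the remaining steps are bookkeeping: extract a spanning tree with $\rho - 1$ edges, charge at most two vertices per edge into $Z$, and observe that $Q \cup Z$ inherits dominance from $Q$ and connectivity from the tree of witnessed paths.
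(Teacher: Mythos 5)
Your proof is correct and complete. One point of comparison is moot here: the paper gives no proof of this proposition at all---it is stated with an attribution to \cite{Duchet82}---so your blind attempt is in effect supplying the missing argument, and it is essentially the classical one (connect the components of $G[Q]$ through an auxiliary graph whose edges are witnessed by paths of length at most $3$ with internal vertices outside $Q$, then charge at most two internal vertices per spanning-tree edge). Your connectivity argument for $H$ is sound: internal vertices of a shortest $A$-to-$B$ path avoid $Q$ by minimality, the case $k\le 3$ directly yields a crossing $H$-edge, and for $k\ge 4$ the dominating neighbor $q$ of $u_2$ shortens the path on one side or the other; the threshold $3$ is indeed exactly what is needed. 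Two cosmetic remarks, neither a gap: in the subcase $C\in\mathcal{A}$ the vertex $q$ cannot equal any internal $u_i$ (those avoid $Q$) nor $u_k$ (whose component lies in $\mathcal{B}$), but it could equal $u_0$---in that event $u_0u_2\cdots u_k$ is still an $A$-to-$B$ path of length $k-1$, so the contradiction stands; and since distinct components of $G[Q]$ are never adjacent in $G$, your witness paths in fact have length $2$ or $3$, which only confirms the bound of two internal vertices per tree edge. Finally, if $G[Q]$ has $\rho'<\rho$ components the spanning tree has $\rho'-1\le\rho-1$ edges, so the stated bound $|Z|\le 2(\rho-1)$ holds a fortiori.
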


Combining Lemma~\ref{lemma:approximation} and Proposition~\ref{lem:bb} we arrive at the following. 
\begin{lemma}
\label{lemma:approximationcds}
Let $H$ be a   graph and $\eta(H)$ the constant from  Lemma~\ref{lemma:approximation}. Then
\tCDS \,  admits  a $3\eta(H)$-factor approximation algorithm on  $\mathcal{C}_H$. 
\end{lemma}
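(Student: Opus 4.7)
The plan is to invoke the two stated results in sequence: first use Lemma~\ref{lemma:approximation} to produce a (not necessarily connected) dominating set $Q$ of near-optimal size, and then use Proposition~\ref{lem:bb} to ``connect it up'' by adding at most twice as many new vertices as $Q$ has components. The two key inequalities that drive the factor $3\eta(H)$ are (i) the trivial bound $\rho\leq |Q|$ on the number of connected components of $G[Q]$, and (ii) the obvious domination bound $\mathrm{opt}_{\tDS}(G)\leq \mathrm{opt}_{\tCDS}(G)$, since every connected dominating set is in particular a dominating set.

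First I would run the algorithm from Lemma~\ref{lemma:approximation} on $G$ viewed as a {\sc Colored Dominating Set} instance with $X=Y=\emptyset$, $Z=V(G)$. This produces in polynomial time a dominating set $Q\subseteq V(G)$ with $|Q|\leq \eta(H)\cdot\mathrm{opt}_{\tDS}(G)$. Next, let $\rho$ be the number of connected components of $G[Q]$; clearly $\rho\leq |Q|$. Applying Proposition~\ref{lem:bb} to $G$ and $Q$ yields a set $Z\subseteq V(G)$ with $|Z|\leq 2(\rho-1)\leq 2(|Q|-1)$ such that $Q\cup Z$ is a connected dominating set of $G$. Hence
\[
|Q\cup Z|\;\leq\;|Q|+2(|Q|-1)\;\leq\;3|Q|\;\leq\;3\eta(H)\cdot\mathrm{opt}_{\tDS}(G)\;\leq\;3\eta(H)\cdot\mathrm{opt}_{\tCDS}(G),
\]
which is exactly the claimed approximation ratio.

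The only subtlety, which I would briefly address, is that Proposition~\ref{lem:bb} is stated for a connected graph $G$; if $G$ is disconnected, the problem is handled componentwise (or $\tCDS$ is infeasible), so this is not an obstacle. A second minor point is to note that Proposition~\ref{lem:bb} can be realized constructively in polynomial time: given $Q$ one can greedily pick, for each pair of components, a shortest path in $G$ between them and iteratively merge components, which adds at most $2(\rho-1)$ Steiner-type vertices; this keeps the overall running time polynomial. With these remarks, the argument above delivers the lemma.
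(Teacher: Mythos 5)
Your proposal is correct and is exactly the argument the paper intends: the paper proves Lemma~\ref{lemma:approximationcds} by simply combining Lemma~\ref{lemma:approximation} with Proposition~\ref{lem:bb}, and your chain $|Q\cup Z|\leq 3|Q|\leq 3\eta(H)\cdot\mathrm{opt}_{\tDS}(G)\leq 3\eta(H)\cdot\mathrm{opt}_{\tCDS}(G)$ (using $\rho\leq|Q|$) is precisely the implicit calculation. Your added remarks on disconnected inputs and on realizing Proposition~\ref{lem:bb} constructively in polynomial time are sound and only make the sketch more complete.
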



\section{A New Algorithm for Protrusion Replacement}
\label{sec:gens:protrs}

In the next section we introduce the notion of a ``generalized protrusion''.  Recall that a protrusion in a graph  is a subgraph of constant treewidth which is separated from the rest of the graph by at most a constant number of vertices. In our variant of protrusions, instead of stipulating that the subgraph be of constant treewidth, we ask that it contains a  constant number of vertices from a solution. In this section we show that even if we have a generalized protrusion then we can find a replacement for it efficiently.  
We first start with some relevant definitions and concepts. 

\subsection{Boundaried Graphs} 
\label{subsec:boungrap}
Here we define the notion of {\em boundaried graphs} and various operations on them.
\begin{definition}{\rm [\bf Boundaried Graphs]}\label{def:boungraph}
A \term{boundaried graph} is a graph $G$ with a set $B\subseteq V(G)$ 
of  distinguished vertices and an injective labelling $\mar{\lambda}$ 
from $B$  to the set $\Bbb{Z}^{+}$. The set $B$ is called the \term{{\em boundary}} of $G$ and  the vertices in $B$  are called  {\em boundary vertices} or \term{{\em terminals}}. 
Given a boundaried graph $G,$ we denote its boundary by $\mar{\delta(G)},$
we denote its labelling by $\lambda_G$, 
and we define its {\em label set} by $\mar{\Lambda(G)}=\{\lambda_{G}(v)\mid v\in \delta(G)\}$.
Given a finite set $I\subseteq \Bbb{Z}^{+}$, we define 
$\mar{{\cal F}_{I}}$  to denote the class of all boundaried graphs whose label set is $I$. 
We also denote by $\mar{{\cal F}}$ the class of all boundaried graphs.
Finally we say that a boundaried graph is a {\em $t$-boundaried} graph if $\Lambda(G)\subseteq \{1,\ldots,t\}$.
\end{definition}

%

%

\begin{definition}{\rm [\bf Gluing by $\oplus$]} Let $G_1$ and $G_2$ be two  boundaried graphs. We denote by $G_1 \mar{\oplus} G_2$ the  graph 
(not boundaried) obtained by taking the disjoint union of $G_1$ and $G_2$ and identifying equally-labeled vertices of the boundaries of $G_{1}$ and $G_{2}.$ In $G_1 \oplus G_2$ there is an edge between two vertices if there is  an edge between them either in $G_1$ or in $G_2$, or both.  
\end{definition}

We remark that if $G_1$ has a label which is not present in $G_2$, or vice-versa, then in $G_1 \oplus G_2$ we just forget that label. 

\begin{definition} {\rm [\bf Gluing by $\oplus_\delta$]}
The {\em boundaried gluing operation} $\oplus_{\delta}$ is similar to the normal gluing operation, but results in a boundaried graph rather than a graph. Specifically $G_1 \oplus_\delta G_2$ results in a boundaried graph where the graph is $G = G_1 \oplus G_2$ and a vertex is in the boundary of $G$ if it was in the boundary of $G_1$ or  of $G_2$. Vertices in the boundary of $G$ keep their label from $G_1$ or $G_2$. 
\end{definition}


Let ${\cal G}$ be a class of (not boundaried)  graphs.
By slightly abusing notation we say that a boundaried graph {\em belongs to a graph class ${\cal G}$} if the underlying graph belongs to ${\cal G}.$

\begin{definition}{\rm [\bf Replacement]}\label{defn:replacement}
Let $G$ be a $t$-boundaried graph containing a set $X\subseteq V(G)$
such that $\partial_{G}(X)=\delta(G).$ Let $G_1$ be a $t$-boundaried graph. The result of {\em replacing $X$ with $G_1$} is the graph $G^{\star}\oplus G_{1},$
where $G^{\star}=G\setminus (X\setminus \partial (X))$ is treated as a $t$-boundaried graph with  $\delta(G^{\star})=\delta(G).$
\end{definition}

\subsection{Finite Integer Index}
\label{subsec:finiinteginde}
\begin{definition}{\rm [\bf Canonical equivalence on boundaried graphs.]}
Let $\Pi$ be a parameterized graph problem whose instances are pairs of the form $(G,k).$
 Given two boundaried graphs $G_1,G_2~\in {\cal F},$ we say that \term{$G_1\!\equiv _{\Pi}\! G_2$} if 
$\Lambda(G_{1})=\Lambda(G_{2})$
 and there exists a \term{{\em transposition constant}}
$c\in\Bbb{Z}$ such that 
\begin{eqnarray*}
\forall(F,k)\in {\cal F}\times \Bbb{Z} &&  (G_1 \oplus F, k) \in \Pi \Leftrightarrow (G_2 \oplus F, k+c) \in \Pi.\label{eq:fiidef}
\end{eqnarray*}
Here, $c$ is a function of the two graphs $G_1$ and $G_2$. 
\end{definition}
Note that  the relation $\equiv_{\Pi}$  is
an equivalence relation. Observe that $c$ could be negative in the above definition. This is the reason we allow the parameter in parameterized problem instances to take negative values.

Next  we define a notion of ``transposition-minimality'' for the members 
of  each equivalence class of $\equiv_{\Pi}.$

\begin{definition}{\rm [\bf Progressive representatives]}
\label{def:progrepr}
Let $\Pi$ be a parameterized graph problem whose instances are pairs of the form $(G,k)$
and let ${\cal C}$ be some equivalence class of $\equiv_{\Pi}$. We say that $J\in{\cal C}$ is a \term{{\em progressive 
representative}}
of ${\cal C}$ if for every $H\in{\cal C}$
there exists $c\in\Bbb{Z}^{-},$ such that 
\begin{eqnarray}
\forall(F,k)\in {\cal F}\times \Bbb{Z} \ \ \  (H \oplus F, k) \in \Pi \Leftrightarrow (J\oplus F, k+c) \in \Pi. \label{eq:progfii}
\end{eqnarray}
\end{definition}

The following lemma guarantees the existence of a progressive representative for each equivalence class of 
$\equiv_{\Pi}$. 

\begin{lemma}[\cite{H.Bodlaender:2009ng}]
\label{lem:existprog}
Let $\Pi$ be a parameterized graph problem whose instances are pairs of the form $(G,k)$.
 Then each  equivalence class of $\equiv_{\Pi}$ has a progressive representative.
\end{lemma}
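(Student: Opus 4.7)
The plan is to select $J$ as a ``maximally heavy'' element of the class, so that the transposition constant from any $H\in{\cal C}$ to $J$ is automatically non-positive. Concretely, I would fix an arbitrary representative $G_{0}\in{\cal C}$; by the definition of $\equiv_{\Pi,t}$ every $H\in{\cal C}$ comes equipped with an integer transposition constant $c_{H}$ satisfying $(H\oplus F,k)\in\Pi \Leftrightarrow (G_{0}\oplus F,k+c_{H})\in\Pi$ for all $(F,k)$. If one can choose $J\in{\cal C}$ with $c_{J}=\max_{H\in{\cal C}}c_{H}$, then composing the two equivalences shows that the shift from $H$ to $J$ is $c_{H}-c_{J}\le 0$; this is precisely the constant in $\Bbb{Z}^{-}$ required by (\ref{eq:progfii}).

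The real content is thus reduced to showing that $\{c_{H}:H\in{\cal C}\}\subseteq\Bbb{Z}$ is bounded from above, since a non-empty subset of $\Bbb{Z}$ that is bounded above has a maximum. I would split into two cases. In the degenerate case where no $(F,k)$ satisfies $(G_{0}\oplus F,k)\in\Pi$, the equivalence propagates to every $H\in{\cal C}$, so (\ref{eq:progfii}) is vacuously satisfied with $J=G_{0}$ and $c=0$. In the non-degenerate case I would fix a witness $F_{0}$ for which $(G_{0}\oplus F_{0},k)\in\Pi$ holds for some $k$, and set $k^{\star}:=\min\{k\in\Bbb{Z}^{+}:(G_{0}\oplus F_{0},k)\in\Pi\}$. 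The transposition identity evaluated at $F=F_{0}$, combined with the requirement that the parameter of $\Pi$ lies in $\Bbb{Z}^{+}$, forces the smallest $k$ witnessing $(H\oplus F_{0},k)\in\Pi$ to be $k^{\star}-c_{H}\ge 0$, so $c_{H}\le k^{\star}$ uniformly over $H\in{\cal C}$.

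Once $J$ is chosen to attain this maximum, the progressive property follows by a two-step chain: $(H\oplus F,k)\in\Pi \Leftrightarrow (G_{0}\oplus F,k+c_{H})\in\Pi \Leftrightarrow (J\oplus F,(k+c_{H})-c_{J})\in\Pi$, where the second step is the transposition identity for $J$ read in reverse (it yields $(G_{0}\oplus F,k')\in\Pi\Leftrightarrow(J\oplus F,k'-c_{J})\in\Pi$). By the maximality of $c_{J}$ the overall shift $c_{H}-c_{J}$ is non-positive, as required. I expect the only delicate points to be bookkeeping the signs (choosing to maximise rather than minimise, and inverting the second equivalence in the correct direction) and treating the degenerate case separately; both become routine once the uniform upper bound $k^{\star}$ is in place.
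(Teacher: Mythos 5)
Your proof is correct and is essentially the paper's own argument up to a sign convention: the paper also splits into the vacuous case and the case with a witness $(H_0\oplus F_0,k_0)\in\Pi$, uses the non-negativity of parameters together with the minimal yes-parameter $k_0$ (your $k^{\star}$) to bound all transposition constants, and then extremizes. Since your $c_H$ equals $-c(H)$ in the paper's notation, your choice of $J$ maximizing $c_H$ is exactly the paper's choice of $J$ minimizing $c(H)$, and the final two-step chain giving the non-positive shift $c_H-c_J$ is the same composition of equivalences.
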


Notice that two  boundaried graphs with different label sets belong to 
different equivalence classes of $\equiv_{\Pi}.$ Hence for every equivalence 
class ${\cal C}$ of $\equiv_{\Pi}$ there exists some finite set $I\subseteq\Bbb{Z}^{+}$ such that 
${\cal C}\subseteq  {\cal F}_{I}$. We are now in position  to give the following definition.

\begin{definition}{\rm [\bf Finite Integer Index]}
\label{def:deffii}
A parameterized graph problem $\Pi$ whose instances are pairs of the form $(G,k)$
has {\em Finite Integer Index} (or  is \term{{\em FII}}), if and only if for every finite $I\subseteq \Bbb{Z}^+,$
the number of equivalence classes of  $\equiv_{\Pi}$ that are subsets of ${\cal F}_{I}$
is finite. For each $I\subseteq \Bbb{Z}^{+},$ we define $\mar{{\cal S}_I}$ to be
a set containing exactly one progressive representative of each equivalence class of $\equiv_{\Pi}$
that is a subset of ${\cal F}_{ I}$. We also define $\mar{{\cal S}_{\subseteq I}}=\bigcup_{I'\subseteq I} \mar{{\cal S}_{I'}} $. 
\end{definition}

\subsection{Replacement lemma}
%

We first define a notion of monotonicity for parameterized problems. 

\begin{definition}
We say that a parameterized graph problem $\Pi$ is {\em positive monotone} if for every graph $G$ 
there exists a unique $\ell \in \Bbb{N}$ such that for all $\ell'\in \mathbb{N}$ and $\ell' \geq \ell$, $(G,\ell')\in \Pi$ and for all 
$\ell'\in \mathbb{N}$ and $\ell' < \ell$, $(G,\ell')\notin \Pi$.  A parameterized graph problem $\Pi$ is {\em negative monotone} if for every graph $G$ 
there exists a unique $\ell \in \Bbb{N}$ such that for all $\ell'\in \mathbb{N}$ and $\ell' \geq \ell$, $(G,\ell')\notin \Pi$ and for all 
$\ell'\in \mathbb{N}$ and $\ell' < \ell$, $(G,\ell')\in \Pi$. $\Pi$ is monotone if it is either positive monotone or negative monotone.  
We denote the integer $\ell$ by {\sc Threshold($G,\Pi$)} (in short  {\sc Thr($G,\Pi$)}). 
\end{definition}

We first give an intuition for the next definition.  We are considering monotone functions and thus for every graph $G$ 
there is an integer $k$ where the answer flips. However, for our purpose we need a corresponding notion for 
boundaried graphs.   If we think of the representatives as some ``small perturbation'' , then it is the max threshold over all small perturbations (``adding a representative = small perturbation''). This leads to the following definition. 

\begin{definition}
Let $\Pi$ be a monotone parameterized graph problem that has FII. Let  ${\cal S}_t$  be
a set containing exactly one progressive representative of each equivalence class of $\equiv_{\Pi}$ that is a subset of 
${\cal F}_{I}$, where $I=\{1,\ldots,t\}$.  
For a $t$-boundaried graph $G$, we define   
\[\iota(G)= \max_{G'\in {\cal S}_t}  \mbox{{\sc Thr($G\oplus G',\Pi$)}}. \]
\end{definition}


The next lemma says the following. Suppose we are dealing with some FII problem and we are given a boundaried graph with constant size boundary.  We know it has some constant size representative and we want to find this representative. 
Now in general finding a representative for a boundaried graph is more difficult than solving the corresponding problem
The next lemma says basically that if ``OPT'' of a boundaried graph is low, then we can efficiently find its representative. 
Here by ``OPT''  we mean $\iota(G)$, which is a robust version of the threshold function (under adding a representative). 
And by efficiently we mean as efficiently as solving the problem on normal (unboundaried) graphs if we know that ``OPT'' is low. Specifically, the main result of this section is as follows. 

\begin{lemma}
\label{lem:red2finiteindex}
Let $\Pi$ be a monotone parameterized graph problem that has FII. Furthermore, let $\cal A$ be an 
algorithm for $\Pi$ that, given a pair $(G,k)$, decides whether it is in $\Pi$ in time $f(|V(G)|,k)$. 
Then for every $t\in\Bbb{N},$ there exists a $ \xi_t \in\Bbb{Z}^{+}$ (depending on $\Pi$ and $t$), and 
an algorithm that, given a $t$-boundaried graph $G$  with $|V(G)|>\xi_t,$ outputs, in  $\cO(\iota(G)(f(|V(G)|+\xi_t,\iota(G)))$ steps,
a $t$-boundaried graph $G^*$  such that $G\equiv_{\Pi}G^*$ and  $|V(G^*)| < \xi_t$. Moreover we can compute the translation 
constant  $c$ from $G$ to $G^*$ in the same time.
\end{lemma}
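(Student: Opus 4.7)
I would set $\xi_t := 1 + \max_{G'\in\mathcal{S}_t}|V(G')|$, which is a well-defined positive integer depending only on $\Pi$ and $t$ because FII makes $\mathcal{S}_t$ finite. Given an input $G$ with $|V(G)|>\xi_t$, FII guarantees a unique $G^*\in\mathcal{S}_t$ with $G\equiv_{\Pi,t}G^*$, and by construction $|V(G^*)|<\xi_t$. The task thus reduces to identifying this $G^*$ and the accompanying transposition constant $c$, using only calls to $\mathcal{A}$ on graphs of the form $G\oplus G'_j$ with $G'_j\in\mathcal{S}_t$ rather than over all test graphs $F\in\mathcal{F}_t$.

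\textbf{Signature characterization (key step).} For each $G\in\mathcal{F}_t$ assign the signature vector $\tau_j(G):=\mbox{\sc Thr}(G\oplus G'_j)$, indexed by $G'_j\in\mathcal{S}_t$; by definition of $\kappa$ we have $\tau_j(G)\le\kappa(G)$ for every $j$. The core lemma I need is: for $G^*\in\mathcal{S}_t$, $G\equiv_{\Pi,t}G^*$ with transposition constant $c$ if and only if $\tau_j(G^*)-\tau_j(G)=c$ for every $j$. The forward direction is immediate from the definition of $\equiv_{\Pi,t}$ specialized to $F=G'_j$, converting set membership into threshold comparisons via positive monotonicity (the negative-monotone case is symmetric). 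For the converse, pick any $F\in\mathcal{F}_t$; since $\mathcal{S}_t$ contains a representative of every class, FII supplies some $G'_j\in\mathcal{S}_t$ and a shift $d\in\Bbb{Z}$ with $F\equiv_{\Pi,t}G'_j$. Plugging the host graph equal to $G$ and then to $G^*$ into this equivalence and applying monotonicity gives $\mbox{\sc Thr}(G\oplus F)=\tau_j(G)-d$ and $\mbox{\sc Thr}(G^*\oplus F)=\tau_j(G^*)-d$; subtracting, the auxiliary shift $d$ cancels and $\mbox{\sc Thr}(G^*\oplus F)-\mbox{\sc Thr}(G\oplus F)=c$ is independent of $F$, which (interpreted via monotonicity as $(H,k)\in\Pi \Leftrightarrow k\ge \mbox{\sc Thr}(H)$) is exactly $G\equiv_{\Pi,t}G^*$ with shift $c$.

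\textbf{Algorithm, running time, and main obstacle.} For each of the constantly many $G'_j\in\mathcal{S}_t$, I compute $\tau_j(G)$ by running $\mathcal{A}$ on $(G\oplus G'_j,k)$ for $k=0,1,\ldots,\kappa(G)$ and recording where membership flips; this is legitimate because $\tau_j(G)\le\kappa(G)$. Since $|V(G\oplus G'_j)|\le |V(G)|+\xi_t$, each call costs at most $f(|V(G)|+\xi_t,\kappa(G))$, and $O(\kappa(G))$ calls per $j$ give a total of $O(\kappa(G)\cdot f(|V(G)|+\xi_t,\kappa(G)))$ as required. The values $\tau_j(G^*)$ for $G^*,G'_j\in\mathcal{S}_t$ depend only on $\Pi$ and $t$ and are pre-tabulated constants of the algorithm. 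Finally, I scan $G^*\in\mathcal{S}_t$ and return the unique one for which $\tau_j(G^*)-\tau_j(G)$ is constant in $j$, together with that constant as $c$; the characterization lemma certifies both existence and correctness. The delicate point is the converse direction of the signature characterization: collapsing the universal quantifier over $F\in\mathcal{F}_t$ in the definition of $\equiv_{\Pi,t}$ into a finite test against $\mathcal{S}_t$ so that the auxiliary shift $d$ cancels coherently against both $G$ and $G^*$. FII supplies the reduction $F\equiv_{\Pi,t}G'_j$ and monotonicity turns the abstract set-membership equivalence into a scalar threshold identity; together they force the per-$F$ differences to agree on a single constant $c$.
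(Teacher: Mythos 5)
Your proposal is correct and takes essentially the same route as the paper: fix the finite set $\mathcal{S}_t$ of progressive representatives with the precomputed threshold matrix $A[Y_i,Y_j]=\mbox{{\sc Thr}}(Y_i\oplus Y_j)$, compute the threshold row vector of $G$ against the representatives by incrementing $k$ up to $\kappa(G)$ (giving the stated $O(\kappa(G)\cdot f(|V(G)|+\xi_t,\kappa(G)))$ bound), and return the representative whose row is a translate of $G$'s, with the shift as the transposition constant. Your signature-characterization lemma (in particular the converse, where the representative $G'_j\equiv_{\Pi,t}F$ makes the auxiliary shift $d$ cancel) is a careful justification of the step the paper leaves implicit with ``from here it easily follows that $G\equiv_{\Pi,t}G^*$,'' and your choice $\xi_t=1+\max_{G'\in\mathcal{S}_t}|V(G')|$ even fixes the strict inequality $|V(G^*)|<\xi_t$ that the paper's $\xi_t=\max_{Y\in\mathcal{S}_t}|Y|$ only achieves non-strictly.
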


\begin{proof}
We give prove the claim for positive monotone problems $\Pi$; the proof for negative monotone problems is identical. 
 Let  ${\cal S}_t$  be
a set containing exactly one progressive representative of each equivalence class of $\equiv_{\Pi}$ that is a subset of 
${\cal F}_{I}$, where $I=\{1,\ldots,t\}$, and  let  $\xi_t=\max_{Y\in {\cal S}_{t}}|V(Y)|.$ The set  ${\cal S}_{t}$ is hardwired in the description of the algorithm. 
 Let $Y_1,\ldots,Y_\rho$ be the set of progressive representatives in ${\cal S}_{t}$. Let ${\cal F}_{t}={\cal F}_{I}$. Our objective is to find 
  a representative $Y_\ell$  for  $G$ such that 
 \begin{eqnarray}
\forall (F,k)\in {\cal F}_{t}
\times \Bbb{Z} & &   (G \oplus F, k) \in \Pi  \Leftrightarrow    (Y_\ell \oplus F, k-\vartheta(X,Y_\ell)) \in \Pi. 
\label{eq:progresivereplacement}
\end{eqnarray}
Here, $\vartheta(X,Y_\ell)$ is a constant  that depends on $G$ and $Y_\ell$.  Towards this   
we define the following matrix for the set of representatives. Let 
$$A[Y_i, Y_j]=  \mbox{{\sc Thr($Y_i\oplus Y_j,\Pi$)}}$$
The matrix $A$ has constant size and is also hardwired in the description of the algorithm. 

Now given $G$ we find its representative as follows. 
\begin{itemize}
\item Compute the following row vector ${\cal X}=[ \mbox{{\sc Thr($G\oplus Y_1,\Pi$)}}, \ldots ,  
\mbox{{\sc Thr($G\oplus Y_\rho,\Pi$)}})]$. For each $Y_i$ we decide whether $(G\oplus Y_i,k)\in \Pi$ using the assumed algorithm for deciding 
$\Pi$,  letting $k$ increase from $1$ until the first time $(G\oplus Y_i,k)\in \Pi$. Since $\Pi$ is positive monotone this will happen for some 
$k\leq \iota(G)$. Thus the total time to compute the vector ${\cal X}$ is $\cO(\iota(G)(f(|V(G)|+\xi_t,\iota(G)))$. 

\item Find a translate row in the matrix $A(\Pi)$. That is, find an integer $n_o$ and a representative 
$Y_\ell$ such that  
\begin{eqnarray*}
[ \mbox{{\sc Thr($G\oplus Y_1,\Pi$)}},  \mbox{{\sc Thr($G\oplus Y_2,\Pi$)}}, \ldots ,  
\mbox{{\sc Thr($G\oplus Y_\rho,\Pi$)}}] \\
=[ \mbox{{\sc Thr($Y_\ell\oplus Y_1,\Pi$)}}+n_0,  \mbox{{\sc Thr($Y_\ell\oplus Y_2,\Pi$)}}+n_0, \ldots ,  
\mbox{{\sc Thr($Y_\ell\oplus Y_\rho,\Pi$)}}+n_0]
\end{eqnarray*}
Such a row must exist since ${\cal S}_t$ is  a set of representatives for $\Pi$; the representative $Y_\ell$ for the equivalence class to which $G$ belongs, satisfies the condition.  
\item Set $Y_\ell$ to be $G^*$ and the translation constant to be $-n_0$.
\end{itemize}
From here it easily follows that $G\equiv_{\Pi}G^*$. This completes the proof.  
\end{proof}
 We remark that the algorithm whose existence is guaranteed by the Lemma~\ref{lem:red2finiteindex} assumes that the set  ${\cal S}_{t}$ of representatives  are hardwired in the algorithm.  In its full generality we currently don't known of a procedure that for problems having FII outputs such a representative set. The application of Lemma~\ref{lem:red2finiteindex}  makes our kernelization algorithm non-constructive.

\section{Generalized  Protrusions and Slice-Decomposition}\label{sec:slicedecs}
%

In this section our objective is to show that in polynomial time we can  
partition the graph $G$  into parts that satisfy certain properties.  
 To obtain  our decomposition we need to use a more general  notion of 
protrusion. Recall that a protrusion in a graph  is a subgraph of constant treewidth which is separated from the rest of the graph by at most a constant number of vertices. In our variant of protrusions, instead of stipulating that the subgraph be of constant {\em treewidth}, we ask that it contains a {\em constant number of vertices from a solution}. More precisely, we need the following kind of protrusions. 
\begin{definition}{\rm [\bf $r$-{\sc DS}-protrusion]} 
 Given a graph $G$, we say that a set $X\subseteq V(G)$ is an {\em $r$-{\sc DS}-protrusion} of $G$ if 
   the number of vertices in $X$ with a neighbor in $V(G)\setminus X$ is at most $r$ and there exists a 
   subset $S\subseteq X$ of size at most $r$ such that  $S$ is a dominating set of $G[X]$. 
\end{definition}

%

The notion of a $r$-{\sc DS}-protrusion $X$ differs from a protrusion in the following way. In a 
protrusion  $\tw(X)$ is at most $r$, while in a $r$-{\sc DS}-protrusion we require that the dominating set of the graph induced by  $X$ is small. In the case of a $r$-{\sc DS}-protrusion, the treewidth could be unbounded. One can similarly define the notion of a $r$-$\Pi$-protrusion for other graph problems $\Pi$. Next we define a 
$r$-{\sc CDS}-protrusion. 

\begin{definition}{\rm [\bf $r$-{\sc CDS}-protrusion]} 
 Given a graph $G$, we say that a set $X\subseteq V(G)$ is an {\em $r$-{\sc CDS}-protrusion} of $G$ if 
   the number of vertices in $X$ with a neighbor in $V(G)\setminus X$ is at most $r$ and there exists a 
   subset $S\subseteq X$ of size at most $r$ such that  for every connected component $C$ of $G[X]$ we have that 
    $S\cap C$ is connected and is a dominating set for $C$. 
    \end{definition}

A natural question is what can we do if we get a large $r$-{\sc DS}-protrusion (or $r$-{\sc CDS}-protrusion).  The next lemma shows that in that case we can replace it with an equivalent small graph. 
We will also need the following. Let ${\cal G}$ be a graph class.  Given a  parameterized graph problem  $\Pi$ and a graph class ${\cal G},$ 
we denote by $\Pi\doublecap {\cal G}$ 
the problem obtained by 
removing from $\Pi$ all instances that 
encode graphs that do not belong to ${\cal G}.$  Our result is as follows.



\begin{lemma}
\label{lem:fiidomset}
Let $H$ be a fixed graph. For  every $t\in\Bbb{Z}^{+},$ there exist a $\xi_t\in\Bbb{Z}^{+}$ (depending on \tDS \, (\tCDS), $t$ and $H$),  and 
an algorithm $\cal A$ such that given a $t$-{\sc DS}-protrusion $X$ ($t$-{\sc CDS}-protrusion) with boundary  
$\partial(X)$, $|V(X)|>\xi_t,$ and $H\not\preceq_T X$, 
%
$\cal A$ outputs in   $\cO(|V(X)|)$ time ($|V(X)|^{\cO(1)})$ time),  
a $t$-boundaried graph  $X'$ such that  $H\not\preceq_T X'$ ($H\not\leq_m X'$) and  $X\equiv_{\tDS}X'$ ($X\equiv_{\tCDS}X'$) and $|V(X')|\leq \xi_t$. Moreover in the same time 
we can also find the translation constant $c$ from $X$  to $X'$.

\end{lemma}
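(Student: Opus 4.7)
The plan is to invoke Lemma~\ref{lem:red2finiteindex} with $\Pi=\tDS$ (respectively $\Pi=\tCDS$), restricted to the class of $t$-\tDS-protrusions (resp.\ $t$-\tCDS-protrusions) that exclude $H$ as a topological minor. This reduces the task to verifying (i) monotonicity, (ii) finite integer index on this class, and (iii) the existence of a sufficiently fast subroutine deciding the problem. Monotonicity is immediate: both problems are positive monotone, since a (connected) dominating set of size $k$ trivially yields one of size $k+1$ (for \tCDS, by extending along a spanning walk), so {\sc Thr} is well defined.

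For the FII step I would attach to each $t$-boundaried graph $X$ a \emph{signature} recording the optimal ``inside cost'' for every possible interaction of $F$ with $X$ through $\delta(X)$. Concretely, for \tDS, for every pair $(A,U)\in 2^{\delta(X)}\times 2^{\delta(X)}$ with $A\cap U=\emptyset$, define
\[
\mu_{A,U}(X) \;:=\; \min\bigl\{\, |D|\;:\; D\subseteq V(X),\; D\cap \delta(X)=A,\; V(X)\setminus U\subseteq N_X[D]\,\bigr\},
\]
with $\mu_{A,U}(X)=\infty$ if no such $D$ exists. Here $A$ is the set of boundary vertices placed in the solution while $U$ lists the undominated boundary vertices that will be covered from the outside by $F$. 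Because $X$ is a $t$-\tDS-protrusion, every finite entry lies in $\{0,1,\ldots,2t\}$ and there are at most $4^t$ pairs $(A,U)$, so the number of distinct signatures is bounded by a function of $t$. A standard exchange argument then shows that if the signatures of $X_1$ and $X_2$ agree up to a common additive shift $c$, then $X_1\equiv_{\tDS,t} X_2$ with transposition constant $c$: for every $F$, an optimum dominating set of $X_i\oplus F$ induces some interaction pattern $(A,U)$, its total cost splits into an inside part $\mu_{A,U}(X_i)$ and an outside part depending only on $F$, $A$, and $U$, and the two sums differ by exactly $c$. This yields FII.

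For the running-time bound, observe that $\kappa(X)=\mathcal{O}(t)$: the representatives in $\mathcal{S}_t$ have constant size and $X$ itself admits a dominating set of size at most $t$, so every gluing $X\oplus Y_i$ has a dominating set of size $\mathcal{O}(t)$. Since graphs in $\mathcal{C}_H$ are of bounded degeneracy, \tDS \, admits an FPT algorithm running in $f(k)\cdot n$ time parameterized by solution size on this class; plugging this into Lemma~\ref{lem:red2finiteindex} produces both the reduced graph $X^\prime$ and the translation constant $c$ in $\mathcal{O}(|X|)$ time. For \tCDS, a polynomial-time subroutine suffices (e.g.\ brute force over subsets of size $\mathcal{O}(t)$ combined with a connectivity check), which yields the $|X|^{\mathcal{O}(1)}$ bound.

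The main obstacle is the FII proof for \tCDS, where the signature must additionally track a partition of $A$ describing which boundary vertices lie in the same connected component of the inside solution, together with a bit indicating whether the inside solution contains a component disjoint from $\delta(X)$ that must be connected through $F$. The number of such enriched signatures is still bounded by a function of $t$ (at most $4^t\cdot B(t)\cdot 2^t$, where $B(t)$ is the Bell number), but the correctness argument is more delicate: a connected dominating set of $X\oplus F$ need not restrict to a connected dominating set on either side, so the exchange step must carefully track how components of $D\cap V(X)$ are merged through $D\cap V(F)$ so as to guarantee global connectedness.
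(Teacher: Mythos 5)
Your top-level skeleton matches the paper's proof exactly: the paper also reduces Lemma~\ref{lem:fiidomset} to Lemma~\ref{lem:red2finiteindex} by checking monotonicity, FII, and the existence of a fast decision procedure, and it also uses the protrusion property to bound the relevant parameter value ($\kappa(X)=\cO(t)$, since $X$ glued to a constant-size representative has a small (connected) dominating set). The difference is that the paper does \emph{not} prove FII at all: it simply cites \cite{H.Bodlaender:2009ng} for the fact that \tDS\ and \tCDS\ have finite integer index, and cites the FPT algorithms of Alon--Gutner ($\cO(hk)^{hk}n$ for \tDS) and Golovach--Villanger ($k^{\cO(h^2)k}n^{\cO(1)}$ for \tCDS) as the subroutine $\cal A$ of Lemma~\ref{lem:red2finiteindex}; this is also where the $\cO(|X|)$ versus $|X|^{\cO(1)}$ running times come from. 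Your replacement of the citation by a self-contained signature argument is where the trouble lies.

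The concrete gap is your claim that ``every finite entry lies in $\{0,1,\ldots,2t\}$'' for the \tDS-signature $\mu_{A,U}$. This is false, and it is the claim doing all the work in bounding the number of signatures. Take a $t$-{\sc DS}-protrusion $X$ in which a boundary vertex $v$ has $m$ pendant private neighbors (all non-boundary): $X$ still has a dominating set of size at most $t$ (containing $v$), so it is a legitimate $t$-{\sc DS}-protrusion, but for any pattern $(A,U)$ with $v\notin A$ every dominating set $D$ with $D\cap\delta(X)=A$ must contain all $m$ pendants, so $\mu_{A,U}(X)=m$ is finite yet unbounded. Hence the number of distinct signatures over the protrusion class is \emph{not} bounded by a function of $t$ as written, and your FII conclusion does not follow. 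The standard repair is to normalize (subtract the minimum entry) and truncate entries above a threshold depending on $t$, arguing that patterns whose cost exceeds the minimum by more than the truncation bound are never used in an optimal solution of $X\oplus F$; neither the truncation nor this never-used argument appears in your proposal. A second, smaller point: Lemma~\ref{lem:red2finiteindex} as stated assumes FII over \emph{all} $t$-boundaried graphs with ${\cal S}_t$ a full representative set, whereas your signatures (even once repaired) only bound the number of classes met by $t$-{\sc DS}-protrusions, so you would additionally need to remark that a finite representative set for this subclass suffices for the row-translate identification step --- harmless, but it is a deviation from the lemma you invoke, and the paper avoids it entirely by using the global FII result. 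Your monotonicity check and the $\kappa(X)=\cO(t)$ bound are fine and coincide with the paper's reasoning; for \tCDS\ note also that your brute-force subroutine gives exponent depending on $t$, where the paper's cited algorithm keeps the exponent of $|X|$ independent of $t$.
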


\begin{proof}
Let $\cal G$ be the class of  graphs that excludes $H$ as a topological minor. 
For every $t\in\Bbb{Z}^{+}$ let $\xi_t$ be the constant as defined in Lemma~\ref{lem:red2finiteindex}. It is known 
that  \tDS$\doublecap {\cal G}$ (\tCDS $\doublecap {\cal G}$) is FII\cal~\cite{H.Bodlaender:2009ng} and monotone (see~\cite[Lemmas 7.3 and 7.4]{H.Bodlaender:2009ng}). Furthermore, \tDS \, and \tCDS \, can be solved in time  $\cO((hk)^{hk} n)$~\cite[Theorem 4]{AlonG09} 
and  $\cO(k^{\cO(h^2)k} n^{\cO(1)})$~\cite[Theorem 1]{GolovachV08}  respectively.  Here, $h=|V(H)|$ and $k$ is the parameter in the definitions of  \tDS \, and \tCDS. We  use these algorithms in Lemma~\ref{lem:red2finiteindex} with the parameter value being $r$. That is, $k:=r$. 
Thus, if $|V(X)|>\xi_t$  then 
by  Lemma~\ref{lem:red2finiteindex} in time  $\cO(|V(X)|)$  ($|V(X)|^{\cO(1)}$), we can obtain a 
$t$-boundaried graph $X'$  such that $X\equiv_{\tDS}X'$ ($X\equiv_{\tCDS}X'$),  $|V(X')| < \xi_t$ and 
$H\not\preceq_T X'$.   The last assertion that  $H\not\preceq_T X'$ follows from the fact that  \tDS$\doublecap {\cal G}$ is FII and thus all the graphs in the set of representatives with respect to $\equiv_{\tDS}$ belong to ${\cal G}$. 
Moreover, in the same time 
we can also find the translation constant $c$ from $X$  to $X'$ as done in Lemma~\ref{lem:red2finiteindex}. 

Let ${\cal G}^\star$  be the class of  graphs that excludes a fixed graph $H$ as a minor. It is known 
that  \tDS$\doublecap {\cal G}^\star$ (\tCDS $\doublecap {\cal G}^\star$) is FII\cal~\cite{H.Bodlaender:2009ng} and monotone. Thus, as in the case of $\cal G$, we can  obtain a 
$t$-boundaried graph $X'$  such that $X\equiv_{\tDS}X'$ ($X\equiv_{\tCDS}X'$),  $|V(X')| < \xi_t$ and 
$H\not\leq_m X'$.
\end{proof}




%
\begin{center}
\fbox{
\parbox{.95\textwidth}{
   Throughout this section we work on a graph $G$  that excludes a fixed graph $H$ as a topological minor. Here, $h$ will denote $|V(H)|$. 
   
   Furthermore, we assume that we have a (connected) dominating set $D$ such that the size of $D$ is at most $\eta(H)$-factor away ($3\eta(H)$-factor away) from the size of an optimal (connected) dominating set of $G$, obtained by applying Lemma~\ref{lemma:approximation}  (Lemma~\ref{lemma:approximationcds}) on the input graph $G$.   }
}
  \end{center}

Let $(M,\Psi)$ be a tree-decomposition of a graph $G$. For a subtree $M_i$ of $M$, we define ${\cal E}(M_i)$ as the set of edges in $M$ such that it has exactly one endpoint in $V(M_i)$. Furthermore we define $R_i^+= \Psi(M_i)$ and 
\begin{center}
$\tau(M_i):=  G[R_i^+] \cup \bigcup_{e\in {\cal E}(M_i)} E(K[\kappa(e)])$.
\end{center}
In plain words, $R_i^+$ denotes the union of bags corresponding to the nodes in $M_i$ and $\tau(M_i)$ is the graph induced on  $R_i^+$ with ``external adhesions'' being torsoed. 

Our main objective in this section is to obtain the following $(\alpha,\beta)${\em-slice decomposition} for $\alpha=\beta=\cO(k)$. 

\begin{definition}{\rm [{\bf $(\alpha,\beta)$-slice decomposition]}}
Let $H$ be a fixed graph and let $G$ be a graph with $H\not\preceq_T G$.  Let  $(M,\Psi)$ be the 
tree-decomposition given by Theorem~\ref{thm:structure theorem}. An $(\alpha,\beta)${\em-slice decomposition} of a graph $G$
is a collection $\cal P$ of pairwise vertex disjoint  subtrees $\{M_1,\ldots,M_\alpha\}$  of $M$ such that the 
following hold. 

\begin{itemize}
\item $\bigcup_{1\leq i \leq \alpha}  \Psi(M_i) = \bigcup_{1\leq i \leq \alpha} \left(\bigcup_{t\in V(M_i)} \Psi(t) \right) =V(G)$
\item There exists a graph $H^*$ whose size only depends on $h$, such that each $\tau(M_i)$ is either $H^*$-minor-free
or  has at most $h$ vertices of degree at least $h$.

\item $$\sum_{i=1}^{\alpha} \left(\sum_{e\in {\cal E}(M_i)} |\kappa(e)|\right) \leq \beta.$$

\end{itemize}
We refer to the sets $R^{+}_{i},$ $i\in\{1,\ldots,\alpha \},$  as the {\em slices} of ${\cal P}.$
\end{definition}

Essentially, the slice decomposition allows us to  partition the input graph $G$ into subgraphs $C_0, C_1, \dots, C_\ell$, such that 
 $|C_0|=\cO(k)$; for every $i\geq 1$, the neighbourhood $N(C_i)\subseteq C_0$,  and $\sum_{1\leq i\leq \ell} |N(C_i)|=\cO(k)$. To see this consider an instance $(G,k)$ of $\tDS$, where $G$ excludes a fixed graph $H$ as a topological minor.  Now obtain an   $(\alpha,\beta)$-slice decomposition for $\alpha=\beta=\cO(k)$ for $G$. We take 
 $$C_0 = \bigcup_{i=1}^{\alpha} \left(\cup_{e\in {\cal E}(M_i)} \kappa(e)\right),$$
and $C_i= \Psi(M_i) \setminus C_0$. One can easily verify that this partition of $V(G)$ satisfies the stated properties.  
 This is the decomposition we were talking about in the introduction. 
 
 Now we give a definitions that is useful in our procedure to find  the slice decomposition. 
\begin{definition}
Let $(M,\Psi)$ be the tree-decomposition of a graph $G$ given by  Theorem~\ref{thm:structure theorem}. For a subset 
$Q\subseteq V(G)$ and a subtree $M'$ of $M$ we define $\mu(M',Q)=|\Psi(M')\cap Q|$.  
\end{definition}

Let $(M,\Psi)$ be the tree-decomposition of a graph $G$ given by  Theorem~\ref{thm:structure theorem}.   
If we delete an edge $e=uv\in E(M)$ from the tree $M$ then 
we get two trees. We call the {\em trees as $M_u$ and $M_v$ based on whether they contain $u$ or $v$. } 

\begin{definition}
Let $(M,\Psi)$ be the tree-decomposition of a graph $G$ given by  Theorem~\ref{thm:structure theorem} and $D$ be the assumed dominating (connected) set of $G$.  We call a tree edge $e=uv\in E(M)$ 
{\em heavy} if $\mu(M_u,D)\geq h+1$ and $\mu(M_v,D)\geq h+1$. We use $\cal F$ to denote the set of heavy edges.  
\end{definition}



Our main lemma in this section shows that in polynomial time we can find an $(\cO(k),\cO(k))$-slice decomposition or 
a large $r$-{\sc DS}-protrusion (or $r$-{\sc CDS}-protrusion) or a large protrusion. In the latter cases we can apply either 
Lemma~\ref{lem:fiidomset} or a similar lemma developed in~\cite[Lemma~7]{H.Bodlaender:2009ng} for protrusions and reduce the graph.  

%

Before we prove the main result of this section, we prove some combinatorial properties of the set $\cal F$.  Given $\cal F$, by {\em subgraph of $M$ formed by the edges in $\cal F$} we mean a subgraph of $M$ whose vertex set consists of the end points of edges in $\cal F$ and the edge set is $\cal F$. 

\begin{lemma}
Let $M^*$ be the subgraph of $M$ formed by the edges in $\cal F$. Then $M^*$ is a subtree of $M$.
\end{lemma}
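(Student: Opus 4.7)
The plan is to exploit the fact that $M^*$ is already a subgraph of the tree $M$, so it is automatically acyclic; the only thing to verify is that $M^*$ is connected. Equivalently, I will show that for any two marked edges $e_1, e_2 \in \mathcal{F}$, every edge on the unique path in $M$ joining (an endpoint of) $e_1$ to (an endpoint of) $e_2$ lies in $\mathcal{F}$. Closure under taking intermediate edges on the unique $M$-path gives a connected (hence sub-tree) structure.

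The main ingredient is a simple monotonicity observation for the measure $\mu(\cdot, D)$. First, I would fix two marked edges $e_1 = x_1 y_1$ and $e_2 = x_2 y_2$, orient each so that $y_1$ and $x_2$ are the endpoints that are closest to each other on the path in $M$, and pick an arbitrary edge $e = ab$ strictly between them (the boundary cases $e = e_1$ or $e = e_2$ are trivial). Removing $e$ from $M$ yields two subtrees $M_a \ni a$ and $M_b \ni b$, with $e_1$ lying entirely in $M_a$ and $e_2$ lying entirely in $M_b$. I would then argue that $M_{x_1} \subseteq M_a$: any vertex $v \in M_{x_1}$ has its path to $b$ in $M$ passing through $e_1$ and then onwards through $e$, so in $M - e$ it stays on the $a$-side. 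Consequently
\[
\mu(M_a, D) \;\geq\; \mu(M_{x_1}, D) \;\geq\; h+1,
\]
the last inequality because $e_1$ was marked. A symmetric argument with $e_2$ gives $\mu(M_b, D) \geq \mu(M_{y_2}, D) \geq h+1$. Therefore $e$ satisfies the heaviness condition and is in $\mathcal{F}$.

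Combining these two facts, the set of marked edges together with their endpoints forms a connected subgraph of $M$, which, as a connected subgraph of a tree, is a subtree; this is exactly the statement that $M^*$ is a subtree of $M$. The only place where care is required is the bookkeeping of which side of a removed edge contains which endpoints of $e_1$ and $e_2$; once the orientation convention is fixed, the monotonicity $M_{x_1} \subseteq M_a$ (and symmetrically for $e_2$) follows directly from the fact that $M$ is a tree and that the path from $x_1$ to $b$ traverses both $e_1$ and $e$. I do not anticipate a genuine obstacle here beyond this bookkeeping, since the argument only uses the tree structure of $M$ and the definition of a heavy edge.
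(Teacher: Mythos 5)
Your proof is correct and is essentially the paper's argument in contrapositive form: the paper assumes two components of $M^*$ and derives a contradiction by showing a light edge $xy$ on a path between two heavy edges must itself be heavy, using exactly your monotonicity observation that a heavy edge lying entirely on one side of the cut at $xy$ forces that side to have $\mu(\cdot,D)\geq h+1$. Your direct ``path-closure'' phrasing is a clean equivalent reformulation, not a different route, and both arguments are complete.
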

\begin{proof}
Clearly, $M^*$ is a forest as it is a subgraph of $M$.  To complete the proof we need to show that it is connected. We prove 
this using contradiction.  Suppose $M^*$ is a forest and  $M_i^*$ and $M_j^*$, $i\neq j$, are two maximal subtrees in $M^*$. Then we know that there exists a path $P$ in $M$ 
such that the first and the last edges are heavy and the path $P$ contains a light edge. 
Furthermore, we can assume that the first edge, say
$u_iv_i$, belongs to $M_i^*$ and the last edge, say  $u_jv_j$ belongs to $M_j^*$. Let a light edge on the path be $xy$. Now when we delete the 
edge $xy$ from $M$ we get two trees $M_x$ and $M_y$. We know that either $M_i^*\subseteq M_x$ and $M_j^*\subseteq M_y$ or vice versa. 
Suppose $M_i^*\subseteq M_x$ and $M_j^*\subseteq M_y$. Since $M_i^*$ contains the heavy edge $u_iv_i$ we have that 
$\mu(M_x,D)\geq h+1$. Similarly we can show that $\mu(M_y,D)\geq h+1$. This shows that $xy$ is a heavy edge, contradicting that $xy$ is light. One can similarly argue that $xy$ is a heavy edge  when  $M_i^*\subseteq M_y$ and $M_j^*\subseteq M_x$. This contradicts our assumption that $M^*$ is not a subtree of $M$. This completes the proof of the lemma.
\end{proof}

For our next proof we first give some well known observations about trees. Given a tree $T$, 
we call a node {\em leaf}, {\em link} or {\em branch} if its degree 
in $T$ is $1$, $2$ or $\geq 3$ respectively. Let
$S_{\geq 3 }(T)$ be the set of branch nodes, $S_{2 }(T)$
be the set of link nodes and $L(T)$ be the set of leaves in the
tree $T$. Let $\mathscr{P}_2(T)$ be the set of maximal paths
consisting entirely of link nodes.

\begin{fact}
\label{fact:simplecounta}
 $|S_{\geq 3 }(T)| \leq |L(T)|-1$.
\end{fact}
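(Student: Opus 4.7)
The plan is to prove this fact by a direct degree-counting argument, which is the standard textbook approach.

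I set $\ell = |L(T)|$, $s_2 = |S_2(T)|$, and $s_3 = |S_{\geq 3}(T)|$. By definition these three sets partition $V(T)$, so $|V(T)| = \ell + s_2 + s_3$, and since $T$ is a tree, $|E(T)| = \ell + s_2 + s_3 - 1$. The approach is then to combine the handshake identity $\sum_{v \in V(T)} \deg_T(v) = 2|E(T)|$ with the trivial lower bound on the degree sum coming from the vertex classification: each leaf contributes exactly $1$, each link node exactly $2$, and each branch node at least $3$. This yields
\[
\ell + 2 s_2 + 3 s_3 \;\leq\; 2(\ell + s_2 + s_3 - 1),
\]
which rearranges to $s_3 \leq \ell - 2$, a bound that is even slightly stronger than the one stated.

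The only subtlety is the degenerate case of a single-vertex tree, in which $\ell = s_3 = 0$ and the stated inequality $0 \leq -1$ fails; the fact therefore implicitly presumes $|V(T)| \geq 2$, which is in any case the only regime relevant to the paper's subsequent use of the subtree $M^{*}$ of the decomposition tree $M$. I anticipate no real obstacle; this is a classical exercise about trees and the entire argument fits in a couple of lines.
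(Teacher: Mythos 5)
Your proof is correct, and in fact the paper gives no proof at all for this statement: Fact~\ref{fact:simplecounta} is stated as a well-known observation about trees (only Fact~\ref{fact:simplecountb} receives an argument, via an injective association of each maximal link-path with a child node), so your degree-counting argument simply supplies the missing standard proof. The computation checks out: with $|V(T)| \geq 2$ the leaf/link/branch classes partition $V(T)$, and $\ell + 2s_2 + 3s_3 \leq 2(\ell + s_2 + s_3 - 1)$ yields $s_3 \leq \ell - 2$, which is sharper than the stated bound (tight, e.g., for the star $K_{1,3}$). You are also right to flag the single-vertex tree as the one degenerate case (its lone vertex has degree $0$, so it lies in none of the three classes and the inequality fails); this is harmless for the paper, since the bound is only invoked for the marked subtree $M^{*}$ in Lemma~\ref{lemma:boundingleavesandpaths}, where the relevant regime has at least one (heavy) edge. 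One could alternatively prove the fact by an injection in the spirit of the paper's proof of Fact~\ref{fact:simplecountb} --- rooting $T$ at a leaf and mapping each branch node to a leaf of a subtree it spawns --- but that buys nothing here; your two-line handshake argument is the cleaner route.
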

 
\begin{fact}
\label{fact:simplecountb}
 $|\mathscr{P}_2(T)| \leq 2 |L(T)|-1$.
\end{fact}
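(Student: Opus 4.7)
\textbf{Proof plan for Fact~\ref{fact:simplecountb}.} The plan is to reduce the counting of maximal link-node paths in $T$ to counting edges in a smaller tree obtained by suppressing the degree-$2$ vertices, and then invoke Fact~\ref{fact:simplecounta}.

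First I would construct an auxiliary tree $T'$ from $T$ as follows: iteratively suppress every link node, that is, replace each degree-$2$ vertex $v$ (with neighbours $u,w$) by a single edge $uw$. Because $T$ is a tree and suppression preserves the tree property, $T'$ is again a tree. Its vertex set is exactly $V(T')=L(T)\cup S_{\geq 3}(T)$, since the suppression procedure removes precisely the link nodes and does not change the degree of any other vertex.

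Next I would observe the natural bijection between a subset of the edges of $T'$ and the elements of $\mathscr{P}_2(T)$. Every maximal path $P\in\mathscr{P}_2(T)$ of link nodes has two endpoints whose off-path neighbours in $T$ are non-link (leaf or branch) vertices $u_P, w_P$; after suppression, $P$ collapses to the single edge $u_Pw_P\in E(T')$. Conversely, some edges of $T'$ may correspond to edges of $T$ that had no link node between their endpoints in $T$. Hence the map $P\mapsto u_Pw_P$ is an injection from $\mathscr{P}_2(T)$ to $E(T')$, yielding
\[
|\mathscr{P}_2(T)| \;\leq\; |E(T')| \;=\; |V(T')|-1 \;=\; |L(T)|+|S_{\geq 3}(T)|-1.
\]

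Finally I would apply Fact~\ref{fact:simplecounta}, which gives $|S_{\geq 3}(T)|\leq |L(T)|-1$. Substituting yields
\[
|\mathscr{P}_2(T)| \;\leq\; |L(T)| + (|L(T)|-1) - 1 \;=\; 2|L(T)|-2 \;\leq\; 2|L(T)|-1,
\]
which is the required bound. There is no real obstacle here; the only thing to be careful about is verifying that the suppression map is indeed a well-defined injection from $\mathscr{P}_2(T)$ into $E(T')$, which follows immediately from the maximality of each link-node path (the endpoints of a maximal path must have non-link neighbours outside the path, so collapsing the path to an edge between two vertices of $T'$ is unambiguous).
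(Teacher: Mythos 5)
Your proof is correct, but it takes a genuinely different route from the paper's. The paper roots $T$ at an arbitrary branch node (dispatching the case where $T$ is a path separately), and associates with every maximal link-path $P\in\mathscr{P}_2(T)$ the unique child of the last node of $P$ in the rooted tree; since that child is necessarily a leaf or a branch node and the association is injective, this gives $|\mathscr{P}_2(T)|\leq |L(T)|+|S_{\geq 3}(T)|$, after which Fact~\ref{fact:simplecounta} closes the argument. You instead build the suppressed tree $T'$ on vertex set $L(T)\cup S_{\geq 3}(T)$ and inject $\mathscr{P}_2(T)$ into $E(T')$, using $|E(T')|=|V(T')|-1$; this avoids both the rooting and the separate path case (a path collapses to a single edge, covered uniformly by your count), and it yields the marginally sharper bound $|\mathscr{P}_2(T)|\leq |L(T)|+|S_{\geq 3}(T)|-1\leq 2|L(T)|-2$, whereas the paper's vertex-based injection loses that extra $1$. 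Both arguments ultimately lean on Fact~\ref{fact:simplecounta} in the same way, and both share the same (harmless) degenerate blind spot for trees on fewer than two vertices, which never arise in the paper's application to $M^*$. Your care in checking that distinct maximal link-paths cannot collapse to the same edge of $T'$ (acyclicity of $T$) is exactly the point where the injection could have failed, so the proof is complete as written.
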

\begin{proof}
Root the tree at an arbitrary node of degree at least $3$. If there is no node of degree $3$ or more in $T$ 
then we know that $T$ is a path  and the assertion follows. Consider $T[S_2]$ which is the disjoint union of paths
$P\in \mathscr{P}_2(T)$. With every path $P\in \mathscr{P}_2(T)$, we
associate the unique child in $T$ of the last node of this path (furtherest from the root)  in
$T$. Observe that this association is injective and the associated
node is either a leaf or a branch node. Hence 
$$|\mathscr{P}_2(T)| \leq |L(T)|+ |S_{\geq 3 }(T)|\leq  2 |L(T)|-1$$
follows from Fact $1$.
\end{proof}

\begin{lemma}
\label{lemma:boundingleavesandpaths}
Let $M^*$ be the subgraph formed by the edges in $\cal F$. 
If $(G,k)$ is a yes instance of \tDS \, (\tCDS) then (a) $|L(M^*)| \leq \eta(H) k$;  
(b) $|S_{\geq 3}(M^*)|\leq \eta(H) k-1$; and 
(c)  $|\mathscr{P}_2(M^*)|\leq 2 \eta(H) k-1$.  
Here $\eta(H)$ is the factor of approximation in Lemma~\ref{lemma:approximation} (Lemma~\ref{lemma:approximationcds}). 
\end{lemma}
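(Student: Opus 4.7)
The plan is to prove (a) first, since (b) and (c) then follow immediately from Facts 1 and 2. The key is to use the heavy-edge condition to extract, from each leaf of $M^*$, at least one \emph{private} vertex of the approximate dominating set $D$.

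Enumerate the leaves $\ell_1,\ldots,\ell_p$ of $M^*$ and let $e_{\ell_i}=\ell_i u_i$ be the unique heavy edge incident to $\ell_i$ in $M^*$. Let $T_{\ell_i}$ denote the subtree of $M$ on the $\ell_i$-side of $M\setminus e_{\ell_i}$. By the definition of heavy edge we have $|\Psi(T_{\ell_i})\cap D|\geq h+1$. I first want to argue that the $T_{\ell_i}$'s are pairwise node-disjoint as subtrees of $M$: if some $t$ lay in both $T_{\ell_i}$ and $T_{\ell_j}$, then the $M$-path from $t$ to $u_i$ would pass through $\ell_i$ and the $M$-path from $t$ to $u_j$ through $\ell_j$, so the unique $M$-path from $u_i$ to $u_j$ would go $u_i \to \ell_i \to \cdots \to t \to \cdots \to \ell_j \to u_j$; but this path lies entirely in $M^*$ (a subtree of $M$ containing both $u_i$ and $u_j$) and $\ell_i,\ell_j$ are leaves of $M^*$ whose only $M^*$-neighbors are $u_i,u_j$, contradicting the fact that the $M^*$-path from $u_i$ to $u_j$ cannot revisit $\ell_i$ or $\ell_j$.

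Next I convert the adhesion bound into a disjointness bound on the vertex sets of $D$ assigned to the $T_{\ell_i}$'s. The adhesion of $(M,\Psi)$ is at most $h$, so $|\kappa(e_{\ell_i})|=|\Psi(\ell_i)\cap\Psi(u_i)|\leq h$. By the subtree-of-bags property of tree decompositions, any vertex $v\in\Psi(T_{\ell_i})$ that also appears in some bag outside $T_{\ell_i}$ must lie in both $\Psi(\ell_i)$ and $\Psi(u_i)$, hence in $\kappa(e_{\ell_i})$. Therefore at most $h$ of the $\geq h+1$ vertices of $D$ inside $\Psi(T_{\ell_i})$ can appear outside $T_{\ell_i}$, so each $T_{\ell_i}$ contains at least one vertex of $D$ whose entire bag-subtree lies in $T_{\ell_i}$. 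Combined with the pairwise disjointness of the $T_{\ell_i}$'s, these private witnesses are distinct across different leaves, so $|L(M^*)|\leq |D|$. Since $(G,k)$ is a yes-instance, the optimum (connected) dominating set has size $\leq k$, and the approximation algorithm of Lemma~\ref{lemma:approximation} (resp.\ Lemma~\ref{lemma:approximationcds}) gives $|D|\leq \eta(H)k$, establishing (a).

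Finally, (b) is $|S_{\geq 3}(M^*)|\leq |L(M^*)|-1\leq \eta(H)k-1$ by Fact~\ref{fact:simplecounta}, and (c) is $|\mathscr{P}_2(M^*)|\leq 2|L(M^*)|-1\leq 2\eta(H)k-1$ by Fact~\ref{fact:simplecountb}. The main obstacle is the disjointness argument combined with the adhesion-based ``private vertex'' extraction; once those are in place, the rest is a direct application of the elementary tree facts.
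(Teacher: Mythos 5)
Your proof is, in substance, the paper's own argument. The paper roots the tree at a degree-$3$ node of $M^*$, observes that distinct leaves $w_i,w_j$ of $M^*$ are ``siblings'', so that $\gamma(w_i)\cap\gamma(w_j)$ is confined to the adhesion sets $\sigma(w_i),\sigma(w_j)$ of size at most $h$, and then uses $|\gamma(w_i)\cap D|\geq h+1$ (from heaviness of the leaf edge) to extract one private vertex of $D$ per leaf, giving the injection $|L(M^*)|\leq |D|\leq \eta(H)k$; parts (b) and (c) then follow from Facts~\ref{fact:simplecounta} and~\ref{fact:simplecountb} exactly as you do. Your $\Psi(T_{\ell_i})$ plays the role of $\gamma(w_i)$, your $\kappa(e_{\ell_i})$ the role of $\sigma(w_i)$, and your ``private witness whose whole bag-subtree lies in $T_{\ell_i}$'' is the paper's $q_i$; this part of your argument, including the adhesion-based extraction, is correct.

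The one step that does not hold as written is your disjointness argument. From ``the $M$-path from $t$ to $u_i$ passes through $\ell_i$'' and ``the $M$-path from $t$ to $u_j$ passes through $\ell_j$'' you infer that the unique $u_i$--$u_j$ path in $M$ visits $\ell_i$, $t$, $\ell_j$. This is a non sequitur: concatenating the two paths yields only a walk, and in fact under your hypothesis the unique $u_i$--$u_j$ path lies in $M^*$ and never enters $T_{\ell_i}$ at all, since both endpoints avoid $T_{\ell_i}$ and its only exit is the edge $e_{\ell_i}$; in the extreme case $u_i=u_j$ (a star-shaped $M^*$) the path is a single node. The conclusion you want is nonetheless true, and the repair is short. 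Since $M^*$ is a subtree of $M$ containing $e_{\ell_i}$ and $\ell_i$ is a leaf of $M^*$, one has $V(T_{\ell_i})\cap V(M^*)=\{\ell_i\}$. Now if some $t$ lay in $T_{\ell_i}\cap T_{\ell_j}$, then $T_{\ell_j}$, being connected and containing both $t\in T_{\ell_i}$ and $\ell_j\notin T_{\ell_i}$, would have to contain the edge $e_{\ell_i}$ and hence $u_i$, forcing $u_i=\ell_j$; when $|V(M^*)|\geq 3$ the $M^*$-neighbor of a leaf has degree at least $2$ and so is not a leaf, and when $|V(M^*)|=2$ the two subtrees are the two components of $M$ minus the single heavy edge, so disjointness holds in every case. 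With this local fix your proof is complete and matches the paper's.
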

\begin{proof}
Root the tree at an arbitrary node $r$ of degree at least $3$ in $M^*$. If there is no node of degree $3$ 
or more in $M^*$ then we know that  $M^*$ is a path,  and the proof follows. 
We call a pair of nodes $u$ and $v$ {\em siblings} if they do not 
belong to the same path from the root $r$ in $M^*$. Observe that all the leaves of $M^*$ are siblings. 

 Let $D$ be an approximate solution to $\tDS$ ($\tCDS$) returned by applying Lemma~\ref{lemma:approximation} (Lemma~\ref{lemma:approximationcds}) on $G$. Since  $(G,k)$ is a yes instance we have that $|D|\leq  \eta(H) k$. 
Let $w_1,\ldots,w_\ell$ be the leaves of $M^*$ and let $e_1,\ldots,e_\ell$ be the  edges in $M^*$ incident 
with   $w_1,\ldots,w_\ell$, respectively.  To prove our first statement we will show that for every $i$, we have a vertex 
 $q_i\in D$  such that $q_i\in \gamma(w_i)$ and for  all $j\neq i$, $q_i\notin \gamma(w_j)$. This will establish an 
 injection from the set of leaves to the dominating set $D$ and thus the bound will follow. Towards this observe that 
 since the edge $e_i$ is heavy,  we have that $|\gamma(w_i)\cap D|\geq h+1$.  
 Furthermore, for every pair of 
 vertices $w_i,w_j\in L(M^*)$, $w_i\neq w_j$, we have that 
 $|\gamma(w_i)\cap \gamma(w_j)|\leq h$. The last assertion  follows from the fact that for a pair of siblings $w_i$ and $w_j$ 
  the only vertices that  can be in the intersection of  $\gamma(w_i)$ and $\gamma(w_j)$ must belong to both $\sigma(w_i)$ and 
  $\sigma(w_j)$. However, the size of any $\sigma(w_i)$ is upper bounded by $h$. This together with the fact that 
  $|\gamma(w_i)\cap D|\geq h+1$ implies that for every $i$, we have a vertex $q_i\in D$  such that 
 $q_i\in \gamma(w_i)$ and for  all $j\neq i$, $q_i\notin \gamma(w_j)$.  This implies that $|L(M^*)|\leq |D|$. However since $(G,k)$ is a yes instance to \tDS \, we have that $|D|\leq \eta(H)k$. This completes the proof of part (a)  of the lemma. 
Proofs for part (b) and part (c) of the lemma follow from Facts~\ref{fact:simplecounta} and \ref{fact:simplecountb}. 
\end{proof}

Before we prove our next lemma we show a lemma about dominating sets of subgraphs of $G$.
\begin{lemma}
\label{lem:smalldominatingset}
Let $H$ be a fixed graph and let $G$ be a graph with $H\not\preceq_T G$.  
Let $(M,\Psi)$ be the tree-decomposition of $G$ given by 
Theorem~\ref{thm:structure theorem} and let $D$ be a dominating set of $G$. If $M'$ is  a subtree of $M$,  then 
$$(D \cap \Psi(M') )\mathop{\cup}_{e\in {\cal E}(M')}\kappa(e)$$ is 
a dominating set for $G[\Psi(M')]$. 
\end{lemma}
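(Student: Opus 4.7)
The plan is to show that for every vertex $v\in \Psi(M')$, the proposed set $S:=(D\cap \Psi(M'))\cup\bigcup_{e\in {\cal E}(M')}\kappa(e)$ contains a vertex from the closed neighborhood $N_{G[\Psi(M')]}[v]$. I would split the argument into three cases, based on where $v$ and a dominator of $v$ sit with respect to $\Psi(M')$.

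First, if $v\in D$, then $v\in D\cap \Psi(M')\subseteq S$ and $v$ dominates itself. Otherwise, since $D$ is a dominating set of $G$, pick a neighbor $u\in D$ of $v$ in $G$. If $u\in \Psi(M')$, then $u\in D\cap \Psi(M')\subseteq S$; moreover $u,v\in \Psi(M')$ implies the edge $uv$ is present in $G[\Psi(M')]$, so $u$ dominates $v$ in the induced subgraph.

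The interesting case is $u\notin \Psi(M')$. Here I would use the tree-decomposition axioms. Because $uv\in E(G)$, some bag $\Psi(t)$ contains both $u$ and $v$; and $u\notin \Psi(M')$ forces $t\notin V(M')$. On the other hand $v\in \Psi(M')$ means $v$ lies in a bag indexed by some node of $M'$. Because the set $\{s\in V(M):v\in\Psi(s)\}$ induces a subtree of $M$, the unique path in $M$ from $t$ to $V(M')$ consists entirely of nodes whose bags contain $v$. This path must cross some boundary edge $e=xy\in {\cal E}(M')$ (with, say, $x\in V(M')$ and $y\notin V(M')$), and both endpoints of $e$ contain $v$ in their bags. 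Hence $v\in \Psi(x)\cap \Psi(y)=\kappa(e)\subseteq S$, and again $v$ dominates itself.

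No step looks like a real obstacle: the entire argument is a clean application of the connectedness (subtree) axiom of tree decompositions, together with the observation that whenever a vertex in $\Psi(M')$ has a neighbor outside $\Psi(M')$, it must appear in the adhesion $\kappa(e)$ of some boundary edge of $M'$. I would state the proof in that order and keep it to a short paragraph.
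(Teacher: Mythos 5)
Your proof is correct and takes essentially the same route as the paper: the paper's one-line argument rests precisely on the observation that any vertex of $\Psi(M')$ whose dominator in $D$ lies outside $\Psi(M')$ must itself belong to $\kappa(e)$ for some boundary edge $e\in {\cal E}(M')$, which is exactly the fact you establish via the subtree (connectedness) axiom. Your three-case analysis simply spells out in full detail what the paper asserts in one sentence.
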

\begin{proof}
The proof follows from the fact that $D \cap \Psi(M')$ dominates all the vertices in $\Psi(M')$ except possibly the ones that have 
neighbors in $V(G)\setminus (\mathop{\cup}_{e\in {\cal E}(M')}\kappa(e))$. Thus, $$(D \cap \Psi(M') )\mathop{\cup}_{e\in {\cal E}(M')}\kappa(e)$$ is 
a dominating set for $G[\Psi(M')]$. 
\end{proof}

Let $P_1, \ldots,P_\ell$ be the  paths in $ \mathscr{P}_2(M^*)$. We use $s_i$ and $t_i$ to denote the first  and the last vertices, respectively,  
of the path $P_i$. Since $P_i$ is a path consisting of link vertices,  we have that $s_i$ and $t_i$ have unique 
neighbors $s^*_i$ and $t^*_i$ respectively in $M^*$.  Observe that since $M^*$ is a subtree of $M$, we have that for every 
$i$, $P_i$ is also a path in $M$. 
If we delete the edges $s^*_is_i$ and $t^*_it_i$ from the tree $M$, then there is a subtree of $M$ that contains the path $P_i$; we call this subtree 
$M(P_i)$.  For any two vertices $a$ and $b$ on the path $P_i$ we use $P_i(a,b)$ to denote the subpath between $a$ and $b$ in $P_i$. 
Furthermore for  any subpath  $P_i(a,b)$, if we delete the edges incident to $a$  and $b$ on $P_i$ and not present in  $P_i(a,b)$ from the tree $M$, then there is a subtree of $M$ that contains the path $P_i(a,b)$; we call this subtree  $M(P_i(a,b))$.   

Now we recall the definition of $\xi_t$. Let $\Pi$ be a  monotone parameterized graph problem  that is FII. Then for every $t\in\Bbb{N},$ there exists a $ \xi_t \in\Bbb{Z}^{+}$ (depending on $\Pi$ and $t$), such that, given a $t$-boundaried graph $G$  with $|V(G)|>\xi_t,$  there exists a $t$-boundaried graph $G^*$  such that $G\equiv_{\Pi}G^*$ and  $|V(G^*)| < \xi_t$. 
In the next lemma we show that if any of the paths is ``too long'' then using a simple application of pigeonhole principle we can get a $2h$-{\sc DS}-protrusion. We use $|P_i|$ to denote the number of vertices in the path $P_i$.  

\begin{lemma}
\label{lem:boundingpaths}
Let $(G,k)$ be an instance of  \tDS\ (\tCDS) and let $P_1, \ldots,P_\ell$ be the  paths in $ \mathscr{P}_2(M^*)$. Further, let $D$ be a dominating set of $G$. 
Then,  for some path $P_i$, $i \in \{1,\ldots,\ell\}$, if $|P_i|> \xi_{2h} (2 (2h+k_i)+1)$ then $G$ contains a $2h$-{\sc DS}-protrusion ($2h$-{\sc CDS}-protrusion) of size at least $\xi_{2h}$.  Here,  $k_i=|D\cap \Psi(M(P_i))|$. Furthermore, we can find such a  $2h$-{\sc DS}-protrusion ($2h$-{\sc CDS}-protrusion) in polynomial time. 
\end{lemma}
\begin{proof} 
Let $P_i$ be the the path such that $|P_i|> 2 \xi_{2h} (|D\cap \Psi(M(P_i))|$. 
Let 
$P_i:=s_i=a_1^i\cdots a_{\ell}^i=t_i$. For 
every vertex 
\[ w \in \Big(D\cap \Psi(M(P_i))\Big) \cup \kappa(s_is_i^*) \cup \kappa(t_it_i^*) \]
we mark two vertices of the path $P_i$. We mark the first and 
the last vertices on $P_i$, say $a^i_{{\sf wfirst}}$ and $a^{i}_{{\sf wlast}}$, such that $w\in \Psi(a^i_{{\sf wfirst}})$ and $w\in \Psi(a^{i}_{{\sf wlast}})$.  That is, 
$w\in \Psi(a^i_{{\sf wfirst}})$ and $w\in \Psi(a^{i}_{{\sf wlast}})$ and for all $z< {\sf wfirst}$ or $z>{\sf wlast}$ we have that $w\notin \Psi(a^i_{z})$.  
This way we will only mark at most $2(2h+|D\cap \Psi(M(P_i))|)=2(2h+k_i)$ vertices of the path $P_i$. However the path is longer than 
$2 \xi_{2h} (2h+k_i)$  and thus by the pigeonhole principle we have that there exists a subpath of $P_i$, say $P_i(a^{i}_x,a^{i}_y)$, such that no vertex of this subpath is marked and $|P_i(a^{i}_x,a^{i}_y)|>\xi_{2h}$. 
 Let $W=\Psi(M(P_i(a^{i}_x,a^{i}_y)))$. Let $a^*$ and $b^*$ be the neighbors of $a^{i}_x$  and $a^{i}_y$ respectively 
 that are not present on  
$P_i(a^{i}_x,a^{i}_y)$. Clearly, the only vertices in $W$ that have neighbors in $V(G)\setminus W$ belong to 
$\kappa(a^{i}_xa^*)\cup\kappa(a^{i}_yb^*)$.  
Thus the vertices in $W$ that have neighbors in $V(G)\setminus W$  is upper bounded by $2h$. Furthermore, since no vertex on the path $P_i(a^{i}_x,a^{i}_y)$ is marked, we have that all the vertices in $D$ belonging to $W$ also belong to $\kappa(a^{i}_xa^*)\cup\kappa(a^{i}_yb^*)$. Then by Lemma~\ref{lem:smalldominatingset}, we have that 
$\kappa(a^{i}_xa^*)\cup\kappa(a^{i}_yb^*)$ dominates all the vertices in $W$. Furthermore, in $(M,\Psi)$, 
no bag is contained in another and thus $|W|>\xi_{2h}$ (see discussion after Theorem~\ref{thm:structure theorem}). This shows that $W$ is a $2h$-{\sc DS}-protrusion of the desired size.
\end{proof}

The final result of this section is the following decomposition lemma.

\begin{lemma}
\label{lem:slicedeco}
Let $H$ be a fixed graph and ${\cal C}_H$ be the class of graphs that excluds a fixed graph $H$ as a topological minor.  
Then there exist two constants $\delta_1$ and $\delta_2$ (depending on \tDS \, (\tCDS)) 
such that given a yes instance $(G,k)$ of \tDS \, (\tCDS), in polynomial time,  we can either find 
\begin{itemize}
\item a $(\delta_1k,\delta_2 k)$-slice decomposition; or 
\item a $2h$-{\sc DS}-protrusion (or $2h$-{\sc CDS}-protrusion) of size more than $\xi_{2h}$ or;
\item an $h'$-protrusion of size more than $\xi_{h'}$ where $h'$ depends only on $h$.
\end{itemize}
\end{lemma}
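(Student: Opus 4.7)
The plan is as follows. Since $(G,k)$ is a yes-instance, I first invoke Lemma~\ref{lemma:approximation} (respectively Lemma~\ref{lemma:approximationcds} in the \tCDS\ case) to obtain a (connected) dominating set $D$ of size $O(\eta(H)\cdot k)$. Next, Theorem~\ref{thm:structure theorem} delivers the Grohe--Marx tree decomposition $(M,\Psi)$ of $G$, of adhesion at most $h$, which I root arbitrarily and pass through the marking scheme of Figure~\ref{figure:marking} to obtain the subtree $M^*$ of heavy edges.

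By Lemma~\ref{lemma:boundingleavesandpaths}, the numbers of leaves, branch nodes, and maximal link-paths of $M^*$ are each $O(\eta(H)k)$. I then examine every link-path $P_i\in\mathscr{P}_2(M^*)$: if $|P_i| > 2\xi_{2h}(2h+k_i)$, where $k_i=|D\cap\Psi(M(P_i))|$, Lemma~\ref{lem:boundingpaths} directly yields a $2h$-\tDS-protrusion (or $2h$-\tCDS-protrusion, using Proposition~\ref{lem:bb} to promote a bounded dominating set to a bounded connected one) of size greater than $\xi_{2h}$, which we return. Otherwise, since $\sum_i k_i\leq|D|=O(k)$ and there are $O(k)$ link-paths, the total length of all paths is $O(\xi_{2h}k)$, so $|V(M^*)|=O(k)$.

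Now I construct the slice decomposition. Each node $t\in V(M^*)$ becomes a singleton slice $M_t=\{t\}$, so that $\tau(M_t)=\tau(t)$ satisfies one of the two alternatives of Theorem~\ref{thm:structure theorem} by construction. Each maximal subtree $M_j$ of $M$ attached to $M^*$ by a single light edge $e$ is also declared a slice. By lightness, $\mu(M_j,D)\leq h$, and combining these $D$-vertices with the size-$\leq h$ adhesion set at the attachment, Lemma~\ref{lem:smalldominatingset} yields a dominating set of $G[\Psi(M_j)]$ of size at most $2h$. If $|\Psi(M_j)|>\xi_{2h}$ then $\Psi(M_j)$ is the desired $2h$-\tDS-protrusion and we return it; otherwise $M_j$ is small, and if its torso happens to fit neither structural alternative we can alternatively exhibit an $h'$-protrusion of size exceeding $\xi_{h'}$ inside $M_j$, since the bounded dominating set and bounded boundary together upper-bound the treewidth of $\tau(M_j)$ by some $h'=h'(h)$.

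The total number of slices---nodes of $M^*$ plus pendant subtrees---is $O(k)$, and the total boundary cost $\sum_i \sum_{e\in\mathcal{E}(M_i)}|\kappa(e)|$ is at most $h$ times the number of edges of $M$ incident to or inside $M^*$, again $O(k)$. Choosing $\delta_1,\delta_2$ accordingly delivers the desired $(\delta_1 k,\delta_2 k)$-slice decomposition. The main obstacle will be the case analysis on pendant subtrees: one must verify that whenever such a subtree is too large to be retained as a valid slice, one of the three protrusion alternatives actually emerges, and that the three alternatives together are exhaustive. In the \tCDS\ case additional care is needed, since a bounded dominating set inside a candidate protrusion must be upgraded to a bounded connected dominating set via Proposition~\ref{lem:bb} without inflating its size beyond the $2h$ threshold.
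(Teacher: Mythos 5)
Your setup (approximate dominating set, marking scheme, Lemma~\ref{lemma:boundingleavesandpaths}, Lemma~\ref{lem:boundingpaths} for long link-paths) matches the paper, but your construction of the slice decomposition has a genuine gap. You take each node $t\in V(M^*)$ as a singleton slice and each maximal pendant light subtree as a separate slice. The number of pendant light subtrees hanging off $M^*$ is \emph{not} bounded in terms of $k$: a single node of $M^*$ can have $\Theta(n)$ light children (a light edge $r_iw$ only guarantees $\mu(M_w,D)\leq h$, which in no way limits how many such children exist). Consequently both requirements of an $(\delta_1k,\delta_2k)$-slice decomposition fail for your collection: the number of slices $\alpha$ can be $\Theta(n)$, and the boundary cost $\sum_i\sum_{e\in{\cal E}(M_i)}|\kappa(e)|$ counts every light edge between $M^*$ and a pendant subtree, each contributing an adhesion set of size up to $h$, so it too can be $\Theta(n)$. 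Your claim that this cost is ``at most $h$ times the number of edges of $M$ incident to or inside $M^*$'' is exactly where the argument breaks, since only the edges \emph{inside} $M^*$ (the heavy edges, of which there are $\cO(k)$) are bounded. The paper avoids this by defining the slices as the connected components of $M$ after deleting the heavy edges ${\cal F}$: each slice is a node $r_i$ of $M^*$ \emph{together with} all its pendant light subtrees, so that ${\cal E}(M_i)\subseteq{\cal F}$, each heavy edge lies in at most two sets ${\cal E}(M_i)$, and both bounds follow with $\alpha\leq|V(M^*)|=\cO(k)$.

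This merging is not free, and the work it forces is precisely what your proposal is missing: after absorbing the pendant subtrees, one must re-verify that the merged torso $\tau(M_i)$ still satisfies one of the two structural alternatives. In the bounded-degree case the paper shows that either $\tau(M_i)$ has at most $h'$ vertices of degree at least $h'$, or a low-degree vertex $v$ of $\tau(r_i)$ with many small (size $\leq\xi_{2h}$) components glued onto $N[v]$ yields an $h'$-protrusion of size more than $\xi_{h'}$ and treewidth at most $\xi_{2h}+h$ --- this is the only place the third alternative of the lemma genuinely arises, whereas in your construction it never does, and your fallback (``exhibit an $h'$-protrusion inside $M_j$'') is vacuous since you have already assumed $|\Psi(M_j)|\leq\xi_{2h}$. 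In the nearly-embeddable case the paper notes that $\tau(M_i)$ arises from $\tau(r_i)$ by gluing constant-size graphs onto cliques, hence excludes a fixed $H^*$ as a minor. Two further slips: your bound $\sum_ik_i\leq|D|$ is false because the sets $D\cap\Psi(M(P_i))$ may share up to $h$ vertices pairwise through adhesion sets (the paper corrects this via $k_i\leq h+k_i^*$ with $k_i^*$ counting uniquely occurring vertices); and your assertion that a bounded dominating set plus a bounded boundary upper-bounds treewidth is false in general (a clique has a dominating set of size one). Finally, invoking Proposition~\ref{lem:bb} inside a candidate protrusion to get a connected dominating set of size at most $2h$ does not work as stated: $G[W]$ need not be connected and the $2h$-element dominating set may induce up to $2h$ components, so the connectivity repair pushes the certificate size above the $2h$ threshold; one must either tolerate a larger constant in the \tCDS\ case or argue componentwise as the definition of an $r$-{\sc CDS}-protrusion anticipates.
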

\begin{proof}
Let $(G,k)$ be a yes instance of \tDS\  (\tCDS). This implies that the size of the (connected) dominating set $D$ returned by Lemma~\ref{lemma:approximation} (Lemma~\ref{lemma:approximationcds}) is at most $\eta(H)k$. 
Let $M^*$ be the subtree of $M$ formed by  heavy edges. By Lemma~\ref{lemma:boundingleavesandpaths}, we know that 
\begin{itemize}
\item[(a)] $|L(M^*)| \leq \eta(H) k$; 
\item[(b)] $|S_{\geq 3}(M^*)|\leq \eta(H) k-1$; and
\item[(c)] $|\mathscr{P}_2(M^*)|\leq 2 \eta(H) k-1$.  
 \end{itemize}
 Recall that for every path  $P_i \in \mathscr{P}_2(M^*)$, we defined $k_i=|D\cap \Psi(M(P_i))|$. If for any path $P_i \in \mathscr{P}_2(M^*)$ we have that $|P_i|> \xi_{2h} 2 (2h+k_i)$ then by Lemma~\ref{lem:boundingpaths}  $G$ contains a $2h$-{\sc DS}-protrusion of size at least $\xi_{2h}$, and we can find this protrusion in polynomial time. Thus we assume that for all paths $P_i \in \mathscr{P}_2(M^*)$ we have that $|P_i| \leq \xi_{2h} (2 (2h+k_i)+1)$. 

Let $k_i^*$ denote the number of vertices in $D\cap \Psi(M(P_i))$ that are not present in any other 
$D\cap \Psi(M(P_j))$ for $i\neq j$. Furthermore, for all $i\neq j$ we have that 
$$\Big|\Psi(M(P_i))\cap D\cap \Psi(M(P_j) \Big| \leq h.$$
The last assertion is based on the following arguments. The sets $\Psi(M(P_i))$ and  $\Psi(M(P_j))$ can be separated by a separator of size at most $h$ and the vertices of $D$ that appear in  both sets are present in this separator. 
Observe that $k_i\leq 2h+k_i^*$. 
This implies that 
\begin{eqnarray*}
|V(M^*)| & = & |L(M^*)|+|S_{\geq 3}|+|S_{2}| \\
&\leq & \eta(H)k + \eta(H) k-1 + \sum_{P_j\in \mathscr{P}_2(M^*)} (4h+2k_j+2)\xi_{2h} \\
& = & 2 \eta(H)k-1+ (4h+2) |\mathscr{P}_2(M^*)| \xi_{2h} +  \sum_{P_j\in \mathscr{P}_2(M^*)} 2(2h+ k_j^*)\xi_{2h}\\ 
& \leq & 2 \eta(H)k-1+ (8h+2) |\mathscr{P}_2(M^*)|\xi_{2h}+ 2|D| \xi_{2h}\\
& \leq & (2+ (16h+4)\xi_{2h}+2 \xi_{2h})\eta(H)k
\end{eqnarray*}
Let $\Gamma=(2+ (16h+4)\xi_{2h}+2 \xi_{2h})\eta(H)k$. 
This implies that the number of heavy edges is upper bounded by $|{\cal F}|\leq \Gamma-1$. Let $M_1,\ldots,M_\alpha$ be the subtrees of $M$ obtained by deleting all the edges in $M^*$, that is, by deleting all the edges in $\cal F$,  see Fig.~\ref{illus-decompos} for an illustration. Note that \[\alpha\leq \Gamma= (2+ (16h+4)\xi_{2h}+2 \xi_{2h})\eta(H)k.\]  We now argue that either the collection $M_1,\ldots,M_\alpha$ forms a $(\delta_1k,\delta_2 k)$-slice decomposition of $G$ or we have found a $2h$-protrusion or a $2h$-{\sc DS}-protrusion of size more than $\xi_{2h}$ in $G$.

\begin{figure}[t]
\begin{center}
\includegraphics[scale=0.33]{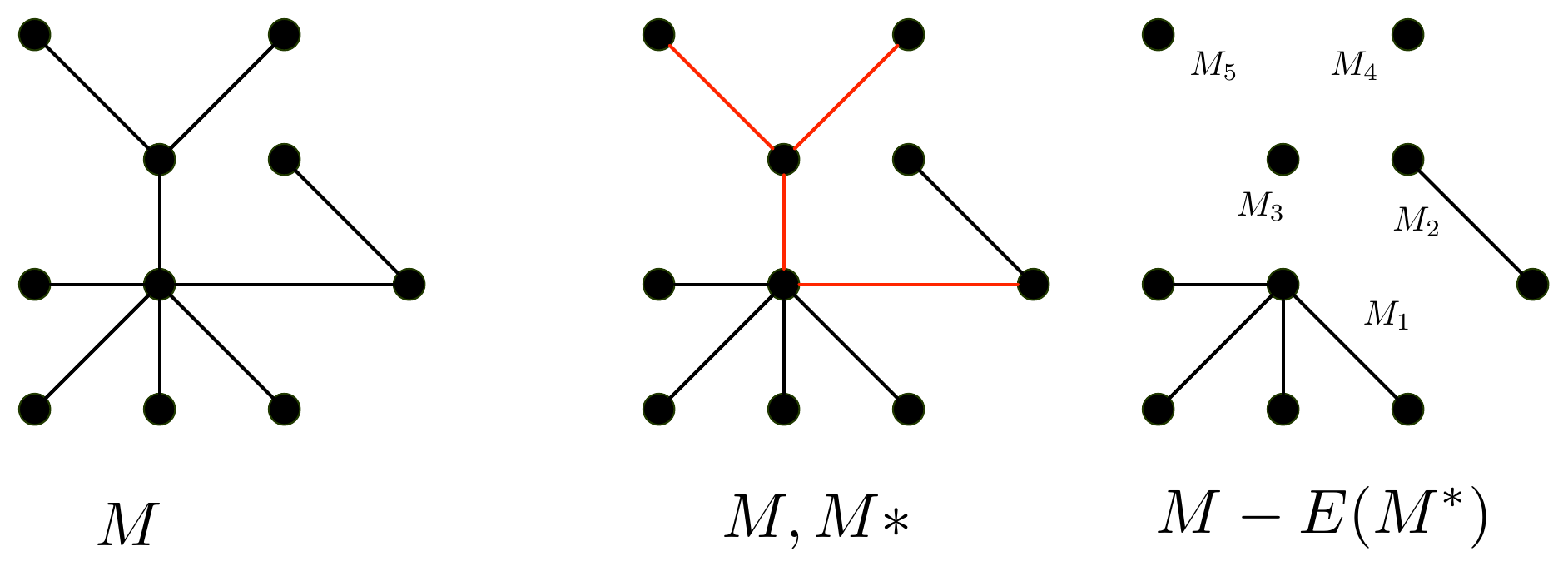}
\end{center}
 \caption{\label{illus-decompos} An illustration of the decomposition. The heavy edges are shown in red.}
\end{figure}

First we show that $$\sum_{i=1}^{\alpha} \Big(\sum_{e\in {\cal E}(M_i)} |\kappa(e)|\Big)=\cO(k).$$ 
Note that by construction, each  $e\in {\cal E}(M_i)$ is a heavy edge. Now observe that each $e$  belongs to at most $2$ distinct edge sets ${\cal E}(M_i)$, we have that $$\sum_{i=1}^{\alpha} \sum_{e\in {\cal E}(M_i)} |\kappa(e)| \leq \left(2 \sum_{e\in E(M^*)={\cal F}} |\kappa(e)| \right)\leq 2h |{\cal F}|\leq 2h\Gamma.$$ 
We set $\delta_2 = 2h(2+ (16h+4)\xi_{2h}+2 \xi_{2h})\eta(H)$, and $\delta_1=\frac{\alpha}{k}$. Since $\alpha = \cO(k)$ we have that $\delta_1$ is a constant; indeed $\alpha\leq \Gamma= (2+ (16h+4)\xi_{2h}+2 \xi_{2h})\eta(H)k$.

Since $M^*$ is connected we have that for every tree $M_i$ there is a unique node in $M_i$ that is incident with  edges in $\cal F$. We denote this special node by $r_i$. We root the tree $M_i$ at $r_i$. Let $w$ be a child of $r_i$ in $M$ and let $M_w$  and $M_{r_i}$ be the subtrees of $M$ obtained after deleting the edge $r_iw$. Since at least one edge incident with  $r_i$ is heavy we have that $\mu(M_{r_i},D)\geq h+1$. However the edge $r_iw$ is not heavy and thus it must be the case that $\mu(M_{w},D)\leq h$. Let $W=\Psi(M_w)$. Then  by Lemma~\ref{lem:smalldominatingset}, we have that  $(D\cap W)\cup \kappa(r_iw)$ is a dominating set of size at most $2h$ for $G[W]$. Furthermore, the only vertices in $W$ that have neighbors in $V(G)\setminus W$ belong to $\kappa(r_iw)$ and thus its size is also upper bounded by $h$. This implies that if $|W|>\xi_{2h}$ then it is a  $2h$-{\sc DS}-protrusion of size at least $\xi_{2h}$. Thus  from now onwards we assume that this is not the case. This implies that for every subtree rooted at $r_i$ and every child $w$ of $r_i$  we have that 
 $|W=\Psi(M_w)|\leq \xi_{2h}$. Next we look at $\tau(r_i)$  and based on its type.  Recall from Theorem~\ref{thm:structure theorem} that they are of the following types. 
 
 \medskip
 
 \noindent 
 {\bf Case 1:  $\tau(r_i)$  has at most $h$ vertices of degree larger than $h$.} 
In the case 
we show that there exists an $h^*$ depending only on $h$ such that either $\tau(M_i)$ has at most $h^* = \xi_{h+\xi_{2h}}+h$ vertices of degree larger than $h^*$,  or $G$ contains an $h'$-protrusion of size more than $\xi_{h'}$. Here, $h'=\xi_{2h}+h$. Suppose some vertex $v$ in $\tau(r_i)$ has degree at most $h$ in  $\tau(r_i)$, but has degree at least $h^*$ in $\tau(M_i)$. Let $N$ be the closed neighbourhood of $v$ in $\tau(r_i)$ and $N'$ be the neighborhood of $v$ in $\tau(M_i)$. 
Each vertex in $N' \setminus N$ must lie in a connected component $C$ of $\tau(M_i) \setminus N$ on at most $\xi_{2h}$ vertices. Towards this, observe that no vertex in $C$ sees any vertex outside $N$ even in the graph $G$.  Thus, if $|C|> \xi_{2h}$ we will get  $2h$-{\sc DS}-protrusion. Let $X$ be $N$ plus the union of all such components. By assumption $|N' \setminus N| \geq  \xi_{h+\xi_{2h}}$ and hence $|X| \geq \xi_{h+\xi_{2h}}$. Finally, the only vertices in $X$ that have neighbors outside of $X$ in $G$ are in $N$, and $|N| \leq h$. The treewidth of $G[X]$ is at most $\xi_{2h}+h$ since removing $N$ from $X$ leaves components of size $\xi_{2h}$. Thus $X$ is an $h'$-protrusion of size more than $\xi_{h'}$. If no such $X$ exists it follows that every vertex of degree at most $h$ in $\tau(r_i)$ has degree at most $h^*$ in $\tau(M_i)$. The vertices of  $\tau(M_i)$ that are not in $\tau(r_i)$ have degree at most $\xi_{2h}+h < h^*$. Thus  $\tau(M_i)$  has at most $h < h^*$ vertices of degree at least $h^*$.

\medskip
 \noindent 
 {\bf Case 2:  $\tau(r_i)$  is $h$-nearly embedded in a surface $\Sigma$ in which $H$ cannot be embedded.} 
In the case  we have that $\tau(r_i)$ excludes some graph $H'$ depending only on $h$ as a minor. The graph $\tau(M_i)$ can be obtained from  $\tau(r_i)$ by joining constant size graphs (of size at most $\xi_{2h}$) to vertex sets that form cliques in $\tau(r_i)$. Thus there exists a graph $H^*$ depending only on $h$ such that $\tau(M_i)$ excludes $H^*$ as a minor. This completes the proof of this lemma. 
\end{proof}




\section{Kernelization Algorithm for \tDS}
\label{sec:domset_kernel}
In this section we use the slice decomposition obtained in the last section 
to obtain linear kernels for  \tDS \ and in the next section outline an algorithm for \tCDS.

Given an instance $(G,k)$ of \tDS \,  we first apply Lemma~\ref{lemma:approximation} and find  a dominating set $D$ of $G$. 
If $|D|>\eta(H)k$ we return that $(G,k)$ is a {\sc no} instance of \tDS. Else, 
we apply Lemma~\ref{lem:slicedeco} and
\begin{itemize}
\item  either find  a $(\delta_1k,\delta_2 k)$-slice decomposition; or 
\item a $2h$-{\sc DS}-protrusion $X$ of $G$ 
of size more than $\xi_{2h}$; or
\item a $h'$-protrusion of size more than $\xi_{h'}$ where $h'$ depends only on $h$.
\end{itemize}
In the second case we apply Lemma~\ref{lem:fiidomset}. Given $X$, by making use of  Lemma~\ref{lem:fiidomset}, we obtain a boundaried graph $X'$ such that $|X'|\leq \xi_{2h}$ and $X\equiv_{\tDS} X'$.  
We also compute the translation constant $c$ between $X$ 
and $X'$.  Now we replace the graph $X$ with $X'$ and obtain a new equivalent instance $(G',k+c)$. See  Definition~\ref{defn:replacement} for the notion of replacement. (Recall that $c$ is a non-positive integer.) In the third case we apply the protrusion replacement lemma of~\cite[Lemma~7]{H.Bodlaender:2009ng} to obtain a new equivalent instance $(G',k')$ for $k' \leq k$ with $|V(G')| < |V(G)|$. We repeat this process until Lemma~\ref{lem:slicedeco}  returns a slice decomposition. 
For simplicity we denote by $(G,k)$ itself the graph on which Lemma~\ref{lem:slicedeco} returns the slice decomposition. Since the number of times this process can be repeated is upper bounded by $n=|V(G)|$, we can obtain a $(\delta_1k,\delta_2 k)$-slice decomposition for $(G,k)$ in polynomial time.

Let $\cal P$ be the pairwise vertex disjoint  subtrees $\{M_1,\ldots,M_\alpha\}$  of $M$ coming 
from the slice decomposition of $G$. Recall that $R_i^+= \Psi(M_i)$.
Let  $Q_i=\bigcup _{e\in {\cal E}(M_i)}\kappa(e)$, $B_i=(D \cap R_i^+)\cup Q_i$ and $b_i=|B_i|$.  
In this section we will treat  $G_i:=G[R_{i}^+]$ as a  graph with boundary $B_i$.  
Observe that by Lemma~\ref{lem:smalldominatingset}, it follows that 
$B_i$ is a dominating set for $G_i$. 

We have two kinds of graphs $G_i$. In one case we have that $G_i$ is $H^*$-minor-free for a graph $H^*$ whose size  depends only on $h$. In the other case we have that the graph $G_i$ has at most $h'$ vertices of degree at least $h'$. To obtain our kernel we will show the following  two lemmas. 

\begin{lemma}
\label{lem:newperspectivequasi}
There exists a constant $\delta$ such that if  $G$  is a graph with boundary $S$ such that $S$ is a dominating set for $G$ and $G$ has at most $h'$ vertices of degree at least $h'$, then in polynomial time, we can obtain a graph $G'$ with boundary $S$ 
such that \[G' \equiv_{\tDS} G \mbox{ and }  |V(G')| \leq \delta |S| .\] 
Furthermore we can also compute the translation constant $c$ of $G$ and $G'$ in polynomial time. 
\end{lemma}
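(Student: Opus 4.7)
The plan is to reduce $G$ to an equivalent graph of size linear in $|S|$ by iteratively identifying large $r$-{\sc DS}-protrusions (for some constant $r$ depending on $h'$) and replacing them using Lemma~\ref{lem:fiidomset}. The crucial structural fact is that $G$ has only $h'$ vertices of high degree, so outside a bounded ``apex'' set all degrees are bounded.

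First I would set $W$ to be the set of at most $h'$ vertices of degree $\geq h'$ in $G$, and define $T := S \cup W$, so $|T| \leq |S|+h'$ and every vertex in $L := V(G) \setminus T$ has at most $h'-1$ neighbors. For each $v \in L$ call $\tau(v) := N_G(v) \cap T$ the \emph{type} of $v$; note $|\tau(v)| \leq h'-1$. Since $S$ dominates $G$, every $v \in L$ has at least one neighbor in $S \subseteq T$, so $\tau(v) \neq \emptyset$.

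Next, I would apply the following reduction exhaustively. Let $R \subseteq T$ with $|R| \leq h'-1$ and consider the set $L_R$ of those $v \in L$ whose entire neighborhood lies in $R \cup L$ and whose ``$L$-component reached through $L_R$'' stays bounded (which is guaranteed by the degree bound $h'-1$ inside $L$). Grow a connected subset $X \subseteq L_R \cup (\text{small neighborhood in } L)$ until its boundary in $T$ is contained in $R$. By construction $|\partial_G(X)| \leq |R| \leq h'-1$ and $R$ dominates $X$, so $X$ is an $r$-{\sc DS}-protrusion with $r \leq h'-1$. Whenever $|X| > \xi_{h'-1}$, Lemma~\ref{lem:fiidomset} replaces $X$ by an equivalent boundaried graph of size $\leq \xi_{h'-1}$ and also yields the translation constant; each such replacement strictly decreases $|V(G)|$, so the process terminates in polynomial time.

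Finally, when no reduction applies, I would argue by a charging scheme that $|V(G)| \leq \delta|S|$ for a constant $\delta = \delta(h')$: assign each $v \in L$ to some vertex of $\tau(v) \subseteq T$. If some $t \in T$ were charged by too many vertices of $L$, then among these, pigeonholing on their types in $T \setminus \{t\}$ (of which there are only boundedly many ``profiles'' once we also bucket by $L$-adjacency patterns, thanks to the max-degree $h'-1$ bound) would produce a set large enough to form a fresh $r$-{\sc DS}-protrusion, contradicting irreducibility. Hence each $t \in T$ collects $O_{h'}(1)$ charges, yielding $|L| = O(|T|) = O(|S|)$, and $|V(G)| = |T| + |L| = O(|S|)$ as required. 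The translation constant for the whole procedure is the sum of the translation constants produced by each individual replacement.

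The main obstacle is the final charging/irreducibility step: turning ``no large $r$-{\sc DS}-protrusion exists'' into a $O(1)$-per-vertex-of-$T$ bound on $|L|$. Making this rigorous requires choosing the family of reductions so that any $t \in T$ with many neighbors of similar local structure in $L$ really does expose a protrusion of size exceeding $\xi_{h'-1}$; this is where the bounded degree of $L$ and the bounded size of $W$ are both essential. The replacement step itself is handled by Lemma~\ref{lem:fiidomset} via the FII property of \tDS, so once the structural extraction is in place the equivalence $G' \equiv_{\tDS,b_i} G$ follows, as does the polynomial running time.
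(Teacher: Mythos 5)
There is a genuine gap here --- in fact two, one mechanical and one structural. The mechanical one: your protrusion extraction miscomputes the boundary. In the paper's definition of an $r$-{\sc DS}-protrusion, the boundary is the set of vertices \emph{inside} $X$ that have a neighbor in $V(G)\setminus X$. Since $S$ is a dominating set of $G$ and your $X$ is contained in $L=V(G)\setminus T$ (disjoint from $S$), \emph{every} vertex of $X$ has a neighbor in $S\subseteq T\subseteq V(G)\setminus X$; hence $\partial_G(X)=X$, not a subset of $R$ (note $R\cap X=\emptyset$). So the sets you grow are never small-boundary protrusions, regardless of how you grow them, unless you place $R$ itself inside $X$ as its boundary --- fixable, but not what you wrote, and it changes what you must verify (e.g., that the $S$-neighbors of $Y=X\setminus R$ land in $R\cap S$, so that $R$ dominates $G[X]$). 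The structural one: your charging step fails because the type $\tau(v)=N_G(v)\cap T$ ranges over up to $|T|^{h'-1}=\Theta(|S|^{h'-1})$ possibilities --- polynomially many, not ``boundedly many profiles'' --- so pigeonholing on types cannot give an $O_{h'}(1)$-per-vertex-of-$T$ bound on $|L|$. The correct split of $T$ is different: each vertex of $S\setminus W$ has degree less than $h'$ and so dominates fewer than $h'$ vertices of $L$ by a direct degree count (no protrusion or pigeonhole needed), while the genuinely hard vertices are those of $L$ dominated \emph{only} through the apex set $W$ of at most $h'$ high-degree vertices, where the type space $2^{W}$ does have constant size $\leq 2^{h'}$. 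But this is exactly where protrusion replacement is least available: a single apex adjacent to many low-degree vertices (a star) admits no small-boundary protrusion avoiding the apex, which is why bucketing neighbors of a high-degree $t\in T$ cannot by itself ``expose'' a protrusion.

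The paper sidesteps both problems by not using {\sc DS}-protrusion replacement at all in this lemma. It invokes Lemma~\ref{lem:sumreductiondomset}: apply the Irrelevant Vertex Rule with a trivial one-bag tree decomposition whose apex set $A$ is the set of at most $h'$ vertices of degree at least $h'$, deleting any vertex fully dominated by every feasible $A'\subseteq A$; by Lemma~\ref{lem:domseteqiv1} each deletion preserves $\equiv_{\tDS,|S|}$ with translation constant $0$. By Lemma~\ref{lem:dstwbound}, the irreducible graph has the property that $G\setminus A$ admits a $2$-dominating set of size $O(|S|)$ (built from the approximate solutions $D(A')$ over the at most $2^{h'}$ feasible apex subsets, the representatives, and $S$), and since $G\setminus A$ has maximum degree below $h'$, each vertex of this $2$-dominating set accounts for at most $1+h'+h'^{2}$ vertices, giving $|V(G)|\leq h'+O(h'^{2})|S|\leq \delta|S|$. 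If you want to salvage your replacement-based route, you would need (i) protrusions of the form $X=A'\cup Y$ with $A'\subseteq W$ serving as (part of) the boundary and $Y\subseteq L$, so the boundary size is $O(h')$, and (ii) the direct degree count for $S\setminus W$ in place of type-pigeonholing over all of $T$; as written, your proof establishes neither the extraction step nor the final counting bound.
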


The second lemma is for \Hmf \, graphs.

\begin{lemma}
\label{lem:newperspectiveHminor}
There exists a constant $\delta$ such that  given an \Hmf \, graph $G$  with boundary $S$ 
such that $S$ is a dominating set for $G$ we can obtain, in polynomial time, a graph $G'$ with boundary $S$ such that 
\[G' \equiv_{\tDS} G \mbox{ and }  |V(G')| \leq \delta |S| .\] 
Furthermore we can also compute the translation constant $c$ of $G$ and $G'$ in polynomial time. 
\end{lemma}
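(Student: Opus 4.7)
The plan is to adapt the protrusion-decomposition and protrusion-replacement machinery of Fomin, Lokshtanov, Saurabh, and Thilikos~\cite{F.V.Fomin:2012} to our boundaried setting. The intuition is that because $S$ dominates $G$ and is contained in the boundary, the vertices of $V(G)\setminus S$ can only interact with the outside world through $S$; the $H$-minor-free structure then forces everything off of a linear-sized core to collapse into bounded-treewidth pieces that can be shrunk locally.

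First, I would invoke the machinery of \cite{F.V.Fomin:2012} in the following form: given an $H$-minor-free graph $G$ and a dominating set $S\subseteq V(G)$, one can in polynomial time produce a partition $V(G)=V_0\cup V_1\cup\cdots\cup V_m$ with $S\subseteq V_0$ such that (a)~$|V_0|\leq c\cdot|S|$ for a constant $c$ depending only on $H$, (b)~for each $i\geq 1$ the set $V_i$ is an $r$-protrusion of $G$ with $N(V_i)\subseteq V_0$, where $r$ depends only on $H$, and (c)~$\sum_{i\geq 1}|N(V_i)|\leq c\cdot|S|$. In~\cite{F.V.Fomin:2012} this decomposition is obtained from a constant-factor approximate dominating set by applying the bidimensionality-based irrelevant-vertex technique to cut the graph outside the neighbourhood of the approximate solution; here we simply use the given $S$ in place of the approximate set. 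In particular $m=O(|S|)$, since each $V_i$ contributes at least one vertex to $\bigcup_{i\geq 1}N(V_i)$.

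Second, to every protrusion $V_i$ of size exceeding the constant $\xi_r$ I would apply the standard protrusion-replacement lemma for \tDS\ on bounded-treewidth graphs (Lemma~7 of~\cite{H.Bodlaender:2009ng}), viewing $V_i$ as an $r$-boundaried graph with boundary $N(V_i)$. This returns in polynomial time an $r$-boundaried graph $V_i'$ with $|V_i'|\leq\xi_r$, $V_i\equiv_{\tDS,r}V_i'$, together with an integer translation constant $c_i$. Replacing each $V_i$ by $V_i'$ inside $G$ yields a graph $G'$, still with boundary $S$ (since the replacements touch only vertices of $V_i\setminus N(V_i)\subseteq V(G)\setminus V_0$, which are disjoint from $S$). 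For any $b_i$-boundaried graph $F$, the gluing $G\oplus F$ attaches $F$ to $G$ along $S$ only; hence gluing $F$ commutes with the local replacement of $V_i$ by $V_i'$, and iterating the equivalences $V_i\equiv_{\tDS,r}V_i'$ yields $G\equiv_{\tDS,b_i}G'$ with translation constant $c=\sum_{i\geq 1}c_i$, computed in polynomial time along the way.

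The size bound is then $|V(G')|\leq |V_0|+\sum_{i\geq 1}|V_i'|\leq c|S|+m\xi_r=O(|S|)$, giving the required $\delta$. The main obstacle is really in the first step: establishing in polynomial time a linear-sized protrusion decomposition of an $H$-minor-free graph from a dominating set rather than from structural decompositions such as Theorem~\ref{thm:structure theorem}. This is precisely where the bidimensionality and irrelevant-vertex ingredients of~\cite{F.V.Fomin:2012} are needed; once the decomposition is in hand, the commutation of protrusion replacement with the external boundary $S$, and the accumulation of the translation constants, are routine.
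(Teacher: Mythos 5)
There is a genuine gap, and it sits exactly where you yourself located the ``main obstacle'': step~1. The decomposition you posit --- a core $V_0 \supseteq S$ with $|V_0| = \cO(|S|)$ together with $r$-protrusions $V_1,\dots,V_m$ of \emph{constant} treewidth and constant boundary --- does not exist in general for $H$-minor-free graphs equipped with a dominating set $S$, and nothing in \cite{F.V.Fomin:2012} supplies it. Note that such a partition would make $V_0$ a treewidth-$r$ modulator of size $\cO(|S|)$, since $G \setminus V_0$ is the disjoint union of the $V_i$. Now take $G$ to be an $n \times n$ grid plus one apex vertex adjacent to all grid vertices: $G$ is $K_6$-minor-free, $S = \{\mathrm{apex}\}$ is a dominating set of size $1$, yet removing any $\cO(1)$ vertices leaves treewidth $\Omega(n)$. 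So step~1 fails outright, and with it the reduction to the standard bounded-treewidth replacement of \cite[Lemma~7]{H.Bodlaender:2009ng} in step~2. This is precisely the phenomenon the paper flags in its introduction: in this lemma neither the treewidth of the replaced piece nor its boundary size is bounded by a constant (the boundary is all of $S$, which may be linear in $k$), which is why the standard protrusion machinery cannot be invoked here. Relatedly, your description of the irrelevant-vertex technique as cutting the graph ``outside the neighbourhood of the approximate solution'' cannot be right as stated, since $S$ dominates $G$ and so $N[S]$ is everything.

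What the irrelevant-vertex machinery actually buys (Lemma~\ref{lem:sumreductiondomset}) is weaker than what you assume: after exhaustive application of the rule, every torso minus its apices has a $2$-dominating set of size $\cO(|S|)$, whence by bidimensionality $\tw(G) = \cO(\sqrt{|S|})$ --- small, but not constant. The paper therefore proceeds by divide and conquer rather than via a protrusion decomposition: it computes an approximate tree decomposition of width $\cO(\sqrt{|S|})$, uses Lemma~\ref{lemma:balsep1} to extract a balanced separator bag $X$ with $|X| = \cO(\sqrt{|S|})$ splitting $S$ roughly in thirds, recurses on $(G_1,S_1)$ and $(G_2,S_2)$ with $S' = S \cup X$ (still a dominating set), glues the two reduced halves back with $\oplus_\delta$ while chaining the two translation constants, and at the base case $|S| = \cO(1)$ replaces the entire piece --- viewed as an $|S|$-\tDS-protrusion --- via the generalized FII replacement of Lemma~\ref{lem:fiidomset}; this is where the paper's new notion of {\sc DS}-protrusion substitutes a small dominating set for bounded treewidth. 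The size bound then comes from the recurrence $\mu(|S|) \leq \max_{1/3 \leq \alpha \leq 2/3} \bigl\{ \mu\bigl(\alpha|S| + d\sqrt{|S|}\bigr) + \mu\bigl((1-\alpha)|S|\bigr) + d\sqrt{|S|} \bigr\}$, which solves to $\cO(|S|)$. Your step-2 bookkeeping (replacements avoid $S$, equivalences compose, constants add) is sound and mirrors the paper's gluing computation, but without the separator recursion there is no correct route to the decomposition on which your whole argument rests.
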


Once we have proved Lemmas~\ref{lem:newperspectivequasi} and \ref{lem:newperspectiveHminor}, 
we construct the linear sized kernel for \tDS \, as follows. Given the graph $G$ we 
obtain the slice decomposition and check if any of $G_i$ has size more than $\delta b_i$. (Recall that 
$B_i=(D \cap R_i^+)\cup Q_i$ and $b_i=|B_i|$.) 
 If yes then we either apply  
Lemma~\ref{lem:newperspectivequasi} or Lemma~\ref{lem:newperspectiveHminor} based on the type of $G_i$ 
and obtain a graph $G_i'$ such that  $G_i' \equiv_{\tDS} G_i \mbox{ and }  |V(G_i')| \leq \delta b_i$. We think 
$G=G_i\oplus G^{\star}$, where $G^{\star}=G\setminus (R_i^{+} \setminus B_i)$ as a $b_i$-boundaried graph with 
boundary $B_i$. Then we obtain  a smaller equivalent graph $G'=G^{\star}\oplus G_i' $ and $k'=k+c$.  After this we can 
repeat the whole process once again. This implies that when we cannot apply Lemmas~\ref{lem:newperspectiveHminor} or  
\ref{lem:newperspectivequasi}  on $(G,k)$ we have that each of 
$|V(G_i)|\leq \delta b_i$.  Furthermore notice that by the definition of the slice decomposition we have that $\cup_{i=1}^{\alpha} R_i^{+}=V(G)$. This implies that in this case we have the 
following
\begin{eqnarray*}
\sum_{i=1}^{\alpha} |R_{i}^+| & \leq & \delta \sum_{i=1}^{\alpha} b_i =  \delta (\sum_{i=1}^{\alpha} (|Q_i|+|(D \cap R_i^+)\setminus Q_i|)) \\
& = &
\delta (\sum_{i=1}^{\alpha} |Q_i| +  \sum_{i=1}^{\alpha} |(D \cap R_i^+)\setminus Q_i|) 
\leq \delta \delta_2 k + \delta \eta(H)k = \cO(k). 
\end{eqnarray*}
The last inequality follows from the fact that $\sum_{i=1}^{\alpha} |Q_i|$ is upper bounded by the second component of the slice decomposition and  $\sum_{i=1}^{\alpha} |(D \cap R_i^+)\setminus Q_i|)$ is upper bounded by the size of the approximate dominating set $D$. This brings us to the following theorem. 
\begin{theorem}
\label{thm:lineardomsettopo}
\tDS \, admits a linear kernel on graphs excluding a fixed graph $H$ as a topological minor. 
\end{theorem}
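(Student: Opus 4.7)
The plan is to combine the slice-decomposition with the two protrusion-replacement mechanisms built earlier in the paper. I would begin by invoking Lemma~\ref{lemma:approximation} on $(G,k)$; if the returned dominating set exceeds $\eta(H)k$, then $G$ admits no dominating set of size $k$ and we output a trivial \textsf{no}-instance. Otherwise, we enter a reduction loop that repeatedly calls Lemma~\ref{lem:slicedeco}. Each call either returns a $(\delta_1 k, \delta_2 k)$-slice decomposition, or it exhibits a large protrusion. If this protrusion is a $2h$-\tDS-protrusion of size exceeding $\xi_{2h}$, apply Lemma~\ref{lem:fiidomset} to replace it by an equivalent boundaried graph of size at most $\xi_{2h}$, adjusting $k$ by the returned translation constant. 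If it is a standard $h'$-protrusion, use the protrusion-replacement lemma of \cite{H.Bodlaender:2009ng} instead. Each such replacement strictly decreases the vertex count, so after at most $|V(G)|$ iterations the loop terminates and the current graph (for which we reuse the name $(G,k)$) admits a slice decomposition $\{M_1,\dots,M_\alpha\}$ with $\alpha = \cO(k)$.

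Next I would process each slice $G_i := G[R_i^+]$, viewed as a $b_i$-boundaried graph with boundary $B_i = (D \cap R_i^+) \cup Q_i$ where $Q_i = \bigcup_{e \in \mathcal{E}(M_i)} \kappa(e)$. By Lemma~\ref{lem:smalldominatingset}, $B_i$ dominates $G_i$, so the hypothesis of the per-slice replacement lemmas is met. Depending on whether the torso $\tau(M_i)$ is $H^*$-minor-free or has at most $h'$ vertices of degree exceeding $h'$, I would apply Lemma~\ref{lem:newperspectiveHminor} or Lemma~\ref{lem:newperspectivequasi} respectively to obtain an equivalent boundaried graph $G_i'$ with $|V(G_i')| \leq \delta b_i$, together with its translation constant. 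Splice $G_i'$ back into $G$ in place of $G_i$, adjust $k$, and iterate until no slice exceeds $\delta b_i$ vertices. Since the $R_i^+$ cover $V(G)$, the final graph satisfies
\[
|V(G)| \;\leq\; \sum_{i=1}^\alpha |R_i^+| \;\leq\; \delta \sum_{i=1}^\alpha b_i \;\leq\; \delta\Bigl(\sum_{i} |Q_i| + \sum_{i} |(D \cap R_i^+) \setminus Q_i|\Bigr) \;\leq\; \delta(\delta_2 k + \eta(H)k) \;=\; \cO(k),
\]
using $\sum_i |Q_i| \leq \delta_2 k$ from the slice decomposition and the disjointness of the sets $(D \cap R_i^+)\setminus Q_i$ across $i$, which lets their sizes be collectively bounded by $|D| \leq \eta(H)k$.

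The main obstacle, and the novel ingredient the paper pivots on, is not the assembly of these steps but rather Lemma~\ref{lem:newperspectivequasi}: slices of the second type can be bounded-degree graphs of large treewidth, so the classical bounded-treewidth protrusion framework simply does not apply to them. The generalized \tDS-protrusion notion, combined with the per-slice bound on the number of high-degree vertices and the fact that the boundary doubles as a dominating set, is what permits a \emph{boundary-linear} replacement in this regime. Once that replacement is in hand, the remaining bookkeeping is the routine aggregation above, and the linear kernel for \tDS on $\mathcal{C}_H$ follows.
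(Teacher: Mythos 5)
Your proposal follows the paper's own proof of Theorem~\ref{thm:lineardomsettopo} essentially verbatim: the same approximation-based \textsf{no}-instance check, the same reduction loop alternating Lemma~\ref{lem:slicedeco} with Lemmas~\ref{lem:fiidomset} and the protrusion replacement of \cite{H.Bodlaender:2009ng}, the same per-slice application of Lemmas~\ref{lem:newperspectivequasi} and~\ref{lem:newperspectiveHminor} with boundary $B_i=(D\cap R_i^+)\cup Q_i$, and the identical final aggregation bound. The argument is correct and complete as the paper intends it.
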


It only remains to prove Lemmas~\ref{lem:newperspectivequasi} and~\ref{lem:newperspectiveHminor} to complete the proof of 
Theorem~\ref{thm:lineardomsettopo}.

\subsection{Irrelevant Vertex Rule and proofs for  Lemmas ~\ref{lem:newperspectivequasi} and \ref{lem:newperspectiveHminor}}
For the proofs of  Lemmas~\ref{lem:newperspectivequasi} and~\ref{lem:newperspectiveHminor} we will introduce a reduction rule that removes irrelevant vertices. 
If the graph $G$ is $K_{h'}$-minor-free then the irrelevant vertex rule  will be used in a recursive fashion. 
In each recursive step it is used in order to reduce the treewidth of torsos and hence also the entire graph. Then the graph 
is split in two pieces and the procedure is applied recursively to the two pieces. In the leaf of the recursion tree when the graph becomes 
smaller but still big enough then we apply Lemma~\ref{lem:fiidomset} on it and obtain an equivalent instance. 

Let $G$ be  a graph given with its  tree-decomposition $(M,\Psi)$ 
as described in Theorem~\ref{thm:structure theorem}, and $\tau(t)$ be one of its torsos. Let $S$ be a dominating set of $G$, and $Z_t=A$, $|A|\leq h$, be the set of apices of $\tau(t)$. The reduction rule essentially ``preserves'' all dominating sets of size at most $|S|$ in $G$, without introducing any new ones. 
To describe the reduction rule we need several definitions. 
The first step in our reduction rule is to classify different subsets $A'$ of $A$ into feasible and infeasible sets. The intuition behind the definition is that a subset $A'$ of $A$ is feasible if there exists a set $D$ in $G$ of size at most $|S|+1$  such that $D$ dominates all but $S$ and   $D \cap A = A'$. However, we cannot test in polynomial time whether such a set $D$ exists. We will therefore say that a subset $A'$ of $A$ is {\em feasible} if the $\eta(H)$-approximation for \tDS \, (Lemma~\ref{lemma:approximation}) 
outputs a set $D$ of size at most $\eta(H)(|S|+2)$ such that $D$  dominates $V(G) \setminus (A \cup S)$ and $D \cap A = A'$. Observe that if such a set $D$ of size at most $|S|+1$ exists then $A'$ is surely feasible in the first sense, while if no such set $D$ of size at most $\eta(H)(|S|+2)$ exists, then $A'$ is surely not feasible (again in the first sense). We will frequently use this in our arguments. Let us remark that there always exists a feasible set $A' \subseteq A$. In particular, $A' = S \cap A$ is feasible since $S$ dominates $G$. For feasible sets $A'$ we will denote by $D(A')$ the set $D$ output by the approximation algorithm.

For every subset $A' \subseteq A$, we select a vertex $v$ of $G$ such that $A'\subseteq N_G[v]$. If such a vertex exists,  we call it a \emph{representative} of $A'$. 
Let us remark that some sets can have no representatives and some distinct subsets of $A$ may have the same representative. 
We define $R$ to be the set of representative vertices for subsets of $A$. The size of $R$ is at most $2^{|A|}$. For $A' \subseteq A$,  the set of {\em dominated vertices} (by $A'$) is $W(A')= N(A') \setminus A$. We say that a  vertex $v \in V(G) \setminus A$ is {\em fully dominated} by $A'$ if $N[v] \setminus A \subseteq W(A')$. A vertex $w \in V(G) \setminus A$ is {\em irrelevant with respect to $A'$} if $w \notin R$, $w \notin S$,  and $w$ is fully dominated by $A'$. 

Now we are ready to state the irrelevant vertex rule. 
\begin{description}
\item[Irrelevant Vertex Rule:] If a vertex $w$ is irrelevant with respect to every feasible $A' \subseteq A$, then delete $w$ from $G$. 
\end{description}

 \begin{lemma}
 \label{lem:domseteqiv1}
Let $S$ be a dominating set in $G$, and $G'$ be the graph obtained by applying the Irrelevant Vertex Rule on $G$, where $w$ was the deleted vertex. Then $G' \equiv_{\tDS} G $.
\end{lemma}
 \begin{proof}
 We view $G$ and $G'$ as graphs with boundary $S$. 
 Let the transposition constant be $0$. To prove that $G' \equiv_{\tDS} G $, we show that given a $|S|$-boundaried graph $G_1$ and a positive 
 integer $\ell$ we have that $(G\oplus G_1,\ell)\in \tDS\, \Leftrightarrow (G'\oplus G_1,\ell)\in \tDS\,$. 
  Let $Z \subset V(G\oplus G_1)$  be a dominating set for $G\oplus G_1$ of size at most $\ell$.  Let $Z_1=V(G)\cap Z$. If $|Z_1|>|S|$ then 
  $(Z\setminus Z_1)\cup S$ is a smaller dominating set for $G\oplus G_1$. Therefore we assume that $|Z_1|\leq |S|$.  Let $A' = Z \cap A$, and observe that $A'$ is feasible because  $Z_1$ dominates all but $S$.  If $w \notin Z$, then $Z'=Z$ is a dominating set of size at most $\ell$ for 
  $G'\oplus G_1$. So assume $w\in Z$. Observe that $w\in Z_1$ and $w\notin S$ and therefore all the neighbors of $w$ lie in $G$. Since $w$ is irrelevant with respect to all feasible subsets of $A$ and $A'$ is feasible, we have that $w$ is irrelevant with respect to $A'$. Hence 
  $N_{G\oplus G_1}(w) \setminus N_{G\oplus G_1}(Z \setminus{w}) \subseteq A$. There is a representative $w' \in R$, $w' \neq w$ (since $w \notin R$), such that $(N_{G\oplus G_1}(w)=N_G(w)) \cap A \subseteq N_G(w') \cap A$. Hence $Z' = (Z \cup \{w'\}) \setminus \{w\}$  is a dominating set of 
  $G'\oplus G_1$ of size at most $\ell$. 
  
Now, let $Z' \subseteq V(G'\oplus G_1)$  be a dominating set of size at most $\ell$ for $G'\oplus G_1$.  Let  $Z_1'=V(G')\cap Z'$. As in the forward direction we can assume that $|Z_1'|\leq |S|$. We show that $Z'$ also dominates $w$ in $G\oplus G_1$. Specifically $Z_1' \cup \{w\}$ is a set dominating all but $S$ in $G$ of size at most $|S|+1$ so $Z_1' \cap A$ is feasible. Since $\{w\}$ is irrelevant with respect to $Z_1' \cap A$, we have $w \in N_G(Z_1' \cap A)$ and thus $Z'$ is a dominating set for $ G\oplus G_1$ of size at most $\ell$. This concludes the proof. 
 \end{proof}

For a graph $G$ and its dominating set $S$, we apply the Irrelevant Vertex Rule exhaustively on all torsos of $G$, obtaining an induced subgraph $G'$ of $G$. By Lemma~\ref{lem:domseteqiv1} and transitivity of $\equiv_{\tDS}$ we have that $G'\equiv_{\tDS}G$. We now prove that a graph $G$ which can not be reduced by the irrelevant vertex rule has a property that each of its torso has a small $2$-dominating set. 


\begin{lemma}
\label{lem:dstwbound} Let $G$ be a graph which is irreducible by the Irrelevant Vertex Rule and $S$ be a dominating set of $G$. 
For every torso $\tau(t)$ of the tree-decomposition $(M,\Psi)$ of $G$, we have that $\tau(t)\setminus Z_t$ 
has a $2$-dominating set of size $\cO(|S|)$. Furthermore if $G$ is a \Hmf \, graph then $\tw(G)=\cO(\sqrt{|S|})$. 
\end{lemma}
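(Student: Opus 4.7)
The plan is to construct, for each torso $\tau(t)$ with apex set $A=Z_t$, an explicit subset $T\subseteq V(\tau(t))\setminus A$ of size $\cO(|S|)$ and to verify that $T$ is a $2$-dominating set of $\tau(t)\setminus A$; the treewidth bound will then follow by applying bidimensionality to each torso separately. I take
\[
T \;=\; \Bigl[S \,\cup\, R \,\cup\, \bigcup_{A'\subseteq A \text{ feasible}} D(A')\Bigr] \cap \bigl(V(\tau(t))\setminus A\bigr).
\]
The size bound $|T|=\cO(|S|)$ is immediate: the contribution from $S$ is at most $|S|$, from $R$ it is at most $2^{h}$, and there are at most $2^{h}$ feasible subsets $A'\subseteq A$, each contributing at most $2(|S|+2)$ vertices via the $2$-approximation used to define feasibility.

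To show that $T$ is a $2$-dominating set of $\tau(t)\setminus A$, I consider $v\in V(\tau(t))\setminus A$ with $v\notin T$; in particular $v\notin S\cup R$. Since $v$ survived the exhaustive application of the Irrelevant Vertex Rule with respect to $\tau(t)$, there is a feasible $A'\subseteq A$ for which $v$ is not fully dominated by $A'$, producing $u\in N_G[v]\setminus A$ with $u\notin N_G(A')$. Because $D(A')$ dominates $V(G)\setminus(A\cup S)$, one of the following holds: $u\in S$, $u\in D(A')$, or $u$ has a $G$-neighbor $d\in D(A')\setminus A$. In each branch I would exhibit a path of length at most two from $v$ to a vertex of $T$ using only edges present in $\tau(t)\setminus A$.

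The main obstacle is that the witness $u$ (respectively $d$) may lie outside $\Psi(t)$, whereas distances must be measured inside $\tau(t)\setminus A$. I would handle this by exploiting the clique structure of torsos: any $G$-neighbor of $v$ outside $\Psi(t)$ forces $v$ into an adhesion set $\kappa(e)$ for some tree-edge $e$ incident to $t$, and $\tau(t)$ contains the full clique on $\kappa(e)\setminus A$. The representative set $R$ was chosen so that, for every subset of $A$, a single in-bag witness captures the adjacency pattern coming from outside $\Psi(t)$; composing these two facts lets me replace the out-of-bag witness by a vertex of $\Psi(t)$ lying in $T$ and reachable within two clique-or-original edges. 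This adhesion-to-clique translation is the technical heart of the argument.

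Finally, to obtain the treewidth bound, assume that $G$ is $H$-minor-free. By Theorem~\ref{thm:structure theorem} only torsos of the first type occur, so each $\tau(t)\setminus Z_t$ is $h$-nearly embeddable in a surface excluding $H$ and therefore belongs to a fixed apex-minor-free class. By the bidimensionality of \tDS\ (in its standard extension to $r$-dominating sets), a graph in this class possessing a $2$-dominating set of size $\ell$ has treewidth $\cO(\sqrt{\ell})$; applying this to the set $T$ gives $\tw(\tau(t)\setminus Z_t)=\cO(\sqrt{|S|})$. Re-adding the at most $h$ apices costs only an additive constant, and since $(M,\Psi)$ has adhesion at most $h$, a standard tree-decomposition assembly bounds $\tw(G)$ by the maximum torso treewidth plus $h$, yielding $\tw(G)=\cO(\sqrt{|S|})$.
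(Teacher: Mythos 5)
There is a genuine gap at what you yourself call the ``technical heart'' of the argument. Your set $T$ is the restriction of the natural global set $Q=S\cup R\cup\bigcup_{A'\text{ feasible}}D(A')$ to the bag $\Psi(t)$, but this restriction need not $2$-dominate $\tau(t)\setminus A$. The irreducibility of $v$ only yields a witness $u\in N_G[v]\setminus A$ with $u\notin N(A')$, and a dominator $d\in D(A')$ of $u$; both $u$ and $d$ may lie in a subtree hanging below a child of $t$. Your proposed repair via the adhesion cliques does not close this: the clique on $\kappa(e)$ puts $v$ adjacent to the other adhesion vertices, but those adhesion vertices need not belong to $T$, nor be adjacent to $T$ in the torso, so $v$ can remain at torso-distance greater than $2$ from $T$. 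Moreover, the sentence ``$R$ was chosen so that a single in-bag witness captures the adjacency pattern coming from outside $\Psi(t)$'' misstates the construction: $R$ consists of representatives of subsets $A'\subseteq A$ of the \emph{apex} set (vertices $v$ with $A'\subseteq N_G[v]$); it encodes nothing about adjacencies crossing the bag boundary, so it cannot play the translating role you assign to it.

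The paper sidesteps this entire difficulty with a one-line monotonicity argument that your proposal is missing. It first shows that $Q$ (of size $\cO(|S|)$) is a $2$-dominating set of the whole graph $G\setminus A$ --- exactly your case analysis, but with distances measured in $G\setminus A$, where it is unproblematic. It then observes that $\tau(t)^*=\tau(t)\setminus A$ is obtained from $G\setminus A$ by contracting the edges in $E(G\setminus A)\setminus E(\tau(t)^*)$ and adding the edges in $E(\tau(t)^*)\setminus E(G\setminus A)$, and that neither contraction nor edge addition can increase the minimum size of a $2$-dominating set. The resulting $2$-dominating set of the torso is the \emph{image} of $Q$ under these contractions --- in particular it contains adhesion vertices onto which out-of-bag portions of $Q$ are contracted, which is precisely what your $T=Q\cap(\Psi(t)\setminus A)$ lacks. (Your approach could be salvaged by augmenting $T$ with one adhesion vertex per component of $G\setminus(A\cup\Psi(t))$ meeting $Q$, keeping the size $\cO(|S|)$, but as written the claim about $T$ is false in general.) Your treewidth paragraph --- apex-minor-freeness of $\tau(t)^*$, bidimensionality of $2$-domination, re-adding the at most $h$ apices, and bounding $\tw(G)$ by the maximum torso treewidth --- matches the paper's argument and is sound once the $2$-dominating set of the torso has actually been exhibited.
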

\begin{proof}
Let $\tau(t)^*= \tau(t) \setminus A$, where $A$ are the apices of $\tau(t)$. We will obtain a $2$-dominating set of size $\cO(|S|)$ in $\tau(t)^*$. Towards this end, consider the following set, $$Q=\bigcup_{A' \subseteq A, A' \text{is feasible} }D(A')\cup R\cup (S \setminus A).$$ 
The number of   representative vertices $R$ and the number of feasible subsets $A'$ 
 is at most $2^{|A|}\leq 2^h$, 
  where $h$ is a constant depending only on $H$. The size of $D(A')$ is at most $\eta(H)(|S|+2)$ for every $A'$. Thus  $|Q|\leq 2^h (\eta(H)(|S|+2)) + 2^h + |S|=\cO(|S|)$. We prove that $Q$ is a $2$-dominating set of $V(G) \setminus A$. Let $w \in V(G) \setminus A$. If $w \in R$ or $w \in S$, then $Q$ dominates $w$. So suppose $w \notin R \cup S$. Then, since $w$ is not irrelevant, we have that  there is a feasible subset $A'$ of $A$ such that $w$ is relevant with respect to $A'$. Hence $w$ is not fully dominated by $A'$ and so $w$ has a neighbour $w' \in V(G) \setminus N[A']$. But $w'$ is dominated by $D(A') \subseteq Q$, and thus $w$ is $2$-dominated by $Q$ in $G \setminus A$. Hence $G \setminus A$ has a $2$-dominating set of size $\cO(|S|)$.

The graph $\tau(t)^*$ can be obtained from $G \setminus A$ by contracting all edges in $E(G \setminus A) \setminus E(\tau(t)^*)$ and adding all edges in $E(\tau(t)^*) \setminus E(G \setminus A)$. Since contracting and adding edges does  not increase the size of a minimum $2$-dominating set of a graph, $\tau(t)^*$ has a $2$-dominating set of size $\cO(|S|)$. This completes the proof for the first part.

Now assume that $G$ is a \Hmf \, graph. It is well known  that the treewidth of a \Hmf \, graph is at most the maximum treewidth of its torsos, see e.g.\cite{DemaineFHT05sub}. Thus to show that $\tw(G)=\cO(\sqrt{|S|})$ it is sufficient to show that its torsos have small treewidth.  To conclude, $\tau(t)^*$ excludes an apex graph as a minor  (see, e.g.~\cite[Theorem $13$]{Grohe03}) and it has a $2$-dominating set of size $\cO(|S|)$. By the bidimensionality of $2$-dominating set, we have that $\tw(\tau(t)^*)=\cO(\sqrt{|S|})$~\cite{DemaineFHT05sub,FominGT09con}. Now we add all the apices of $A$ to all the bags of the tree-decomposition  of $\tau(t)^*$ to obtain a tree-decomposition for $\tau(t)$ of width   $ \cO(\sqrt{|S|})+h =\cO(\sqrt{|S|})$.  
\end{proof}

Let us also remark that Irrelevant Vertex Rule is based on the performance of a polynomial time approximation algorithm. Thus by 
Lemmas~\ref{lemma:approximation}, \ref{lem:domseteqiv1} and ~\ref{lem:dstwbound}, and the fact that the treewidth of a graph is at most the maximum treewidth of its torsos, see e.g.\cite{DemaineFHT05sub}, we obtain the following lemma.
\begin{lemma}
\label{lem:sumreductiondomset}
There is a polynomial time algorithm that for  a given graph $G$ and a dominating set $S$ of $G$, outputs graph $G'$ such that $G' \equiv_{\tDS} G $ and for every torso $\tau(t)$ of the tree-decomposition $(M,\Psi)$ of $G$, we have that $\tau(t)\setminus Z_t$ has a $2$-dominating set of size $\cO(|S|)$. Furthermore if $G$ is a \Hmf \, graph then $\tw(G)=\cO(\sqrt{|S|})$. 
\end{lemma}

Before we proceed further, we show the power of Lemma~\ref{lem:sumreductiondomset} by deriving a simple subexponential time algorithm for \tDS \, on \Hmf \, graph. This is one of the cornerstone results in~\cite{DemaineFHT05sub} and is based on a non-trivial two-layer dynamic programming over clique-sum decomposition tree of a \Hmf \, graphs. 
Lemma~\ref{lem:sumreductiondomset} can be used to obtain much simpler algorithm. Given a graph $G$ and a positive integer $k$ we first apply a factor $2$-approximation algorithm given in~\cite{DemaineHaj05,FominLRS10} for \tDS \, on $G$ and obtain a set $S$. If the size of $S$ is more than $2k$ then we return that $G$ does not have a dominating set of size at most $k$. Otherwise,  we apply Lemma~\ref{lem:sumreductiondomset} and obtain an equivalent graph $G'$ such that $\tw(G') = \cO(\sqrt{k})$. Now applying a constant factor approximation algorithm developed in~\cite{DemaineFHT05sub} for computing the treewidth on $G'$ we get a tree-decomposition of width  $\cO(\sqrt{k})$. It is well known that checking whether a graph with treewidth $t$ has a dominating set of size at most $k$ can be done in time $\cO(3^{t} n^{\cO(1)})$ ~\cite{RooijBR09}. This together with the above bound on the treewidth, gives us an alternative proof of the following theorem.
\begin{theorem}[\cite{DemaineHaj05}]\label{THM:Demaine}
Given an $n$-vertex  graph $G$ excluding a fixed graph $H$ as a minor, one can check whether $G$ has a dominating set of size at most $k$ in time $2^{\cO(\sqrt{k})}n^{\cO(1)}$. 
\end{theorem}

Having Lemma \ref{lem:sumreductiondomset} proving Lemma~\ref{lem:newperspectivequasi} becomes simple.

\begin{proof}[Proof of Lemma~\ref{lem:newperspectivequasi}]
We apply Lemma~\ref{lem:sumreductiondomset} to $G$ with a decomposition that has a single bag containing the entire graph and the apices $A$ of the bag being the vertices of degree at least $h'$. By Lemma~\ref{lem:sumreductiondomset}, $G \setminus A$ has a $2$-dominating set of size $\delta_3|S|$. Since all vertices of $G \setminus A$ have degree at most $h'$ it follows that $|V(G)| \leq h' +  \delta_3 |S| + \delta_3h|S|+ \delta_3h^2|S| \leq \delta|S|$. 
\end{proof}

We need the following well known lemma, see e.g.\ \cite{Bodlaender98}, on separators in graphs of bounded treewidth for the proof of Lemma~\ref{lem:newperspectiveHminor}. 
\begin{lemma}
\label{lemma:balsep1}
Let $G$ be a graph given with a tree-decomposition of width at most $t$ and 
$w:V(G)\rightarrow \{0 ,1\}$ 
  be a weight function. 
Then  in polynomial time we can find a bag $X$ of the given tree-decomposition 
such that for every connected component $G[C]$ of $G\setminus X$, $w(C)\leq w(V(G))/2$. Furthermore, 
the connected components $C_1,\ldots,C_\ell$ of $G \setminus X$ can be grouped into two sets 
$V_1$ and $V_2$ such that 
$\frac{w(V(G))-w(X)}{3}\leq w(V_i)\leq \frac{2(w(V(G))-w(X))}{3}$, for $i\in \{1,2\}$. 
\end{lemma}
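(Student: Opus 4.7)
The plan is to use the classical balanced-separator walk on the tree decomposition $(M,\Psi)$. Root $M$ at an arbitrary node $r$ and, for every node $t\in V(M)$ and every neighbor $s$ of $t$, let $M_{t,s}$ denote the connected component of $M-t$ containing $s$ and set $f_{t,s}:=w(\Psi(M_{t,s})\setminus\Psi(t))$. Note that $\sum_{s\sim t}f_{t,s}=w(V(G))-w(\Psi(t))$ and, more importantly, every connected component of $G\setminus\Psi(t)$ is entirely contained in a unique set $\Psi(M_{t,s})\setminus\Psi(t)$, because the bags containing a fixed vertex of $G$ form a subtree of $M$. The algorithm starts at $r$ and, while some neighbor $s$ of the current node $t$ satisfies $f_{t,s}>w(V(G))/2$, moves from $t$ to $s$; when no such neighbor exists it halts and returns $X:=\Psi(t)$.

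First I would argue termination and the component bound. Once we move from $t$ to $s$, the weight on the ``back'' side (toward $t$ as seen from $s$) equals $w(V(G))-f_{t,s}-w(\Psi(t)\cap\Psi(s))<W/2$ (where $W=w(V(G))$), so that side never triggers a move again and the walk strictly progresses along a directed path in $M$; hence it halts after at most $|V(M)|$ steps. At termination every $f_{t,s}\le W/2$, and since each component $C$ of $G\setminus X$ is contained in some $\Psi(M_{t,s})\setminus\Psi(t)$, we get $w(C)\le f_{t,s}\le W/2$, proving the first assertion.

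For the grouping, set $W':=W-w(X)$ and sort the components $C_1,\ldots,C_\ell$ in non-increasing order of weight. I would perform a greedy packing: add $C_1,C_2,\ldots$ to $V_1$ in order, stopping the first time $w(V_1)\ge W'/3$, and place the remaining components in $V_2$. If $w(C_1)\ge W'/3$ then already $V_1=\{C_1\}$ works provided $w(C_1)\le 2W'/3$; otherwise every component has weight strictly less than $W'/3$, so the last addition raises $w(V_1)$ by less than $W'/3$ and yields $W'/3\le w(V_1)<2W'/3$. Either way $w(V_2)=W'-w(V_1)$ lies in the same interval. The only obstruction is a component $C_1$ with $w(C_1)>2W'/3$; combined with $w(C_1)\le W/2$ from the first part, this forces $w(X)>W/4$, and I would resolve it by descending one step further from $t$ into the unique neighbor $s^\ast$ whose side contains $C_1$ and repeating the walk inside the sub-instance carried by $\Psi(M_{t,s^\ast})\cup\kappa(ts^\ast)$. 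Each such refinement strictly decreases the weight of the heaviest side, so after polynomially many iterations we arrive at a bag simultaneously satisfying both conclusions.

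The main obstacle is the last point: making rigorous that the iterative descent needed when $w(X)>W/4$ always terminates with a \emph{single} bag whose removal admits the $1/3$--$2/3$ partition, which requires a careful monovariant (e.g.\ the weight of the heaviest component over all candidate bags) to certify progress. Everything else is routine given that the tree $M$ and the bags $\Psi(t)$ are provided, so the overall running time is polynomial in the size of the input.
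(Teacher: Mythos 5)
Your Phase~1 (the centroid walk with absolute threshold $W/2$, $W=w(V(G))$) and your greedy grouping are both correct as far as they go, and you have correctly isolated the exact obstruction: a single component $C_1$ with $w(C_1)>\frac{2}{3}(W-w(X))$, which forces $w(X)>W/4$. Note that the paper itself offers no proof of this lemma --- it is invoked as ``well known'' with a citation to Bodlaender --- so the relevant comparison is with the standard folklore argument, which is precisely your Phase~1. The problem is the step you flag yourself: the iterative descent cannot be made to work, and the reason is not a missing monovariant but that \emph{the statement as written is false} in corner cases, so no descent argument can close the gap. Concretely, let $G$ be the path $a\,b\,c$ with $w(a)=w(c)=1$, $w(b)=0$, and the tree decomposition with two bags $\{a,b\}$ and $\{b,c\}$. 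Here $W=2$, and for either bag $X$ we have $w(X)=1$, $W'=W-w(X)=1$, and $G\setminus X$ consists of a single component of weight $1>\frac{2W'}{3}$; the required grouping with both $w(V_i)\in\bigl[\frac{W'}{3},\frac{2W'}{3}\bigr]$ is impossible for \emph{every} bag. On this instance your proposed refinement literally oscillates between the two bags, and your candidate monovariant (weight of the heaviest component over candidate bags) stays constant at $1$, so it certifies no progress.

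What \emph{is} true, and what the paper actually uses in the proof of Lemma~\ref{lem:newperspectiveHminor} (only the bound $|V_i\cap S|\le 2|S|/3$, i.e.\ an upper bound relative to the \emph{total} weight $W$, is needed there), follows from your Phase~1 alone: at the terminal bag every component has weight at most $W/2$, and then a prefix-sum argument on the components sorted in non-increasing order $c_1\ge c_2\ge\cdots$ shows that some prefix sum lands in the interval $\bigl[W'-\frac{2W}{3},\frac{2W}{3}\bigr]$ (if $W'\le \frac{2W}{3}$ take the empty prefix; otherwise $s_1=c_1\le \frac{W}{2}\le\frac{2W}{3}$, and for $j\ge 1$ with $s_j<W'-\frac{2W}{3}\le\frac{W}{3}$ one has $s_{j+1}\le 2s_j<\frac{2W}{3}$, so the window cannot be skipped). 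This yields a grouping with $w(V_i)\le\frac{2}{3}w(V(G))$ for both $i$, which is the correct form of the ``furthermore'' clause; the lower bound $\frac{W-w(X)}{3}$ and the $W'$-relative upper bound must be dropped (or the bag's weight must be assumed small, e.g.\ $w(X)\le W/4$, which holds in the paper's application since $|X|=\cO(\sqrt{|S|})$ while $W=|S|$). So: repair the statement rather than the descent.
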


\begin{proof}[Proof of Lemma~\ref{lem:newperspectiveHminor}]
By $(G,S)$ we denote the graph with boundary $S$. 
By Lemma~\ref{lem:sumreductiondomset},  we may assume that $\tw(G) = \cO(\sqrt{|S|})$. We prove the lemma using induction on $|S|$. 
If $|S|=\cO(1)$ we are done, as in this case we know that $G$ is a $|S|$-{\sc DS} protrusion. Thus, if $|V(G)|>\xi_{|S|}$ then 
 we can apply Lemma~\ref{lem:fiidomset} and in polynomial time obtain a graph $G^*$ such that $G^*\equiv_{\tDS} G$ and 
 $|V(G^*)|\leq \xi_{|S|}$. In the same time we can compute the translation constant depending on $G$ and $G^*$ 
 and return it. Thus, we return $G^*$  and the translation constant $c$.

Otherwise, using a constant factor approximation of treewidth on \Hmf \, graphs~\cite{FeigeHajLee08}, we compute a tree-decomposition of $G$ of width $d\sqrt{|S|}$, for some constant $d$. Now, by applying Lemma~\ref{lemma:balsep1} on this decomposition, we find a partitioning of $V(G)$ into $V_1$, $V_2$ and $X$ such that there are no edges from $V_1$ to $V_2$, $|X| \leq d\sqrt{|S|}+1$, and $|V_i \cap S| \leq 2|S|/3$ for $i\in \{1,2\}$. Let 
$S' = S \cup X$. Observe that $S'$ is also a dominating set. 

Let $S_1=S' \cap (V_1 \cup X)$ and $S_2=S' \cap (V_2 \cup X)$.  Let $G_1=G[V_1 \cup X]$ and $G_2=G[V_2 \cup X]$. 
We now apply the algorithm recursively on 
$(G_1, S_1)$ and $(G_2, S_2)$ and obtain graphs 
$G_1'$, $G_2'$  such that for $i \in \{1,2\}$,  $G_{i}\equiv_{\tDS} G_i$. Let $c_1$ and $c_2$ be the translation 
constants returned by the algorithm.  Since $X \subseteq S'$, we have that $S_i$ is   a dominating set of $G_i$ and hence we actually can run the algorithm recursively on the two subcases. The algorithm returns $G_1'$ and $G_2'$ and translation constants $c_1$ and $c_2$. Let 
$G'=G_1'\oplus_\delta G_2'$ and $S'=S_1\cup S_2$.  We will show that $G'\equiv_{\tDS} G$. Let $G_3$ be a graph with boundary $S'$ and $k$ be a positive integer.  Then
\begin{eqnarray*}
((G_1\oplus_\delta G_2)\oplus G_3, k) & \in & \tDS \\
\iff  ((G_1\oplus_\delta G_3)\oplus G_2, k) & \in & \tDS \\
\iff  ((G_1\oplus_\delta G_3)\oplus G_2', k+c_2) & \in & \tDS \\
\iff  ((G_2'\oplus_\delta G_3)\oplus G_1, k+c_2) & \in & \tDS \\
\iff  ((G_2'\oplus_\delta G_3)\oplus G_1', k+c_2+c_1) & \in & \tDS \\
\iff  ((G_2'\oplus_\delta G_1')\oplus G_3, k+c_2+c_1) & \in & \tDS. \\
\end{eqnarray*}
This proves that $G'\equiv_{\tDS} G$.  Now we will show that  $|V(G')|\leq \cO(|S|)$. 

Let $\mu(|S|)$ be the largest possible
size of the set $|V(G')|$ output by the algorithm when run on a graph $G$ with a
dominating set $S$. We
upper bound $|V(G')|$ by the following recursive formula.
 \begin{eqnarray*}
 |V(G')| & \leq & \max_{1/3 \leq \alpha \leq 2/3 }\left\{ \mu\left(\alpha |S|  + d \sqrt{|S|}\right) 
	    +  \mu\left( (1-\alpha)|S|  \right) 
	   +  d \sqrt {|S|}  \right\}.
\end{eqnarray*}
Using simple induction one can show that the above solves to $\cO(|S|)$. See for an example~\cite[Lemma~$2$]{FominLRS10}. Hence we conclude that $|V(G')|=\cO(|S|)=\cO(k)$. This completes the proof of the lemma. 
\end{proof}

The algorithm of Demaine et al. \cite{DemaineHaj05} computing a dominating set of size $k$ in an  $n$-vertex  \Hmf \, graph uses exponential (in $k$) space
$2^{\cO(\sqrt{k})} n^{\cO(1)}$. 
Theorem~\ref{thm:lineardomsettopo} implies almost directly the following refinement of Theorem~\ref{THM:Demaine}.
  \begin{theorem} 
 Given an $n$-vertex  graph $G$ excluding a fixed graph $H$ as a minor,  one  can check whether $G$ has a dominating set of size at most $k$ in time 
$2^{\cO(\sqrt{k})}+ n^{\cO(1)}$ and space $(nk)^{\cO(1)}$. 
\end{theorem}
\begin{proof} 
%

Our algorithm first applies Theorem~\ref{thm:lineardomsettopo} to obtain a graph with $O(k)$ vertices.
Now we are assuming that the number of vertices in $G$ is $n=\cO(k)$. 
We solve a slightly more general version of domination, where we are   given a subset $S$ and the requirement is to find a set $D$ of size at most   $k$ such that for every $v\in V(G) \setminus S$, $N[v] \cap D \neq \emptyset$. When $S=\emptyset$, the set $D$ is a dominating set of size $k$.
By the separator theorem of Alon et al.  \cite{AlonST90}
 for \Hmf \, graphs, one can find in polynomial time a partition of $V(G)$ into $V_1$, $V_2$ and $X$ such that $|X| \leq \cO(\sqrt{n})$, there are no edges from $V_1$ to $V_2$ and $|V_i| \leq 2n/3$ for $i \in \{1,2\}$. The algorithm finds such a partition and guesses how $D$ interacts with $X$. 

In particular, first the algorithm correctly guesses $D' = D \cap X$ (by looping over all subsets of $X$). For each guess, it puts $N(D')$ into $S$ and removes $D'$ and $S \cap X$ from $G$ (these vertices are already dominated and will not be used in the future to dominate even more vertices). For every remaining vertex $v$ in $X$, the algorithm guesses whether it will be dominated by a vertex in $V_1$, in which case the algorithm deletes all edges from $v$ to vertices in $V_2$, or by a vertex in $V_2$, in which case the algorithm deletes all edges from $v$ to vertices in $V_1$. Let $V_i'$ be $V_i$ plus all the vertices in $X \setminus S$ that we guessed were dominated from $V_i$. At this point $V_1'$ and $V_2'$ are distinct components of the instance and can be solved independently. The running time is governed by the following recurrence.
$$T(n) =n^{\cO(1)} \cdot  2^{\cO(\sqrt{n})} \cdot 2 \cdot T(2n/3) =  2^{\cO(\sqrt{n})}$$
The space used is clearly polynomial. This concludes the proof.
\end{proof}

\section{Kernelization algorithm for \tCDS}\label{sec:CDSkernel}
The kernelization algorithm for \tCDS \, is   similar to \tDS---we also  use slice decomposition to obtain a 
linear kernel.  However, the irrelevant vertex rule is a bit different. 
The  kernelization algorithm for \tCDS \, follows from the  results 
analogous to  Lemmas~\ref{lem:newperspectiveHminor} and \ref{lem:newperspectivequasi} for \tDS.  For completeness we spell out all the steps.

In particular given an instance $(G,k)$ of \tCDS \,  we first apply Lemma~\ref{lemma:approximation} and find  a dominating set $D$ of $G$. 
If $|D|>\eta(H)k$ we return that $(G,k)$ is a {\sc no} instance of \tCDS. Else, 
we apply Lemma~\ref{lem:slicedeco} and
\begin{itemize}
\item  either find  $(\delta_1k,\delta_2 k)$-slice decomposition; or 
\item a $2h$-{\sc CDS}-protrusion of size more than $\xi_{2h}$; or
\item a $h'$-protrusion of size more than $\xi_{h'}$ where $h'$ depends only on $h$.
\end{itemize}
In the second case we apply Lemma~\ref{lem:fiidomset}. For a given $X$, we apply Lemma~\ref{lem:fiidomset} and construct a boundaried graph $X'$ such that $|X'|\leq \xi_{2h}$ and $X\equiv_{\tCDS} X'$. We also compute the translation constant $c$ between $X$ 
and $X'$.  Now we replace the graph $X$ with $X'$ and obtain a new equivalent instance $(G',k+c)$, here we remind that $c$ is a  non-positive integer. In the third case we apply the protrusion replacement lemma of~\cite[Lemma~7]{H.Bodlaender:2009ng} to obtain a new equivalent instance $(G',k')$ for $k' \leq k$ with $|V(G')| < |V(G)|$. We repeat this process until Lemma~\ref{lem:slicedeco}  returns a slice decomposition. 
For simplicity we denote by $(G,k)$ itself the graph on which Lemma~\ref{lem:slicedeco} returns the slice decomposition. The number of times this process can be repeated  does not exceed  $n=|V(G)|$ and a  $(\delta_1k,\delta_2 k)$-slice decomposition for $(G,k)$ is constructed  in polynomial time.   

 The pairwise disjoint connected subtrees $\{M_1,\ldots,M_\alpha\}$  of $M$ coming 
from the slice decomposition of $G$ is denoted by  $\cal P$ and we put  $R_i^+= \Psi(M_i)$.
We define   $Q_i=\bigcup _{e\in {\cal E}(M_i)}\kappa(e)$, $B_i=(D \cap R_i^+)\cup Q_i$ and $b_i=|B_i|$.  
As in the previous section,  we   treat  $G_i:=G[R_{i}^+]$ as a  graph with boundary $B_i$.  
Then by Lemma~\ref{lem:smalldominatingset},  
$B_i$ is a dominating set for $G_i$. 

For  two kinds of graphs $G_i$, we use different reductions. 
 In the first case we have that the graph $G_i$ has at most $h'$ vertices of degree at least $h'$.
\begin{lemma}
\label{lem:newperspectivequasicds}
There exists a constant $\delta$ such that if  $G$  is a graph with boundary $S$  such that $S$ is a dominating set for $G$ and $G$ has at most $h'$ vertices of degree at least $h'$, then in polynomial time, we can obtain a graph $G'$ with boundary $S$ 
such that \[G' \equiv_{\tCDS} G \mbox{ and }  |V(G')| \leq \delta |S| .\] 
Furthermore we can also compute the translation constant $c$ of $G$ and $G'$ in polynomial time. 
\end{lemma}
  In the other  case we have that $G_i$ is $H^*$-minor-free for a graph $H^*$ whose size only depends on $h$.  \begin{lemma}
\label{lem:newperspectiveHminorcds}
There exists a constant $\delta$ such that  given an \Hmf \, graph $G$  with boundary $S$ 
such that $S$ is a dominating set for $G$, in polynomial time, we can obtain a graph $G'$ with boundary $S$ such that 
\[G' \equiv_{\tCDS} G \mbox{ and }  |V(G')| \leq \delta |S| .\] 
Furthermore we can also compute the translation constant $c$ of $G$ and $G'$ in polynomial time. 
\end{lemma}

In order to obtain the linear sized kernel for \tCDS \, the proof of  
 Lemmas~\ref{lem:newperspectivequasicds} and \ref{lem:newperspectiveHminorcds} suffices. 
 Indeed, for  graph $G$ we 
obtain the slice decomposition and check if any  $G_i$ has size more than $\delta b_i$. If yes then we either apply  
Lemma~\ref{lem:newperspectivequasicds} or Lemma~\ref{lem:newperspectiveHminorcds} based on the type of $G_i$ 
and obtain a graph $G_i'$ such that  $G_i' \equiv_{\tCDS} G_i \mbox{ and }  |V(G_i')| \leq \delta b_i$. We view 
$G=G_i\oplus G^{\star}$, where $G^{\star}=G\setminus (R_i^{+} \setminus B_i)$ as a $b_i$-boundaried graph with 
boundary $B_i$. Then we obtain  a smaller equivalent graph $G'=G^{\star}\oplus G_i' $ and $k'=k+c$.  After this we can 
repeat the whole process once again. This implies that when we can not apply Lemmas~\ref{lem:newperspectiveHminorcds} or  
\ref{lem:newperspectivequasicds}  on $(G,k)$ we have that each of 
$|V(G_i)|\leq \delta b_i$.  Furthermore notice that $\cup_{i=1}^{\alpha} R_i^{+}=V(G)$. This implies that  \begin{eqnarray*}
\sum_{i=1}^{\alpha} |R_{i}^+| & \leq & \delta \sum_{i=1}^{\alpha} b_i =  \delta (\sum_{i=1}^{\alpha} (|Q_i|+|(D \cap R_i^+)\setminus Q_i|)) \\
& = &
\delta (\sum_{i=1}^{\alpha} |Q_i| +  \sum_{i=1}^{\alpha} |(D \cap R_i^+)\setminus Q_i|) 
\leq \delta \delta_2 k + \delta \eta(H)k = \cO(k). 
\end{eqnarray*}

Thus (subject to the proof of two lemmas) we have the following theorem. 
\begin{theorem}
\label{thm:lineardomsettopocds}
\tCDS \, admits a linear kernel on graphs excluding a fixed graph $H$ as a topological minor. 
\end{theorem}
%

\subsection{Irrelevant Vertex Rule and proofs for  Lemmas~\ref{lem:newperspectivequasicds} and \ref{lem:newperspectiveHminorcds} }

As with \tDS, we will reduce  the treewidth of a torso not only in the beginning of the procedure but also when we apply it recursively. 
Let $G$ be an \Hmf \,  graph, $S$ be a dominating set of $G$ (not necessarily connected), $\tau(t)$ be one of its torsos, and $A$, $|A|\leq h$,  be the set of apices of $\tau(t)$, where $h$ is some constant depending only on $H$. We will define a reduction rule that essentially ``preserves" all dominating sets of size at most $3|S|+3$ with ``good enough'' connectivity properties, without introducing new such sets. Just as for \tDS{} we will say that a subset $A'$ of $A$ is feasible if the factor $\eta(H)$-approximation for \tDS{} (Lemma~\ref{lemma:approximation})  concludes that there exists a set $D$ of size 
at most $\eta(H)(3|S|+3)$ which   dominates $V(G) \setminus (A \cup S)$ and $D \cap A = A'$. 
If such a set exists and $A'$ is feasible we denote this set by $D(A')$.


Recall, that for \tDS \, we had the notion of a representative element for every subset $A'\subseteq A$. The representative vertex was crucially used in establishing  Lemma~\ref{lem:domseteqiv1}, where we used it to simulate all the domination properties of the deleted vertex ``$w$''.  We need a similar notion of representatives for \tCDS, however here the representatives will be vertex subsets rather than single vertices. With vertex subsets we will be able to simulate not only   domination properties, but also the connectivity properties of an  irrelevant vertex. More precisely, for every subset $A' \subseteq A$, we compute a minimum size vertex set $T \subseteq V(G)\setminus A$ such that $G[T]$ is connected and $A' \subseteq N[T]$. If the size of such a minimum set is at most $4h$, then we say that $T=T(A')$ is a {\em representative} of $A'$, and add all the vertices in $T$ to the set $R$. Note that $|R| \leq 4h \cdot 2^h$.
For each   $A'$ we can test whether a representative exists in time $2^{|A'|}n^{\cO(1)}=2^hn^{\cO(1)}$ by making a modification of the algorithm for the  Steiner tree problem from~\cite{BjorklundHKK07}. Alternatively we can test it in time $n^{4h+\cO(1)}$ by brute force. Let $S_{4h}$ denote the set of vertices in 
$N_{G\setminus A}^{4h}[S]=N_{G\setminus A}^{4h}[S\setminus A]$. Here $N_{G\setminus A}^{4h}[w]$ is the set of vertices at distance at most $4h$ from $w$ in the graph $G\setminus A$ (not in $G$). 
The set of vertices \emph{covered} by $A'$ is $W(A') = N[A']\setminus (A \cup S \cup  S_{4h})$. Note that a vertex in $N_{G\setminus A}^{4h}[S]$ is never covered by a set $A'$.  Let ${\sf CutVert}$ denote the set of vertices  $w$ in $G$ such that $G-\{w\}$ has more connected components than $G$. Observe that if $G$ will be connected  then ${\sf CutVert}$ is essentially the set of cut vertices. However, for a disconnected graph it is the union of cut vertices for each connected component. 

The definition of an irrelevant vertex with respect to $A$ is   different than for \tDS{}. A vertex
 \[ w\notin (S\cup S_{4h} \cup R \cup {\sf CutVert})\]
  is called {\em irrelevant with respect to $A'$}, if $N_{G\setminus A}^{4h}[w] \subseteq W(A')$.  The irrelevant vertex rule for \tCDS \, is exactly the same as in Section~\ref{sec:domset_kernel} for \tDS \, but the correctness proof and analysis is more complicated. Recall that a subset $A'$ of $A$ is feasible if the factor $\eta(H)$-approximation for \tDS{} (Lemma~\ref{lemma:approximation})  concludes that there exists a set $D$ of size 
at most $\eta(H)(3|S|+3)$ which dominates all but $S$, such that $S \cap A = A'$.

\begin{description}
\item[Irrelevant Vertex Rule:] If a vertex $w$ is irrelevant with respect to every feasible $A' \subseteq A$ then delete $w$ from $G$. 
\end{description}

 \begin{lemma}
 \label{lem:condomseteqiv1}
Let $S$ be a dominating set in $G$, and $G'$ be the graph obtained by applying the Irrelevant Vertex Rule on $G$, where $w$ was the deleted vertex.  Then $G' \equiv_{\tCDS} G$. 
\end{lemma}
 \begin{proof}
 We view $G$ and $G'$ as graphs with boundary $S$. Let the transposition constant be $0$. To show that $G' \equiv_{\tCDS} G$, we show that given any boundaried graph $G_1$ and a positive 
 integer $\ell$ we have that $(G\oplus G_1,\ell)\in \tCDS\, \Leftrightarrow (G'\oplus G_1,\ell)\in \tCDS\,$.  Let $Z \subset V(G\oplus G_1)$  be a connected dominating set for $G\oplus G_1$ of size at most $\ell$.  Observe that since $S$ is a dominating set of $G$, we have that there exists a connected dominating set $S\subseteq S^*$  such that $|S^*|\leq 3|S|$ (Proposition~\ref{lem:bb}). Let $Z_1=V(G)\cap Z$. If $|Z_1|>3|S|$ then 
  $(Z\setminus Z_1)\cup S^*$ is a smaller connected dominating set for $G\oplus G_1$. Thus, we assume that 
  $|Z_1|\leq 3|S|$.  Let $A' = Z_1 \cap A$, and observe that $A'$ is feasible since $Z_1$ dominates all but $S$ and has size at  most $3|S|$. If $w \notin Z$, then $Z'=Z$ is a connected dominating set of size $\ell$ for $G'\oplus G_1$.  So assume $w\in Z$. Since $w$ is irrelevant with respect to $A'$ we have that $N_{G\setminus A}^{4h}[w] \subseteq W(A')$.


Let $Q$ be the connected component of $G\oplus G_1$ that contains $w$.  Since, $w$ is not a cut vertex of $G$, we have the following easy observation. 

\begin{observation}
\label{claim:wisokay}
$Q\setminus \{w\}$ is connected. 
\end{observation}

Let $Z_Q=Z \cap Q$ be the connected dominating set of $Q$, $|Z_Q|=p$. 
We will show that $Q\setminus \{w\}$ has a connected dominating set of size at most $p$ and that will show that 
$(G'\oplus G_1,\ell)\in \tCDS$.   Observe that since $w\in W(A')$ and the only vertices that are common between $G$ and $G_1$ belong to $S$, we have that 
$N_{(G\oplus G_1)\setminus A}^{4h}[w]=N_{G\setminus A}^{4h}[w]\subseteq V(G)\setminus S_{3h}$. 

Let $X$ be the vertex set of the connected component of $G\oplus G_1[Z_Q \cap N_{G\setminus A}^{4h}[w]]$ that contains $w$. 
If $|X| < 4h$ then there is a subset $X' = T(N(X) \cap A)$ such that $X' \subset R$, $|X'| \leq |X|$, $G[X']$ is connected and 
$N_G(X') \cap A \supseteq N_G(X) \cap A$.  Furthermore, since $|X| < 4h$ we have that every connected component of $G\oplus G_1[Z_Q\setminus X]$ contains a vertex of $A'$. This implies that  $Z_Q' = (Z_Q \setminus X) \cup X'$ is connected. Since $X \subseteq W(A')$,  and $|X| < 4h$ we have that $N_{G\oplus G_1}(X)=N_{G}(X)$. This implies that 
$N_G(X) \subseteq N_{G}(X'\cup A') \subseteq N_{G\oplus G_1}(X'\cup A') $ and thus $Z_Q'$ is a connected dominating set of size at most $p$ of $Q$ that avoids $w$ and thus by Observation~\ref{claim:wisokay}, it  is also a connected dominating set of $Q\setminus \{w\}$. This implies that in this case $(G'\oplus G_1,\ell)\in \tCDS$.

Now suppose that $|X| \geq 4h$. Let $A^* = N_G(X) \cap A$. The vertex set $A^*$ is a dominating set of size at most $h$ in the connected graph $G[A^* \cup X]$ and so $G[A^* \cup X]$ has a connected dominating set $X^*$ that contains $A^*$ of size at most $3h$. Let $P$ be the connected component of $G[X^*] \setminus A$ that contains $w$. Notice that $|P| \leq 2h$ and so there is a connected set $P' \subseteq R$ such that $|P'|\leq |P|$ and $N(P) \cap A \subseteq N(P') \cap A$. Finally, let $Y$ be the set of vertices in $X$ that are at distance exactly $4h$ from $w$ in $G \setminus A$. Note that $|X \setminus Y| \geq 4h-1$ (as every path from $w$ to  a vertex in $Y$ has length at least $4h-1$) and that $N_G[Y] \cap A \subseteq A^*$. Set 
$X' = (X^* \setminus P) \cup P'$, and $Z_{Q}' = (Z_{Q} \setminus (X \setminus Y)) \cup X'$. We have that $|X'| \leq |X^*| \leq 3h$ while $|X \setminus Y| \geq 4h-1 \geq 3h$. Hence $|Z_Q'| \leq |Z_Q|$.  Note that $G[X']$ is connected. Furthermore by our choice of 
$(X\setminus Y)$ we have that every connected component of $G\oplus G_1[Z_Q\setminus X]$ contains a vertex of 
$Y$ and hence a vertex of $A^*$. However, $A^*\subseteq X'$ and $G[X']$ (or $G\oplus {G_1}[X']$)  is connected and thus  
$G\oplus G_1[Z_Q']$ is connected.  Observe that $N_{G\oplus G_1}(X\setminus Y)=N_{G}(X\setminus Y)$. This implies that 
$N_G(X\setminus Y) \subseteq N_{G}(X'\cup A^*) \subseteq N_{G\oplus G_1}(X'\cup A^*) $ and thus $Z_Q'$ is a connected dominating set of size at most $p$ of $Q$ that avoids $w$ and thus by Observation~\ref{claim:wisokay} is also a connected dominating set of $Q\setminus \{w\}$. This implies that in this case $(G'\oplus G_1,\ell)\in \tCDS$.




Now we prove the reverse direction. Let  $Z' \subset V(G' \oplus G_1)$  be a connected dominating set for $G'\oplus G_1$ of size at most $\ell$.  By Observation~\ref{claim:wisokay} we know that $Q\setminus \{w\}$ and $Q$ are connected and thus $Z'$ is also a connected dominating set of size at most $\ell$ for $G\oplus G_1$. This concludes the proof.  
 \end{proof}

 Next we prove an auxiliary lemma that upper bounds the number of cut vertices in terms of the dominating set of the graph. 
 
 \paragraph{Cuts and Blocks.} 
 A maximal connected subgraph without a cut vertex is called a \textbf{block}. Every block of a graph $G$ is either a maximal $2$-connected subgraph, or a bridge or an isolated vertex. By maximality, different blocks of $G$ overlap in at most one vertex, which is then a cut vertex of $G$. Therefore, every edge of $G$ lies in a unique block and $G$ is the union of its blocks.

\begin{definition} Let $A$ denote the set of cut vertices of $G$ and $B$ the set of its blocks. The bipartite graph on $A\cup B$ where $a\in A$ and $b\in B$ are adjacent when $a\in b$ is called the block graph of $G$.
\end{definition}

\begin{proposition}[\cite{diestelbook}] 
\label{prop:blockgraphtree}
The block graph of a connected graph is a tree.
\end{proposition}

 \begin{lemma}
 \label{lem:cutvertexcount}
 Let $G$ be a graph and $S$ be a dominating set of $G$, then the number of cut vertices in $G$ is upper bounded by $|S|$. That is, $|{\sf CutVert}|\leq |S|$. 
 \end{lemma} 
\begin{proof}
Let $A={\sf CutVert}$ denote the set of cut vertices of $G$ and $B$ the set of its blocks. Consider the block graph $\cal B$ on $A\cup B$. By Proposition~\ref{prop:blockgraphtree} we know that $\cal B$ is a tree. Now we root this tree at some vertex in $B$. Observe that there is unique association of cut vertices to its parent -- which is a block.  
We also know that for every  cut vertex $v$ that either $v$ is in $S$ or a vertex in its parent  block. However, the blocks are pairwise disjoint except for the vertices in $A$. Thus, this implies that there is an injective map from $A$ to $S$ and hence $|{\sf CutVert}|\leq |S|$. 
\end{proof}

Now we are ready to prove the treewidth bounding lemma of this section. 
 Just as for \tDS, it is possible to prove that after removing all irrelevant vertices, the treewidth of each torso in the reduced graph is $\cO(\sqrt{|S|})$. The most important difference is that instead of $2$-dominating set we construct a $8h$-dominating set in the proof.   We start with the following auxiliary lemma that will be useful for the proof.

\begin{lemma}
\label{lem:sumreductioncondomset}
There is a polynomial time algorithm that for  a given graph $G$ and a dominating set $S$ of $G$, outputs graph $G'$ such that $G' \equiv_{\tCDS} G $ and for every torso $\tau(t)$ of the tree-decomposition $(M,\Psi)$ of $G$, we have that $\tau(t)\setminus Z_t$ has a $8h$-dominating set of size $\cO(|S|)$. Furthermore if $G$ is a \Hmf \, graph then $\tw(G)=\cO(\sqrt{|S|})$. 
\end{lemma}
\begin{proof}
Let $\tau(t)^*= \tau(t) \setminus A$, where $A$ are the apices of $\tau(t)$. Also, let ${\sf CutVert}$ denote the set of cut vertices of $G$. 
We will obtain a $(4h+1)$-dominating set of size $\cO(|S|)$ in $\tau(t)^*$. Towards this end, consider the following set, 
$$Q=\bigcup_{A' \subseteq A, A' \text{is feasible} }D(A')\cup R\cup (S \setminus A) \cup {\sf CutVert} .$$
The size of the set of representative vertices, $R$, is at most $4h \cdot 2^{|A|}\leq 4h \cdot 2^h$. The number of feasible subsets $A'$ is at most $2^h$, where $h$ is a constant depending only on $H$. The size of $D(A')$ is at most $\eta(H)(3|S|+3)$ for every $A'$. By Lemma~\ref{lem:cutvertexcount} we have that $|{\sf CutVert}|\leq |S|$. Thus  $|Q|\leq 2^h (\eta(H)(3|S|+3)) + 4h \cdot 2^h + 2 |S|=\cO(|S|)$. We prove that $Q$ is a $(4h+1)$-dominating set of $V(G) \setminus A$. Let $w \in V(G) \setminus A$. If $w \in R$ or $w \in S$ or $w\in {\sf CutVert}$ then $Q$ dominates $S$. So suppose $w \notin R \cup S \cup {\sf CutVert}$. Then, since $w$ is not irrelevant there is a feasible subset $A'$ of $A$ such that $w$ is relevant with respect to $A'$. Hence there exists a vertex $w'$ in $N_{G\setminus A}^{4h}[w]$ which is not in $W(A')$. If $w' \in S_{4h}$, $S_{4h}$ denotes the set of vertices in $N_{G\setminus A}^{4h}[S]=N_{G\setminus A}^{4h}[S\setminus A]$, 
 then $w$ is $8h$-dominated by a vertex  $w^* \in (S\setminus A) \subseteq Q$ in $G \setminus A$. Otherwise $w'$ is dominated by some $w''$ in $D(A')$ and hence $w$ is $4h+1$-dominated by $w'' \in Q$ in $G \setminus A$. Hence $G \setminus A$ has a $8h$-dominating set of size $\cO(|S|)$.

The graph $\tau(t)^*$ can be obtained from $G \setminus A$ by contracting all edges in $E(G \setminus A) \setminus E(\tau(t)^*)$ and adding all edges in $E(\tau(t)^*) \setminus E(G \setminus A)$. Since contracting and adding edges can not increase the size of a minimum $8h$-dominating set of a graph, $\tau(t)^*$ has a $8h$-dominating set of size $\cO(|S|)$.
This completes the proof for the first part.

Now assume that $G$ is a \Hmf \, graph. It is well known  that the treewidth of a \Hmf \, graph is at most the maximum treewidth of its torsos, see e.g.\cite{DemaineFHT05sub}. Thus to show that $\tw(G)=\cO(\sqrt{|S|})$ it is sufficient to show that its torsos have small treewidth.  
To conclude, $\tau(t)^*$ excludes an apex graph as a minor  (see discussions after Theorem~\ref{thm:structure theorem}) and it has a $8h$-dominating set of size $\cO(|S|)$. By the bidimensionality of $8h$-dominating set, we have that $\tw(\tau(t)^*)=\cO(\sqrt{|S|})$~\cite{DemaineFHT05sub,FominGT09con}. Now we add all the apices of $A$ to all the bags of the tree-decomposition  of $\tau(t)^*$ to obtain a tree-decomposition for $\tau(t)'$.  Thus $\tw(\tau(t)')\leq \cO(\sqrt{|S|})+h =\cO(\sqrt{|S|})$. 

Let us also remark that Irrelevant Vertex Rule is based on the performance of a polynomial time approximation algorithm and thus the whole procedure can be implemented in polynomial time. This concludes the proof. 
\end{proof}



Having Lemma \ref{lem:sumreductioncondomset} proving Lemma~\ref{lem:newperspectivequasicds} becomes simple.

\begin{proof}[Proof of Lemma~\ref{lem:newperspectivequasicds}]
We apply Lemma~\ref{lem:sumreductioncondomset} to $G$ with a decomposition that has a single bag containing the entire graph and the apices $A$ of the bag being the vertices of degree at least $h'$. By Lemma~\ref{lem:sumreductioncondomset}, $G \setminus A$ has a $8h$-dominating set of size $\delta_3|S|$. Since all vertices of $G \setminus A$ have degree at most $h'$ it follows that $|V(G)| \leq h' + h'^{\cO(h')}\delta_3 |S| 
\leq \delta|S|$.  
\end{proof}


Proof for Lemma~\ref{lem:newperspectiveHminorcds} is identical to the proof of Lemma~\ref{lem:newperspectiveHminor}, except that we need to use Lemma~\ref{lem:sumreductioncondomset} in  
place of Lemma~\ref{lem:sumreductiondomset}. Thus we omit it. 
 

Recently, Bodlaender et al.~\cite{BodlaenderCKN13} obtained an algorithm solving \tCDS \ on graphs of treewidth $t$ in time $c^{t}n^{\cO(1)}$. Theorem~\ref{thm:lineardomsettopocds} combined with this implies that  \tCDS \,  on \Hmf \, graphs is solvable in time $2^{\cO(\sqrt{k})}+ n^{\cO(1)}$. To our knowledge, this is the first subexponential parameterized algorithm for \tCDS \, on \Hmf \, graphs.


\begin{theorem}
\label{thm:subexpcondomset}
Given an $n$-vertex  graph $G$ excluding a fixed graph $H$ as a minor,  one  can check whether $G$ has a connected dominating set of size at most $k$ in time 
$2^{\cO(\sqrt{k})}+ n^{\cO(1)}$. 
\end{theorem}

\section{Conclusions}\label{sec:concludes}
In this paper we give linear kernels for two widely studied parameterized problems, namely \tDS\ and  \tCDS,
for every graph class that excludes some graph as a topological minor. The emerging questions are the following two:

\begin{enumerate}
\item Can our kernelization results for  \tDS\ and  \tCDS\  be extended to more general sparse graph classes?
\item Can our techniques be applied to more general families of parameterized problems?
\end{enumerate}

%

Very recently,  the first question was answered both positively and negatively by Drange et al. \cite{Drange2015}.  In particular, \tDS \ admits a vertex-linear kernel on graphs of bounded expansion and an almost vertex-linear kernel on nowhere-dense graphs. On the other hand  
  \tCDS  \ admits no polynomial kernel on graphs of bounded expansion unless \textsf{coNP} 
  $\subseteq$  \textsf{NP/poly}.  It is important to point out that methods used by Drange et al. \cite{Drange2015} is entirely different than ours. Their algorithm is completely combinatorial and do not rely on topological arguments. Our kernelization algorithm for \tCDS\ is still the best known. It would be interesting to see if the combinatorial methods developed in 
  Drange et al. \cite{Drange2015} could be used to design an explicit kernelization algorithm for \tCDS\ on graph classes excluding a fixed graph $H$ as a topological minor.

\paragraph{Acknoweldgements}
Thanks to Marek Cygan for sending us a copy of \cite{CyganGH12}. We sincerely thank all the reviewers for their insightful comments and suggestions. 

%
%

\end{document}